

\documentclass[11pt]{article}

\title{The Toom Interface Via Coupling}
\author{Nicholas Crawford, Wojciech De Roeck and Gady Kozma}

\usepackage{amsmath, amsfonts, amssymb, amsthm}

\usepackage{enumerate, color}
\usepackage[colorlinks=true, linkcolor=black, citecolor=black,urlcolor=blue,pdfborder={0 0 0},pdfborderstyle={}]{hyperref}
\usepackage{multido}
\usepackage{url}
\usepackage{mathrsfs}
\usepackage{mathtools}
\usepackage{verbatim}
\usepackage{setspace}
\usepackage{nicefrac,bbm,calrsfs,xspace}
\usepackage{stmaryrd}
\usepackage[clock]{ifsym} 

\usepackage{cleveref}
\usepackage{hyperref}

\crefname{theorem}{Theorem}{Theorems}
\crefname{lemma}{Lemma}{Lemmas}
\crefname{proposition}{Proposition}{Propositions}
\crefname{section}{\S}{\S\S}
\crefname{equation}{}{}

\usepackage[margin=3cm]{geometry}
\usepackage{tikz}

\newif\iffinal
\finalfalse 
\finaltrue  
\iffinal\else\usepackage[notref,notcite]{showkeys}\fi

\setlength{\topmargin}{0.3truein}
\setlength{\oddsidemargin}{0.325truein}
\setlength{\evensidemargin}{0.325truein}
\setlength{\textheight}{7.9truein}
\setlength{\textwidth}{5.85truein}

\newtheorem{theorem}{Theorem}[section]
\newtheorem{lemma}[theorem]{Lemma}
\newtheorem{proposition}[theorem]{Proposition}
\newtheorem{definition}[theorem]{Definition}

\newtheorem{claim}[theorem]{Claim}
\theoremstyle{remark}

\newtheorem{remark}[theorem]{Remark}

\newcommand{\R}{\mathbb{R}}
\newcommand{\N}{\mathbb{N}}

\newcommand{\E}{\mathbb{E}}
\newcommand{\Z}{\mathbb{Z}}
\newcommand{\D}{\mathbf{D}}
\newcommand{\Q}{\mathbb Q}
\renewcommand{\Pr}{\mathbb{P}}
\newcommand{\bbone}{\mathbbm{1}}

\newcommand{\Var}{\operatorname{Var}}

\newcommand{\Om}{\Omega}

\DeclareMathOperator{\Ber}{Ber}

\DeclareMathOperator{\maxl}{max\mathrm{l}}
\DeclareMathOperator{\minr}{minr}
\DeclareMathOperator{\mix}{mix}
\DeclareMathOperator{\Dens}{Dens}
\newcommand{\Bp}{{\Ber_p}}
\renewcommand{\d}[1]{\nabla_{#1}}

\newcommand{\eps}{\epsilon}

\newcommand{\mf}{\mathfrak}


 \DeclareMathOperator{\supp}{Supp}

\newcommand\otimesal{\mathop{\hbox{\raise 1.6 ex
  \hbox{$\scriptscriptstyle\mathrm{al}$}
\kern -0.92 em \hbox{$\otimes$}}}}
\newcommand\oplusal{\mathop{\hbox{\raise 1.6 ex
  \hbox{$\scriptscriptstyle\mathrm{al}$}
\kern -0.92 em \hbox{$\oplus$}}}}
\newcommand\Gammal{\hbox{\raise 1.7 ex
\hbox{$\scriptscriptstyle\mathrm{al}$}\kern -0.50 em $\Gamma$}}


\let\al=\alpha   

  \let\ga=\gamma 
 \let\la=\lambda \let\om=\omega 
\let\si=\sigma

   \let\Om=\Omega

\newcommand{\caB}{{\mathcal B}}

\newcommand{\caF}{{\mathcal F}}
\newcommand{\caG}{{\mathcal G}}

\newcommand{\KK}{{\mathcal K}}

\newcommand{\caL}{{\mathcal L}}

\newcommand{\caN}{{\mathcal N}}
\newcommand{\NN}{{\mathcal N}}

\newcommand{\caP}{{\mathcal P}}
\newcommand{\caQ}{{\mathcal Q}}



\newcommand{\bbE}{{\mathbb E}}

\newcommand{\bbP}{{\mathbb P}}

\newcommand{\bbZ}{{\mathbb Z}}

\newcommand{\opunit}{\text{1}\kern-0.22em\text{l}}






\newcommand{\str}{ |}


\newcommand{\e}{{\mathrm e}}

\renewcommand{\d}{{\mathrm d}}

\newcommand{\beq}{ \begin{equation} }
\newcommand{\beqs}{ \begin{equation*} }
\newcommand{\eeqs}{ \end{equation*} }
\newcommand{\eeq}{ \end{equation} }
\newcommand{\bet}{ \begin{theorem} }
\newcommand{\eet}{ \end{theorem} }

\newcommand{\wdr}{\textcolor{blue}}

\begin{document}

\maketitle
\begin{abstract}
We consider a one dimensional interacting particle system which describes the effective interface dynamics of the two dimensional Toom model at low noise.  We prove a number of basic properties of this model.  First we consider the dynamics on a finite interval $[1, N)$ and bound the mixing time from above by $2N$.  Then we consider the model defined on the integers.  Because the interaction range of the rates and the jump sizes can be arbitrarily large, this is a non-Feller process.  As such,  we can define the process starting from product Bernoulli measures with density $p \in (0, 1)$, but not from arbitrary measures.
We show that the only possible invariant measures are those product Bernoulli measures, under a modest technical condition. We further show that the unique stationary measure on $[0, \infty)$ converges to i.i.d.\ Bernoulli variables when viewed far from 0.
\end{abstract}

\section{Introduction}\label{introduction}
In this paper, we consider an interesting interacting particle system originally introduced in \cite{DLSS} to describe the effective dynamics of the interface between two phases in Toom's\footnote{Pronounce Toom with a long o, not with the English pronunciation of oo} Model (also known as the North-East, or North-East-Center, model) in the limit of weak noise.  We recall here (see \cite{Toomy} for more details) that Toom's model is a discrete time probabilistic cellular automaton on $\Z^2$ in which the spin configurations $\sigma_t \in \{-1,1\}^{\Z^2}$ are updated in parrallel according to the rule 
\[
\sigma_{t+1}(i, j)=
\begin{cases}
\textrm{sign}\left(\sigma_{t}(i, j+1)+\sigma_{t}(i+1, j)+ \sigma_{t}(i, j)\right)  \text{ with probability $1-p-q$}\\
+1 \text{ with probability $p$}\\
-1   \text{ with probability $q$}.
\end{cases}
\]
The parameters $p, q$ represent noise in the update scheme.  It is remarkable, and important for what follows, that, for $p$ and $q$ small enough, the system has two stationary states, one with mostly $+1$'s and the other with mostly $-1$'s.

One may impose an interface between these two phases by setting the model up in the third quadrant of $\Z^2$ and fixing boundary conditions for $\sigma_{t}(j, 0)=+1$ and $\sigma_{t}(0, j)=-1$ for all $j<0$ and for all $t$.  

If $p=q=0$, all ``up-right'' paths from $(-\infty, -\infty)$ to $(0, 0)$ define stable configurations (with $+$ above and $-$ below the path) for the deterministic dynamics.  One may then ask how these interfaces fluctuate for $p, q \neq 0$ but small.
 Heuristically, one may expect that flips off the line separating the $+$ and $-$ regime die out quickly, and the dynamics of the line is governed by flips on it. For example, a spin-flip at a vertex immediate to the left of long vertical segment of the interface will generally cause further spin-flips at vertices adjacent to the segment of interface below. The net effect on the interface is to shift the part of the segment below that point 1 unit to the left. Encoding vertical edges of the line by $+1$ and horizontal by $-1$, the authors of \cite{DLSS} arrive at the following effective description of the dynamics at weak noise.  
First of all, the up-right paths are encoded by spin configurations $\sigma:=(\sigma({x}))_{x \in \N} \in \{-1, 1\}^\N$ (1 corresponds to a vertical segment of the interface and $-1$ to a horizontal segment).  Second, the dynamics on up-right paths is described by a continuous time Markov chain.  Each $1$ particle is equipped with an exponential rate $\lambda_{+}$ clock, and a $-1$ particle with a rate $\lambda_{-}$ clock. We will assume throughout that $\la_+,\la_->0$ and that $\la_++\la_-=1$, the latter condition simply fixes of unit of time. When the clock rings for a particle of fixed sign, the particle exchanges positions with the first particle to its right of opposite sign.  \emph{From now on we will refer to these dynamics as the \textit{Toom Interface}. We will not return to the two dimensional dynamics}. Here and below, we will denote this process by $\sigma_t:= (\sigma_{t}(x))_{x \in \N}$.


One interesting feature of this model is that its restriction to the first $L$ vertices $\{\pm 1\}^{\llbracket 1,L\rrbracket}$ is itself a Markov chain; the dynamics is the same unless a clock rings for a spin in the last block of constant sign in $\llbracket 1,L\rrbracket$.  For updates of spins in the last block, the dynamics reduces to single vertex spin flips. In the language of \cite{DLSS}, there are no finite-size effects.  It is easy to check that the restricted chain is irreducible on $\{\pm 1\}^{\llbracket 1,L\rrbracket}$ and hence has a unique stationary measure $\mu_L$.  The sequence of measures $(\mu_{L})_{L\in \N}$ is consistent, and this in turn implies that the full chain has a unique invariant measure on $\{\pm 1\}^\N$, $\mu_{\infty}$, which restricts to $\mu_L$ on $\{\pm 1\}^{\llbracket 1,L\rrbracket}$.

Very little is understood rigorously regarding the behavior of either $\mu_{\infty}$ or the process $\sigma_t$, though the papers \cite{BFLS,DLSS,DS} contain a number of interesting conjectures, heuristics and numerics.  The first paper on the subject, \cite{DLSS}, studied the Markov chain defined above as a model describing fluctuations via kinetic roughening, the height function $h_x(\si_t)$ being defined by $h_x(\si_t)=\sum_{i=1}^x \si_t(i)$. The striking observation there is that if $\lambda_+= 1/2$,  the statistical properties of the model cannot be in the class governed by the conventional KPZ equation: in this case the process $h_x(\si_t)$ is distributionally invariant under global spin flip. 

One way to understand this at a heuristic level is to follow the work of Kardar, Parisi and Zhang and guess the behavior of $h(\si_t)$  in the appropriate scaling limit.  The process should satisfy the SPDE
\[
\partial_t h= \kappa \Delta h  + W(t, x) +a (\nabla h)^2+ b (\nabla h)^3  \dotsc ,
\]
where $W$ is a space-time white noise and the last set of terms make explicit the possible dependence on the gradient of $h$.
If $\lambda_{+} \neq \lambda_{-}$, one concludes that only the quadratic term is relevant using scaling theory \cite{krugspohn}.  However, if $\lambda_+=\frac 12$, $h$ and $-h$ are identically distributed, which forces $a=0$ in any putative scaling limit. The extent to which the third order term is relevant is an intriguing open question.  It is marginal in the renormalization group sense, and as such \cite{DLSS, DS} argue against its  appearance for the scaling limits of microscopic models.  The situation here is analogous to the expected relationship between the scaling limit of the Ising model in $4$ dimensions and the putative $\phi^4_4$ field theory.

The simplest manifestation of the above discussion appears in the study of the variance, under $\mu_{\infty}$, of the sum of the first $L$ spins.  Numerics, Renormalization group calculations and heuristics \cite{DLSS,PBMM92} suggest that
\[
\textrm{Var}_{\pi_{\infty}}\Big(\sum_{x=1}^L \si_x\Big) \sim
\begin{cases}
L^{2/3} \text{ if $\lambda_{+} \neq \frac12$},\\
L^{1/2}\log^{1/4} L \text{ if $\lambda_{+} = \frac12$.}
\end{cases}
\]
It might help the reader to consider what this implies about the correlations in this model. If $\Var \ll L$ then the model must exhibit strong negative correlations to cancel the contribution to the variance coming from the term $\sum \si^2 =L$.

With this background in mind, our paper constitutes the first rigorous analysis of the Toom interface, though the results fall short of answering the most intriguing questions raised in \cite{DLSS}, e.g.\ the above conjecture on the variance. (The paper \cite{ToomTsetlin} analyzes a similar, but different model). Let us now present our main findings.
We recall that the total variation distance between two measures $\mu, \nu$ on a finite sample space $\Omega$ is defined as 
\[
\|\mu-\nu\|:=\frac 12 \sum_{\sigma \in \Omega} |\mu(\sigma)-\nu(\sigma)|
\]
Abusing notation slightly, we also use $\sigma_t$  to denote the restriction of the chain to $\{\pm 1\}^{\llbracket 1,L\rrbracket}$ and let $\sigma^{\xi}_{t}$ denote the distribution of $\sigma_t$ when starting from the initial configuration $\xi\in \{\pm 1\}^{\llbracket 1,L\rrbracket}$.   Recall that the mixing time of $\sigma_t$ is defined as
\[
\tau_{\mix}(L):= \inf\left\{ t: \max_{\xi} \| \sigma^{\xi}_{t}- \mu_L\| < \frac 12\right\}
\]
Our first result is as follows.
\begin{theorem}
\label{L:Mix}
For all $L \in \N$,
\[
\tau_{\mix}(L) \leq 2L
\]

\end{theorem}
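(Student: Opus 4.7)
The plan is to construct a grand coupling of two copies of the chain and follow the position of their leftmost disagreement. For each site $x\in[1,N]$ and each sign $s\in\{+,-\}$, introduce an independent Poisson clock $\caN_{x,s}$ of rate $\lambda_s$; when $\caN_{x,s}$ rings at time $t$, every coupled chain with spin $s$ at $x$ performs its standard update at $x$. Poisson thinning ensures each marginal has the prescribed law. For two coupled trajectories $(\sigma_t)$, $(\sigma'_t)$ started from $\xi,\eta\in\Omega[N]$, set
\[
\tau_t := \min\{\,x\in[1,N] : \sigma_t(x)\neq \sigma'_t(x)\,\},
\]
with $\tau_t := N+1$ when $\sigma_t=\sigma'_t$.

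The core step is to prove that $t\mapsto\tau_t$ is non-decreasing and that it strictly increases at rate at least $\lambda_++\lambda_-$, by analysing clock rings case by case on the firing site $j$. If $j>\tau_{t^-}$, the update only affects spins at sites $\geq j$ and $\tau$ is unchanged. If $j=\tau_{t^-}$, exactly one of the two chains carries the matching sign (they disagree at $\tau_{t^-}$), so only that chain updates; its swap or terminal flip replaces its spin at $\tau_{t^-}$ by the opposite sign, producing agreement at $\tau_{t^-}$ and a strict increase of $\tau$. If $j<\tau_{t^-}$, the chains agree at $j$ and update simultaneously. Using agreement on $[1,\tau_{t^-}-1]$, either both swap targets lie in $(j,\tau_{t^-})$ and coincide, so the two updates are identical and preserve agreement; or neither chain has an opposite-sign spin in $(j,\tau_{t^-})$, in which case the chain carrying the opposite-sign spin at $\tau_{t^-}$ swaps precisely with $\tau_{t^-}$, while the partner chain's swap (or terminal flip) touches only sites outside $\{\tau_{t^-}\}$. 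A direct inspection of the resulting configurations then shows that both chains carry the same spin at $\tau_{t^-}$ after the step. In every scenario $\tau_t$ stays put or grows, and the case $j=\tau_{t^-}$ alone contributes strict increase at rate $\lambda_++\lambda_-$.

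Since $\tau_t$ is integer-valued and must travel from some $\tau_0\geq 1$ to $N+1$, the coupling time $T:=\inf\{t : \sigma_t=\sigma'_t\}$ is stochastically dominated by a sum of $N$ independent $\mathrm{Exp}(\lambda_++\lambda_-)$ variables. Hence $\E[T]\leq N/(\lambda_++\lambda_-)$, and Markov's inequality yields $\Pr\bigl(T > 2N/(\lambda_++\lambda_-)\bigr)\leq 1/2$. Combining this with the standard coupling bound $\|\mu^\xi_{N,t}-\pi_N\|\leq \sup_\eta \Pr(\sigma^\xi_t\neq \sigma^\eta_t)\leq \Pr(T>t)$ and the normalization $\lambda_++\lambda_-=1$ gives $\tau_{\mathrm{mix}}(N)\leq 2N=2(\lambda_++\lambda_-)N$.

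The principal obstacle is the subcase of $j<\tau_{t^-}$ where one swap target reaches $\tau_{t^-}$: one must verify that even though the two chains swap with different right-neighbour sites once the search for an opposite spin crosses the first disagreement, the simultaneous updates always eliminate the disagreement at $\tau_{t^-}$ rather than spawning a new one to its left. The other ingredients — the monotonicity in the easy cases, the domination by a sum of exponentials, and the closing Markov/coupling step — are routine.
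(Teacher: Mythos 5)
Your proof is correct and it is essentially the same approach as the paper's: the identical grand coupling, domination of the coupling time by a sum of $N$ i.i.d.\ $\mathrm{Exp}(\lambda_++\lambda_-)$ random variables, and a closing application of Markov's inequality. The paper reaches the exponential sum more directly, by introducing the nested firing times $\tau_1<\cdots<\tau_N$ (with $\tau_j$ the first ring at site $j$ after $\tau_{j-1}$) and observing — using that the restriction of the process to $[1,j]$ is itself a Markov chain, so the update at $j$ depends only on sites $\geq j$ — that after $\tau_j$ the configuration on $[1,j]$ is the same for every initial condition; this sidesteps the case analysis you carry out on the leftmost disagreement, in particular the delicate subcase where a ring strictly to the left of the disagreement has its swap target cross it.
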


A potentially surprising property of the Toom interface model is that it can be defined on the whole of $\Z$.  In this case the Bernoulli i.i.d.\ measures are invariant to the dynamics. In other words, the phenomenon of unusually small variance seems to disappear (notwithstanding that one expects, as in ASEP, to recover small variances when studying certain dynamic observables such as the current across an appropriately chosen space-time characteristic).  We wish to understand this disparity better.  We will do it in two different directions.

The first direction is to study the behavior of $\mu_{\infty}$  in the bulk, far to the right of $0$. Is it Bernoulli? Note that on $\Z$ all i.i.d. product Bernoulli measures, with any density $p$, are invariant to the dynamics. On $\N$ far from the boundary, the prospective Bernoulli measure is fixed by the condition $\E_{\mu_\infty}[\si_x]=p$ (this being dependent on $\la_+$ and $\la_-$). 

Formally, let $\tau_x$ be the translation by $x$ i.e.\ for any spin configuration $\sigma$, with domain $D\subset \Z$, let $\tau_x\sigma$ denote the spin configuration with domain $D+x$ defined by $(\tau_x\sigma)(y)= \sigma(y-x)$. Denote the induced map on the space of probability measures by $\tau_x^*$. Studying the behavior of $\mu_\infty$ to the far right is thus studying $\lim_{k\to-\infty}\tau_k^* \mu_{\infty}$.
\begin{theorem}
\label{T:WeakN}
Consider $(\tau_{k}^*\mu_{\infty})_{k \in - \N}$ as a sequence of probability measures on $\{\pm 1\}^\Z$. Then this sequence converges weakly, as $k \rightarrow -\infty$, to the i.i.d.\ Bernoulli measure $\Bp$ with 
$$
\left(\frac{1-p}{p}\right)^2=\frac{\lambda_+}{\lambda_-}
$$
\end{theorem}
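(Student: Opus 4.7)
The plan is to extract subsequential weak limits by compactness, argue that every such limit is the prescribed product Bernoulli measure, and deduce weak convergence of the whole sequence. Since $\Om[\Z]=\{-1,1\}^\Z$ is compact in the product topology, $(\tau_{-k}^*\pi_\infty)_{k\in\N}$ is automatically tight, so it suffices to show every subsequential limit $\mu$ coincides with $\Bp$ for the $p$ in the statement.

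The first substantive step is to show that any such $\mu$ is invariant under the Toom interface dynamics on all of $\Z$. The point is that $\tau_{-k}^*\pi_\infty$ is, by construction, the unique stationary measure for the dynamics on the half-line $[1-k,\infty)$. For a local observable $f$ with support in a finite set $A\subset\Z$, the generator $\caL^{[1-k,\infty)}f$ agrees with the full-line generator $\caL^{\Z}f$ except through swap events initiated by clocks near or beyond the boundary $1-k$; such events require a monochromatic run reaching from far left into (or just past) $A$, whose probability decays exponentially in the distance from $A$ to $\{1-k\}$ once uniform-in-$k$ tail estimates on run lengths are in place. Together with the stationarity identity $\int \caL^{[1-k,\infty)} f\, d\tau_{-k}^*\pi_\infty=0$, this allows passage to the limit $\int \caL^{\Z} f\, d\mu = 0$, so $\mu$ is invariant for the $\Z$-dynamics. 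A clean coupling of the half-line and full-line processes (the paper's overarching theme) is the natural way to execute this.

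Once invariance is established, the classification of invariant measures announced just before \cref{T:WeakN} identifies $\mu$ as a product Bernoulli measure $\Bp$ (after one checks that the mild technical condition is inherited from $\pi_\infty$, which should amount to translation invariance of $\mu$ or a tail bound and follows from the construction). It remains to pin down $p$. For this I use the stationary current of $+$-spins across a bond $(z,z+1)$: a $+$-clock at $x\leq z$ moves a $+$ rightward across the bond iff $\sigma(x)=\sigma(x+1)=\cdots=\sigma(z)=+$, the count of such $x$ being the maximal $+$-block length $B_z$ ending at $z$ from the left, with $\E_{\Bp}[B_z]=p/(1-p)$. Symmetrically a $-$-clock at $x\leq z$ contributes a leftward $+$-current with the analogous block length, of expectation $(1-p)/p$. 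Stationarity forces the net current to vanish,
\[
\lambda_+\,\frac{p}{1-p} \;=\; \lambda_-\,\frac{1-p}{p},
\]
which is exactly $\left(\frac{1-p}{p}\right)^2=\frac{\lambda_+}{\lambda_-}$. This equation has a unique solution in $(0,1)$, so every subsequential limit equals the same $\Bp$, and by tightness the full sequence converges weakly to $\Bp$.

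The main obstacle is the transfer of invariance in the second step. Because the $\Z$-dynamics is non-Feller and the generator has arbitrarily long-range dependence through extremal monochromatic runs, moving the identity $\int \caL f\, d\tau_{-k}^*\pi_\infty=0$ under a weak limit is not formal; it requires uniform-in-$k$ tail bounds on monochromatic run lengths under $\tau_{-k}^*\pi_\infty$ and a careful truncation of the generator, organized most cleanly through an explicit coupling between half-line and full-line trajectories as suggested by the paper's title.
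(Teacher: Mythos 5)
Your proposal takes a genuinely different route from the paper. You propose to extract subsequential weak limits by compactness, show each such limit is invariant for the $\Z$-dynamics, invoke the classification \cref{T:Station}, and then pin down $p$ via a current balance. The paper instead works directly: it couples the half-line process started from $\pi_\infty$ with the full-line process started from $\Bp$ (with $p$ fixed in advance so that the boundary discrepancy current $K_1$ is centered), and proves that the density of discrepancies between the two coupled configurations decays to zero far to the right (\cref{L:D-Decay}); weak convergence then follows in a line or two. The paper's route never invokes \cref{T:Station} and, crucially, never needs any a priori integrability hypothesis on $\pi_\infty$.

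Your proposal has two genuine gaps. The more serious one is the step meant to pin down $p$: you assert that ``stationarity forces the net current to vanish''. This is false. For a translation-invariant stationary measure on $\Z$, stationarity only forces the current to be \emph{constant} in space, not zero; every $\mathrm{Ber}_q$ with $q\in(0,1)$ is stationary for the $\Z$-dynamics and carries a net $+$-current $\lambda_+\tfrac{q}{1-q}-\lambda_-\tfrac{1-q}{q}$, which is typically nonzero (compare the asymmetric exclusion process, whose Bernoulli invariant measures carry nonzero current). What actually singles out $p$ is the hard wall at $0$ in the half-line process: under $\pi_\infty$ the $+$-current across \emph{every} bond vanishes identically, and one would then need to argue that this zero-current property survives the weak limit. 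That transfer requires uniform-in-$k$ integrability of the run lengths under $\tau_{-k}^*\pi_\infty$, which brings us to the second gap: such a priori control of $\pi_\infty$'s monochromatic run lengths is unavailable. Yet it is also what you would need to pass the half-line stationarity identity for the generator through the weak limit, to verify the integrability hypothesis of \cref{T:Station}, and to rule out a nondegenerate mixture of Bernoullis (since \cref{T:Station} only yields a mixture). Establishing these bounds amounts to a large portion of what the theorem itself asserts, which is precisely why the paper side-steps them by coupling to $\Bp$, whose tails are explicit, and by studying the discrepancy density rather than the limit measure directly.
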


The second direction is to ask: are there any measures on $\Z$ invariant to the dynamics other than the i.i.d.\ Bernoulli measures? We show that none exist, under some conditions which promise that information does not flow too fast from $-\infty$. While we failed to construct an ``exotic'' (i.e.\ non-i.i.d.) invariant measure, we have no good reason to conjecture such an example does not exist, it seems a condition on the flow really is necessary. As the specific conditions we use are somewhat lengthy to state, we defer the statement of this result, \Cref{T:Station}, to the next section.


 
\subsection{Proof ideas}\label{S:sketches}

The main tool that we employ is a coupling. Let $\si^1$ and $\si^2$ be two starting configurations. We wish to construct a coupling of the Toom processes starting from $\si^i$ which makes them attempt to become similar with time. We perform this coupling as follows. We start with independent Poisson clocks (one for each vertex) each with rate 1. Suppose there is a Poisson arrival at time $t$ and at a site $x$. We examine $\sigma_t^1(x)$ and $\si_t^2(x)$. If $\sigma_t^1(x)=\sigma_t^2(x)$ we want the particles at $x$ to move together.  To obtain the proper particle clocks we have to reduce the rate, so we throw a coin with probability $\nicefrac 12$ and make them both walk if it succeeds (for this informal discussion we assume $\la_+=\la_-=\nicefrac 12$, the $\la_+\ne\la_-$ case is similar). If $\si_t^1(x)\ne\si_t^2(x)$ then we again throw a coin with probability $\nicefrac 12$: if it succeeds we make $\si^1$ walk, and if it fails, we make $\si^2$ walk. It is easy to check that both $\si^i_t$ are Toom processes, so this is indeed a coupling.

Let us examine {\bf discrepancies} i.e.\ $x$ such that $\si^1(x)\ne\sigma^2(x)$. A Poisson arrival at $x$ will force $\si^1(x)=\si^2(x)$ after it, but a discrepancy might form somewhere to the right of $x$, call this site $y$. We say that the discrepancy at $x$ moved to $y$. Our discrepancy might also move because of a Poisson arrival before $x$, but the key point is that it in all cases it moves to the right. The important points regarding discrepancy dynamics are as follows: discrepancies are never created.  When they move, they only move to the right. They may annihilate each other, but only by collisions between opposite types; e.g. a ``$+$ discrepancy'' (a discrepancy where $\sigma^1(x)=1$ and $\sigma^2(x)=-1$) hits a ``$-$ discrepancy''.  Let us note here that this coupling is \textit{attractive}:  If $\si_0^1\leq \si^2_0$ pointwisely, then they remain so for all future time.

Let us sketch how the coupling gives our results.

\begin{proof}[Sketch of a proof of \Cref{L:Mix}]Recall that we want to show that the mixing time on a finite interval of length $L$ is bounded by $2L$. We couple two processes on this finite interval with arbitrary starting configurations and examine the discrepancies. They move right with speed bigger or equal to $\nicefrac 12$ and fall off the right edge. By time $2L$ they are all gone, and the configurations are the same. This is well-known to imply a mixing time bound.
\end{proof}

\begin{proof}[Sketch of a proof of \Cref{T:WeakN}]Recall that we wish to show that the Toom process on $\N$, examined at $x$, is approximately Bernoulli. We couple the half-line process to the full-line process with the correct $p$. In this cases the coupling may create discrepancies.  For, a  Poisson arrival at some non-positive $x$ which moves a particle to some $y\in \N$ does not have a half-line process counterpart (note that whether a discrepancy is created or not depends also on whether $\si^1(y)$ agrees with $\si^2(y)$ prior to the arrival). We wish to analyze the flow of discrepancies across a half-space deep in the bulk.  We therefore move to a version where the coupling is stationary too (we already have that both coupled processes are stationary, but the coupling is not necessarily so). This is done using a more-or-less standard limit process. We then examine the rate at which discrepancies flow past a point $x\in\N$ (denote it by $j_x$). We show that $j_x$ is a decreasing function of $x$, with $j_x-j_{x+1}$ being exactly the rate of annihilations at $x$. 
To prove the result, it is enough to show that $j_x$ tends to $0$ as $x\to\infty$.  Since the rate of annihilations must be small, the only way that $j_x$ cannot tend to $0$ is if there are, with positive density, long stretches (in space or in time) of discrepancies of a single sign. However discrepancies can only annihilate in pairs, so long stretches of discrepancies of the same sign correspond to periods of time in which the signed sum of discrepancies across $0$ is large.  Finally, we show that the latter cannot happen often enough to support a non-zero limit for $j_x$.
\end{proof}
\begin{proof}[Sketch of a proof of \Cref{T:Station}]The theorem will state that the only stationary measures on $\Z$ are Bernoulli. To show this, we start with a stationary measure $\mu$ and couple it to all Bernoulli processes at once (themselves coupled so that for every $p<q$ the Bernoulli-$q$ process is pointwise larger than the Bernoulli-$p$ process). To couple more than two Toom processes at once, just do as follows: once the site to move is selected, throw a random coin, it it falls on heads move all 1s, and if it falls on tails move all $-1$s. As in the previous proof sketch, we construct a version where the coupling itself is also stationary. We then show that there cannot be any annihilations in the coupling, as the flow of discrepancies is stationary, and annihilation would cause the set of discrepancies to  decrease with time (we need here that the flow is finite and space-bounded, which induces some conditions on our measure $\mu$). This means that, compared to any of the Bernoulli-$p$ measure coupled to it, it is either pointwise bigger than it everywhere, or pointwise smaller. There is, thus, a critical $p$ (possibly random) such that $\mu$ is Bernoulli-$p$ (perhaps except at one point). From here it is not difficult to conclude that $\mu$ is a mixture of Bernoulli measures.\end{proof}

Of the three, the most accessible is the proof of \Cref{L:Mix} appearing in \Cref{S:Apps}. 
\Cref{S:InvZ} is devoted to a proof of \Cref{T:Station} while \Cref{S:WeakN} gives a proof of \Cref{T:WeakN}.  Except for some notation set out at the beginning of \Cref{S:InvZ} these latter two sections may be read independently of one another.  \Cref{S:auxiliary} contains a number of lemmas used in both \Cref{S:InvZ} and \Cref{S:WeakN}, notably the existence of a stationary coupling. 

Let us also mention a second paper \cite{CKR2}.  In that paper, we prove various functional central limit theorems for additive functionals of local observables, local currents, tagged particles and the like.  Combining the results of that paper with the present paper, we are in fact able to derive the bound 
\[
\textrm{Var}_{\pi_{\infty}}\Big(\sum_{x=1}^L \si_x\Big) \lesssim L.
\]
Going beyond this bound probably requires a new idea beyond the technology developed here and in \cite{CKR2}.

\subsection*{Acknowledgements}
We thank Joel Lebowitz for motivating us to work on the problem and for telling us that the i.i.d.\ measure is invariant on the whole line. We thank Christian Maes for pointing out the relation with reference \cite{alexander}.  WDR acknowledges the support of the DFG (German Research Fund) and the Belgian Interuniversity Attraction Pole  P07/18 (Dygest). NC is supported by Israel Science Foundation grant number 915/12. GK is supported by Israel Science Foundation grant number 1369/15 and by the Jesselson Foundation.

\section{The Main Coupling and Dynamics on \texorpdfstring{$\{\pm1\}^\Z$}{Z}}
\label{S:coupling}

The heuristic given in the run-up to \Cref{T:WeakN} presupposes that the dynamics may actually be defined on $\Z$.  This is a nontrivial issue as
 the process does not have a finite interaction range -- arbitrarily distant parts of the configuration on the negative axis can influence the local jump rate -- and hence the standard Hille-Yosida construction, as outlined e.g.\ in \cite{LiggettBook}, is not applicable.  As far as we know, only a few non-Feller interacting particle systems  have been constructed, most of them relying on a monotonicity property that is missing here, see e.g.\ \cite{maesredig, Liggettnonfeller}. But beyond applicability of standard tools, there are serious issues of existence and uniqueness. The process is not defined starting from arbitrary starting conditions in any reasonable sense -- how would one go about defining it if the starting conditions are, say, $+$ on the entire negative line? Further, even if for a given starting configuration and collection of Poisson arrivals there exists a version of the process which is defined for all time, it is not clear that such a version is necessarily unique. For example, suppose the starting configuration is $\dotsb++--++--\dotsb$ and that there is a Poisson arrival at $-2n$ at time $1/n$, for all $n$ (with $-2n$ the first in the block of two signs). Then forcing the spin at $-4n$ to jump 2 units and all those at $-4n+2$ to jump one unit is a legal solution, but so is its opposite. We have no example of an initial configuration where such non-uniqueness occurs with positive probability and constructing such an example should be interesting. \label{pg:nonuniqueness} To circumvent this we start by requiring uniqueness, which is encoded in \Cref{def:Tproc} below. But first some preliminaries.

Rather than thinking of $\lambda_{\pm}$-Poisson clocks as being attached to particles, we will consider a sequence of i.i.d.\ rate one Poisson point processes $(N_x(t))_{x\in \Z}$ associated with vertices $x\in \Z$.  Besides these Poisson point processes, we need a two dimensional array of of i.i.d.\ uniform $[0,1]$ variables $(U_{x, j})_{x\in \Z, j\in \N}$. Let $\Omega$ be a probability space realizing all these variables. 

\label{pg:Omega}There are a number of  ways to realize a probability space supporting these variables.  Because we want to use time shifts to construct Toom interfaces, we shall specify one concrete setting.  The state space on which the Poisson processes are defined will be $\Om_1= D([0,\infty)\to \N^\Z)$, the space of c\'adl\'ag functions from $[0,\infty)$ to $\N^\Z$ where $\N^\Z$ is equipped with the usual product topology. We equip $\Om_1$ with the Skorokhod topology and sigma algebras and the probability measure is defined to be the product distribution such that for each $x$, the random variable $\pi_x(\om_1):=\om_1(t, x)$ is distributed as the aforementioned Poisson process $N_x(t)_{x\in \Z}$.   

Let $\Omega_2=[0,1]^{ \Z\times \N}$ with its usual product topology and sigma algebras.
Finally let  $(\varOmega, \mathbb P; \caB_{\varOmega})$ denote the probability space obtained by taking the Cartesian product of these two probability spaces.  Abusing notation, we will from now on denote the coordinate projections associated with $\Om_1$ (respectively $\Omega_2$) by $(N_x(t))_{x\in \Z}$, respectively  $(U_{j, x})_{ j\in \N, x\in \Z}$.

Finally, let $D= D([0,\infty)\to\{\pm 1\}^\Z)$ be the space of c\'adl\'ag functions from $[0,\infty)$ to $\{\pm 1\}^\Z$ where $\{\pm 1\}^\Z$ is equipped with the usual product topology. We equip $D$ with the Skorokhod topology.
\begin{definition}\label{def:Tproc}
A T-process is a pair $(\mu,F)$ where $\mu$ is a probability measure on $\{\pm 1\}^\Z$ and $F$ is a function from $\{\pm 1\}^\Z\times\varOmega$ to $D$ which is Borel measurable. $F$ need only be defined $\mu\times\mathbb{P}$-almost everywhere. We require $F(\eta,\omega)(0)=\eta$ and further that $\{F(\eta,\omega)(t):t\in[0,T]\}$ is measurable with respect to $\caF_T$ where $\caF_T$ is the natural time filtration on $\{\pm 1\}^\Z\times \Omega$ (defined formally below).

Alternatively, a T-process is a $D$-valued random variable $\sigma$ such that for some $(\mu,F)$ as above, $\Pr(\sigma\in E)=(\mu\times\mathbb P)(F^{-1}(E))$.

We denote $F_t(\eta,\omega)=F(\eta,\omega)(t)$ and similarly $\si_t$ is the random variable on $\{\pm 1\}^\Z$ given by $\si$ at time $t$.

\end{definition}

The natural time filtration $\caF_t$ are $\sigma$-algebras on $\{\pm 1\}^\Z\times\Omega$ defined by
\[
\caF_t = \sigma\left(\eta;\: N_x(s): s \leq t; \: U_{x, k}: k \leq N_x(t)\right)
\]
where $\eta$ is the first coordinate (the element of $\{\pm 1\}^\Z$). Note that we think about $U_{x,j}$ as associated with the $j^\textrm{th}$ jump of the Poisson process at $x$, which is reflected in the definition of $\caF_t$.

Nothing has yet been formalized regarding Toom processes in this definition, so we just called it a ``T-process'' in anticipation of the Toom model, which enters in the next definition. We remark that $\si$ clearly determines $\mu$, as $\si_0$ is distributed according to $\mu$, and  determines $F$ $\mu\times\mathbb P$-almost everywhere, which is enough, since $F$ is anyway defined only $\mu\times\mathbb P$-almost everywhere.

\begin{definition}\label{def:Toomproc}For $\sigma\in D$, $\omega\in\varOmega$, $t\in[0,\infty)$ and $x<y$ in $\Z$, we say that a Toom update happens for $\si$ at $(t,x,y)$ if the following occurs:
\begin{enumerate}
\item There is a Poisson arrival at $(t,x)$ i.e.\ $N(x,t)=N(x,t^-)+1$.
\item If $\sigma_x(t^-)=1$ then $U_{x,N_x(t)}$ is required to be less than $\la_+$, otherwise it is required to be bigger than $\la_+$.
\item $\sigma_x(t^-)=\sigma_{x+1}(t^-)=\dotsb=\sigma_{y-1}(t^-)=-\sigma_y(t^-)$.
\end{enumerate}
A Toom interface (which, abusing names, we often call a Toom process) on $\Z$ is a T-process $(\mu,F)$ such that $\mu\times\mathbb P$-almost surely, $F(\eta,\omega)$ has the following properties
\begin{enumerate}[a)]
\item $F_t(\eta,\omega)(x)$, considered as a function of $t$, has only finitely many jumps in any finite interval, for any $x$. 
\item If a Toom update happens for $F(\eta,\omega)$ at $(t,x,y)$, then there are spin flips at $x$ and $y$ at time $t$. Otherwise there are no jumps at time $t$.
\end{enumerate}
If $S$ is a finite or a semi-infinite interval, we define a Toom process on $S$ in the same manner, except that in Clause 2 we require $x\in S$.  Also, we shall say there is a Toom update at $x$ if $x$ is the left endpoint of a Toom update at $(t, x, y)$.
\end{definition}

We are now in a position to define our coupling, which is simply using the same $\omega\in \varOmega$ to run a number of different Toom processes. Formally,
\begin{definition}\label{def:coupling}Let $\{(\mu^i,F^i):i\in I\}$ be two or more Toom processes (not necessarily on the same subset of $\Z$). When we discuss a ``coupling of the $(\mu^i,F^i)$ started from $\mu$'' we mean the following: $\mu$ is assumed to be a measure on $\prod_{i\in I}\{\pm 1\}^\Z$ whose marginals are the $\mu^i$. The coupling is then the collection of the $D$-valued random variables $\si^i$ given by
\[
\sigma^i=F^i(\eta^i,\omega)\qquad\sigma^i:\Big(\prod_{i\in I}\{\pm 1\}^\Z\Big)\times\varOmega\to D.
\]

The coupling with independent starting positions is the object given when $\mu$ is taken to be $\prod \mu_i$.
\end{definition}

This coupling has a number of nice features.  First and foremost, it is \textit{attractive}.  To see what we mean by that, let us introduce the partial order on spin configurations $\sigma^1 \geq\sigma^2$ if $\sigma^1(x) \geq\sigma^2(x)$ for all $x\in S$. Let $(\mu_i, F_i)_{i=1}^2$ be a pair of Toom processes on $S$ coupled together and started from a measure $\mu$ satisfying $\mu(\si^1\geq \si^2=1)$. When we say that the coupling is attractive we mean that in this case
\[
\mathbb P(F_1(t)\geq F_2(t) \text{ for all $t \in \R_+$})=1.
\]
To see this one has to simply check the various cases.  For example, if $\sigma^1\geq \sigma_2$ and $\sigma^2(x)=+1$, a Toom update is required in both if $U_{x,N_x(t)}\leq \lambda_+$.  If $z_i$ is the first minus to the right of $x$ in $\sigma^i$, then $\sigma^1\geq \sigma_2$ implies $z_1\geq z_2$.  This then implies the ordering must be preserved by the update.  The remaining cases are left to the reader to check.

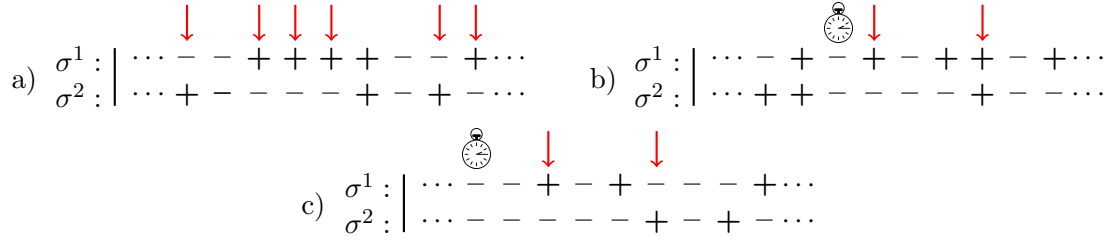
\begin{figure}
\centering
\begin{tikzpicture}[xscale=.48, yscale=.48]

\node at  (-8.5, .4)  {a)};

\node at (-7, 1) {$\sigma^1:$};
\node at (-7, 0) {$\sigma^2:$};
\draw[-][thick](-6,-.3)--(-6,1.3);

\node at (-5, 1) {$\cdots$};
\node at (-4, 1) {\textbf{--}};
\node at (-3,1) {\textbf{--}};
\node at (-2,1) {\textbf{+}}; 
\node at (-1,1) {\textbf{+}};
\node at (0,1) {\textbf{+}}; 
\node at (1,1) {\textbf{+}}; 
\node at (2,1) {\textbf{--}}; 
\node at (3,1) {\textbf{--}}; 
\node at (4,1) {\textbf{+}};
\node at (5, 1) {$\cdots$};

\node at (-4, 0) {\textbf{+}};
\node at (1,0) {\textbf{+}}; 
\node at (0,0) {\textbf{--}}; 
\node at (2,0) {\textbf{--}}; 
\node at (3,0) {\textbf{+}}; 
 \node at (-1,0) {\textbf{--}};
 \node at (-2,0) {\textbf{--}}; 
 \node at (-3,0) {\textbf{--}};
 \node at (-3,0) {\textbf{--}};
  \node at (4,0) {\textbf{--}};
 \node at (5, 0) {$\cdots$};
  \node at (-5, 0) {$\cdots$};

 \draw[->,thick, red] (-4, 2.5)--(-4, 1.5);
  \draw[->,thick, red] (-2, 2.5)--(-2, 1.5);
   \draw[->,thick, red] (0, 2.5)--(0, 1.5);
    \draw[->,thick, red] (3, 2.5)--(3, 1.5);
        \draw[->,thick, red] (-1, 2.5)--(-1, 1.5);
        \draw[->,thick, red] (4, 2.5)--(4, 1.5);
\end{tikzpicture}
\hfill
\begin{tikzpicture}[xscale=.48, yscale=.48]

\node at  (-8.5, .4)  {b)};

\node at (-7, 1) {$\sigma^1:$};
\node at (-7, 0) {$\sigma^2:$};
\draw[-][thick](-6,-.3)--(-6,1.3);

\node at (-5, 1) {$\cdots$};
\node at (-4, 1) {\textbf{--}};
\node at (-3,1) {\textbf{+}};
\node at (-2,1) {\textbf{--}}; 
\node at (-1,1) {\textbf{+}};
\node at (0,1) {\textbf{--}}; 
\node at (1,1) {\textbf{+}}; 
\node at (2,1) {\textbf{+}}; 
\node at (3,1) {\textbf{--}}; 
\node at (4,1) {\textbf{+}};
\node at (5, 1) {$\cdots$};

\node at (-5, 0) {$\cdots$};
\node at (-4, 0) {\textbf{+}};
\node at (-3,0) {\textbf{+}};
\node at (-2,0) {\textbf{--}}; 
\node at (-1,0) {\textbf{--}};
\node at (0,0) {\textbf{--}}; 
\node at (1,0) {\textbf{--}}; 
\node at (2,0) {\textbf{+}}; 
\node at (3,0) {\textbf{--}}; 
\node at (4,0) {\textbf{--}};
\node at (5, 0) {$\cdots$};

\node at (-2,2) {\Taschenuhr};
  \draw[->,thick, red] (-1, 2.5)--(-1, 1.5);
   \draw[->,thick, red] (2, 2.5)--(2, 1.5);
\end{tikzpicture}
\begin{tikzpicture}[xscale=.48, yscale=.48]

\node at  (-8.5, .4) {c)};

\node at (-7, 1) {$\sigma^1:$};
\node at (-7, 0) {$\sigma^2:$};
\draw[-][thick](-6,-.3)--(-6,1.3);

\node at (-5, 1) {$\cdots$};
\node at (-4, 1) {\textbf{--}};
\node at (-3,1) {\textbf{--}};
\node at (-2,1) {\textbf{+}}; 
\node at (-1,1) {\textbf{--}};
\node at (0,1) {\textbf{+}}; 
\node at (1,1) {\textbf{--}}; 
\node at (2,1) {\textbf{--}}; 
\node at (3,1) {\textbf{--}}; 
\node at (4,1) {\textbf{+}};
\node at (5, 1) {$\cdots$};

\node at (-5, 0) {$\cdots$};
\node at (-4, 0) {\textbf{--}};
\node at (-3,0) {\textbf{--}};
\node at (-2,0) {\textbf{--}}; 
\node at (-1,0) {\textbf{--}};
\node at (0,0) {\textbf{--}}; 
\node at (1,0) {\textbf{+}}; 
\node at (2,0) {\textbf{--}}; 
\node at (3,0) {\textbf{+}}; 
\node at (4,0) {\textbf{--}};
\node at (5, 0) {$\cdots$};

\node at (-4,2) {\Taschenuhr};
 \draw[->,thick, red] (1, 2.5)--(1, 1.5);
 \draw[->,thick, red] (-2, 2.5)--(-2, 1.5);
\end{tikzpicture}
\caption{a) Discrepancy locations; b) An update induces non-local discrepancy motion (the clock marks the point of update); c) An update induces non-local annihilation.}\label{DFig}
\end{figure}

Next, we want to highlight the quasi-particles of this coupling, which we call discrepancies.  These are the Toom interface analogs of second-class particles from the study of exclusion processes.
Let us define them formally now.  Given two spin configurations $\sigma^1, \sigma^2$, let 
\begin{align}
\label{eq:DisC}
&\D^\eta= \D^\eta_\sigma:= \{x \in \Z: \sigma^1(x) =\eta, \: \sigma^2(x)=-\eta\}\qquad\eta=\pm 1,\\[1mm]
&\D= \D_\sigma:= \D^{+}_\si\cup \D^{-}_\si = \{x \in \Z: \sigma^1(x) \neq \sigma^2(x)\}
\end{align}
If two Toom interfaces are coupled together, we can interpret the evolution of discrepancy locations as the evolution of a collection of particles.  If $\sigma^1_{t-}(x) \neq \sigma^2_{t-}(x)$ and there is a Toom update at $x$, then necessarily $\sigma^1_{t}(x) = \sigma^2_{t}(x)$.  In this case we view the discrepancy as having moved from $x$.  A discrepancy at a point $x$ can move due to a Toom update at $x$, or due to a Toom update at some vertex $z<x$. When it moves, there are two possible effects which may occur elsewhere in the configurations; either a new discrepancy may appear at a vertex $y$ or another discrepancy located at a position $y$ may disappear (see Figure \ref{DFig}) .  The important point is that in both cases $y>x$.  In other words discrepancies \emph{always move to the right!}  This observation is at the heart of everything we do in this paper.
Finally, let us note that there are multiple ways of viewing discrepancy motion --- either the discrepancy that was at $x$ ``jumps over" neighboring discrepancies to move to $y$ or the discrepancy at $x$ collides with a discrepancy to its right, at say $w$, taking its place at $w$ and causing the discrepancy that was at $w$ to move.  There is then a chain reaction of discrepancy collisions until a last discrepancy moves into $y$.

\subsection{Statement of Theorem \ref{T:Station}}\label{S:statestat}

\Cref{T:Station} is about stationary Toom processes, so we should start by defining those, but before we need to define the natural time shifts. Recall from page \pageref{pg:Omega} that $\Omega=\Omega_1\times\Omega_2$ and the definition of the $\Omega_i$.
For $\om\in \varOmega$, $\omega=(\omega_1, \om_2)$ with $\om_i\in \Om_i$, let $S_t\om=(\om_{1, x}(\cdot+t)-\om_{1, x}(t), \om_{2, y}(\cdot+N_t(y)))_{x\in \Z, y\in \Z}$.

On the space $\{\pm 1\}^\Z\times\varOmega$ with its usual product topology and sigma algebra $\caF\otimes \caB_{\varOmega}$, define the filtration of sigma algebras $(\caF_t)_{t \in \R^+}$  by

\begin{definition}\label{def:muFstat}
A T-process $(\mu,F)$ is stationary if 
\begin{enumerate}
\item $F$ preserves $\mu$, i.e.\ for any $t\in(0,\infty)$ and any $E\subset\{\pm 1\}^\Z$ Borel, 
\[
\int \bbone_E(F_t(\eta,\omega))\,d\mu(\eta)\,d\mathbb P(\omega)=\mu(E).
\]
\item $F$ forms a semi group i.e.\ $F_s(F_t(\eta,\omega),S_t\omega)=F_{s+t}(\eta,\omega)$ for all $t,s\in(0,\infty)$, $\mu\times\mathbb P$-almost everywhere. 
\end{enumerate}
It is called ``stationary on $S$'' for some interval $S\subset\Z$ if we only require clause 1 to hold for events $E$ which depend only on $S$, formally if $\eta\in E, \eta|_S=\eta'|_S\Rightarrow \eta'\in E$.
\end{definition}
We need one more technical condition. 

\begin{definition}\label{def:regular}
A T-process $(\mu,F)$ is called regular if for every $t$ one can write $F_t$ as the $\mu\times\mathbb P$-limit in measure of functions $F_t^L$ such that $F_t^L(\cdot,\omega)$ is continuous (as a function from $\{\pm 1\}^\Z$ to itself), for almost all $\omega$.
\end{definition}

We have no example of a Toom process which is not regular. Heuristically, constructing a non-regular example seems a similar challenge to constructing an example of a non-unique Toom process (recall the discussion on page \pageref{pg:nonuniqueness}). We will not do it here, but it is possible to formulate very mild conditions of ``no flow of information from infinity at finite time'' which would ensure that a process is regular. We are not very happy about this condition, but at least, as will be shown in \Cref{sec: construction n}, it is very easy to check in concrete cases. It will be used only once, in the proof of the next lemma.

\begin{lemma}\label{lem:coupling exists}
Let $(\mu^1,F^1)$ and $(\mu^2,F^2)$ be two stationary, regular Toom processes on $S^1$ and $S^2$ respectively. Let $\nu$ be any measure on $\{\pm 1\}^{S_1\cup S_2}$ which has $\mu^1$ and $\mu^2$ as its marginals. Let $\nu_t$ be the result of applying the coupling to $\nu$ for time $t$, i.e.
\[
\nu_t(E)=\int \bbone_E((F^1(\eta^1,\omega)(t),F^2(\eta^2,\omega)(t))\,
d\nu(\eta^1,\eta^2)\,d\mathbb P(\omega).
\]
Then any subsequential weak$^*$-limit of $\frac 1T\int_0^T\nu_t$ is a stationary coupling of $(\mu^1,F^1)$ and $(\mu^2,F^2)$.
\end{lemma}
(weak$^*$ convergence here is in the functional analytic sense --- what is sometimes called in probability weak convergence).
As this lemma is technical in nature we postpone its proof to \Cref{sec: invar}. The lemma will be used in  \Cref{S:InvZ,,S:WeakN}, in \Cref{S:InvZ} for $S_1=S_2=\Z$ and in \Cref{S:WeakN} with $S_1=\Z$ and $S_2=\N$. 

Finally, let ${{l_y}}(\sigma)$ and $ {{r_y}}(\sigma) $ denote the  cardinality of the maximal block of spins with the same sign to the left of $y$, starting from $y-1$ and respectively to the right of $y$, starting from $y+1$, in particular ${{l_y}}, r_y \geq 1$. 

\begin{theorem}
\label{T:Station}
Let $\si$ be a stationary, regular Toom process on $\Z$. 
If the  integrability condition 
\item  \[
\sup_{x \in \Z} \E[(l_x)^{1+\eps}(\si_0)]< \infty,
\] 
holds {for some $\epsilon>0$}, then $\si_0$ is distributed as a mixture of product Bernoulli measures.
\end{theorem}
By making additional assumptions on $\si$, e.g. spatial translation invariance, one may assume weaker moment conditions.  However to keep the presentation streamlined, we will stick with the above hypotheses on all measures encountered below.

\subsection{Construction of standard processes}  \label{sec: construction n}

First, let us construct a Toom process $(\mu,F^L)$ on a finite interval $S$. For concreteness, we choose $S=\llbracket 1,L\rrbracket=\{1,\dotsc,L\}$. We restrict  the measure $\mu$ to give full weight to configurations $\sigma_0=\eta$ having both infinitely many $+$'s and infinitely many $-$'s in $\N$. We define ${F}^L_t$ as follows. When there is a Poisson arrival at $x \in S$ at time $t$ and if $U_{x, N_x(t)}$ and $\sigma_x(t^-)$ satisfy Condition 2) of \eqref{def:Toomproc}, then look for the smallest $y>x$ such that $\sigma_{t_-}(y)=-\sigma_{t_-}(x)$ and flip both $\sigma(x)$ and $\sigma(y)$.
The condition that $\sigma_0$ has infinitely many $+$'s and $-$'s in $\N$ ensures that for the first Poisson arrival in $\llbracket 1,L\rrbracket$, we can find such a $y$, and this will remain true after finitely many jumps.
 Thus $ F^L$ is well-defined unless there are two Poisson arrivals at the same time or infinitely many Poisson arrivals in a finite interval of time. Since both have probability 0, our $ F^L$ is defined $\mathbb P$-almost everywhere, which, as already mentioned, is enough.  Our description of $(\mu, F^L)$ obviously matches the formal definition of a Toom process on $S$. The process is regular because it is in fact itself continuous for all $\omega$ for which it is defined.
  
 It is important to note that the configuration $\sigma(x), x>L$ plays no role, except for providing a reservoir of $\pm$ spins. In particular, if the restrictions of $\eta, \eta'$ to $I_L$ agree, $\eta(I_L)=\eta'(I_L)$, and both $\eta,\eta'$ have infinitely many $\pm$ in $\N$, then, almost surely for all $t$, 
 $$
 {F}^L_t(\eta,\omega)(x)= {F}^L_t(\eta',\omega)(x) \text{ for all $x\in \llbracket 1,L\rrbracket$}.
 $$
 We use this observation to define the function $\widetilde F^L_t(\cdot,\omega): \{\pm\}^{\llbracket 1,L\rrbracket} \to \{\pm\}^{\llbracket 1,L\rrbracket} $ so that it coincides with $ F^L(\eta, \omega)(\llbracket 1,L\rrbracket)$, almost surely. The only distinction between $\widetilde F^L_t$ and $ F^L_t$ is that in the former we omit spins, and spin flips,  at $x \notin \llbracket 1,L\rrbracket$.     By construction then, $\widetilde F^L_t$ is a Markov chain on $\{\pm\}^{\llbracket 1,L\rrbracket}$ and we now argue that it is irreducible.  To get from a configuration $\eta\in \{-1, +1\}^{\llbracket 1,L\rrbracket}$ to another, $\eta'$, first make the Poisson clocks of all $-$ sites in $\eta$ ring from left to right, getting to the all $+$ configuration. Then have all $-$ sites in $\eta'$ ring from left to right, getting to $\eta'$ Hence, it follows that $\widetilde F^L_t$ has a unique invariant measure $\mu^L$, which we will need below. Considering $\mu^L$ as a measure on $\{\pm 1\}^\Z$ (which ignores $x\notin\llbracket 1,L\rrbracket$) we get that $\mu^L$ is invariant to $F^L_t$ and is the unique such measure.
  
 Next we note that the $ F^L$ are consistent in the sense that for $x\in \llbracket 1,L\rrbracket$, $F^L_t(\eta,\omega)(x)=F^M_t(\eta,\omega)(x)$, for all $M>L$, all $t$, and almost all $\eta$ and $\omega$. 
Hence the limit (in $\{\pm 1\}^{\Z}$)
$$
F_t(\eta,\omega)=\lim_{L\to \infty}  F^L_t(\eta,\omega)
$$ 
exists almost surely.  These observations give us a regular Toom process $(\mu,F)$ on $\N$.

Next, we construct a stationary Toom process on $\N$.  The invariant measures $\mu^L$ constructed above, are consistent (due to uniqueness) and hence they define a measure $\mu^\infty$ on $\{\pm 1\}^{\N}$. To get a stationary Toom process we want to choose the starting measure $\mu$ in the above construction such that its restriction to $\N$ coincides with $\mu^\infty$. However, for this to be legitimate, we need to verify
\begin{lemma}$\mu^\infty$ gives zero measure to configurations with a tail of a unique sign.\end{lemma}
\begin{proof}Fix $L$ and let $M\gg N$. Examine the event in $\mu_M$ that $\si(L)=\si(L+1)=\dotsb=\si(M)$ and assume for concreteness that the common value is $+$. The process exits this state with rate at least $\lambda_+(M-L)$, as any Poisson arrival in $\{L,\dotsc,M\}$ with an appropriate $U$ will exit this state. However, it is not difficult to check that returning to this state requires at least one Poisson arrival in $\{1,\dotsc,L-1\}$ or in $M$. So its rate is bounded by $L$. We get that the probability of this event is bounded above by $L/(\lambda_+(M-L))$. Taking $M\to\infty$ shows that the probability of $\si(L)=\si({L+1})=\dotsb$ in $\mu_\infty$ is zero. As $L$ was arbitrary, the lemma is proved.\end{proof} 

It is easy to conclude from the stationarity of $(\mu^L,F^L)$ that $(\mu^\infty,F^\infty)$ is stationary. Furthermore,
the Toom process just described is regular because it is a limit of $F^L_t\to F_t$ also $\mu^\infty\times\mathbb P$ almost surely. 

We can immediately provide some payoff for the work done so far by proving \Cref{L:Mix}.
\begin{proof}[Proof of \Cref{L:Mix}]
Fix $\phi\in\{\pm 1\}^{\llbracket 1,L\rrbracket}$ to be arbitrary, and let $\psi\in\{\pm 1\}^{\llbracket 1,L\rrbracket}$ be distributed according to the stationary measure $\mu^L$. Start two Toom processes on $\llbracket 1,L\rrbracket$ from $\phi$ and $\psi$ and couple them as above (call the resulting processes $\sigma^\phi_t$ and $\sigma^\psi_t$ respectively). If we show that at some time $T$ that $\Pr(\sigma^\phi_T=\sigma^\psi_T)>\frac12$ then $T$ upper-bounds the mixing time by definition.

As explained in the introduction, a crucial property of our coupling is that it pushes discrepancies to the right.  Let us formalize this statement. Let $\tau_1$ be the first arrival on site $1$ and for all $j \geq 2$ let $\tau_j$ be the first arrival on site $j$ after $\tau_{j-1}$.  Because we are looking at the process with a wall to the left of site $1$, once $\tau_1$ occurs, the value of $\sigma^{\phi}_t(1)=\sigma^{\psi}(1)$ for all $t \ge \tau_1$.  By induction, the same is true for all $\{(j, \tau_j): j \leq L\}$.
The theorem is proved by observing that $ \tau_L$ is the time it takes for the $L$'th arrival of a Poisson point process which has rate $1$.
\end{proof}

Let us remark that there is also a coupling-less version of the argument. Indeed, examining only one process, after $\tau_1$ the value of $\sigma_t(1)$ is independent of the initial configuration, and similarly for all $\tau_j$. Thus $\tau_N$ is a \emph{forget time} and this is equivalent to the mixing time, see \cite{LW}.

The last result we wish to show here is that i.i.d.\ Bernoulli-$p$ processes have a corresponding $F$ which makes them into a stationary, regular Toom process on $\Z$. This is a known folk result, but it seems not to have appeared in writing so we put it here for completeness.

\begin{lemma}\label{lem:Ber exists}There is an $F:\{\pm 1\}^\Z\times\varOmega\to D$ such that for all $p\in(0,1)$ the couple $(\Ber_p,F)$ is a stationary, regular Toom process.\end{lemma}

Let us isolate the first step of the proof as a separate claim.
\begin{lemma}To prove  \Cref{lem:Ber exists} it is enough to construct such an $F$ which has the required properties only for $t<\epsilon$ for some fixed $\epsilon$.\end{lemma}
\begin{proof}
Exchange $\epsilon$ and $2\epsilon$, and call the input of the lemma $G$, i.e.\ $G:\{\pm 1\}^\Z\times\varOmega\to D$ and it has the required properties (i.e.\ from definitions \ref{def:Tproc}, \ref{def:Toomproc} and especially from \ref{def:muFstat} and \ref{def:regular})  only for $t<2\epsilon$ ($t+s<2\epsilon$ for the property that it forms a semigroup). We form $F$ by repeatedly applying $G$ i.e.
\begin{align*}
F_t(\eta,\omega)&=G_t(\eta,\omega)&t&\le\epsilon\\
F_t(\eta,\omega)&=F_{t-\eps}(F_\eps(\eta,\omega),S_\epsilon\omega)&t&>\epsilon
\end{align*}
where $S_\epsilon$ is the time shift on $\varOmega$, as in the previous section. Note that we are using here stationarity: $F_{t-\eps}$ is defined only $\mu\times\mathbb{P}$-almost everywhere, so the expression only makes sense because the couple $(F_\eps(\eta,\omega),S_\eps\omega)$ has $\mu\times\mathbb P$ as its law.

It is easy to check that $F$ preserves $\mu$ and that it is a Toom process on $\Z$. To check that $F$ forms a semigroup, assume first that $t<\epsilon$ and get
\begin{align*}
F_s(F_t(\eta,\omega),S_t\omega)&=
F_{s-\eps}(\underbrace{F_\eps(F_t(\eta,\omega),S_t\omega)},S_{\epsilon+t}\omega)\\
&=F_{s-\eps}(F_{t+\eps}(\eta,\omega),S_{t+\epsilon}\omega)\\
&=F_{s-\eps}(F_t(F_\eps(\eta,\omega),S_\epsilon\omega),S_{t+\epsilon}\omega)\\
&=F_{s+t-\eps}(F_\eps(\eta,\omega),S_\epsilon\omega)\\
&=F_{s+t}(\eta,\omega).
\end{align*}
The first equality follows from opening the outer $F$ by its inductive definition, the second is obtained by applying the semigroup property to the inner term (marked by a brace), which is allowed since $t+\epsilon<2\epsilon$. The third is reopening in the opposite order of $t$ and $\epsilon$. The fourth is by assuming the semigroup property has been proved inductively for $s-\epsilon$ and the fifth is again the definition of $F$. This shows the case $t<\epsilon$ by induction on $s$. Concluding the case of general $t$ is similar and we will skip it.

Finally we need to show that $F$ is regular. We show that by induction on $t$ so we will assume it has already been proved for $t$ and will demonstrate it up to $t+\eps$ (our assumption on $G$ is the induction base). 
In other words, our assumption is that for any $\delta>0$ one may find a function $F^L_{t}$  as in definition \ref{def:regular}, i.e., continuous in its first variable almost surely in its second variable, such that 
\[
\mu\times\mathbb P\Big(\Big\{(\eta,\omega):d(F_t(\eta,\omega),F_t^L(\eta,\omega))<\frac 12\delta\Big\}\Big)>1-\frac13\delta.
\]
Since $F^L_t(\cdot,\omega)$ is defined on a compact space, it has a modulus of continuity (which might depend on $\omega$). Take such a modulus which holds for all but $\frac13\delta$ probability, i.e.\ a $\gamma$ which satisfies
\[
\mathbb P\Big(\Big\{\omega:d(\eta,\eta')<\gamma\Rightarrow d(F^L_t(\eta,\omega),F^L_t(\eta',\omega))<\frac12\delta\Big\}\Big)>1-\frac 13\delta.
\]
Finally use the regularity of $G$ to pick a $G^M_\eps$ such that 
\[
\mu\times\mathbb P(\{(\eta,\omega):d(G^M_\eps(\eta,\omega),G_\eps(\eta,\omega))<\gamma\})>1-\frac13\delta.
\]
Thus $F^L_t(G^M_\eps(\eta,\omega),S_\epsilon\omega)$ is a $\delta$-approximation of $F_t(G_\eps(\eta,\omega),S_\eps\omega)$, which, by the definition of $F$, is the same as $F_{t+\epsilon}$. And of course, it is continuous for almost all $\omega$. 
This finishes the lemma.
\end{proof}
\begin{proof}[Proof of \Cref{lem:Ber exists}]
We will construct $F$ by taking the limit of finite systems with periodic boundary conditions. Let us define the system with periodic boundary conditions formally, even though there are no surprises.
We define $\rho=\rho_{t}^{(L)}$ to be the following process on $\{\pm1\}^{(-L,L]}$
given as a function of Poisson arrivals: suppose we have an arrival
at time $t$ and position $x$. Let $y$ be the cyclically first point
right of $x$ having opposite sign, i.e.\ $\rho_{t-}(y\mbox{ mod }2L)=-\rho_{t-}(x)$
and $\rho_{t-}(z\mbox{ mod }2L)=\rho_{t-}(x)$ for all $x<z<y$ (here
and below, $y$ mod $2L$ is the element of $(-L,L]$ congruent to
$y$ modulo $2L$). Now define 
\[
\rho_{t}(z)=\begin{cases}
\rho_{t-}(y) & z=x\\
\rho_{t-}(x) & z=y\\
\rho_{t-}(z) & \mbox{otherwise.}
\end{cases}
\]
(for completeness let us stipulate that if $\rho_{0}$ is the configuration
with all $+$ or all $-$ then $\rho_{t}=\rho_{0}$ for all $t$).
A simple check shows that for every $k\in\{0,\dotsc,2L\}$, the measure
which is uniform over configurations with exactly $k$ $+$ signs
is stationary. Hence so are the Bernoulli-$p$ measures. Denote by $F^L$ the map $\{\pm1\}^\Z\times\varOmega\to D$ which realizes this process on $(-L,L]$ and freezes the configuration outside the interval.

We will now show that $F^{2^k}_t(\eta,\omega)$ converges $\Bp\times\mathbb P$-almost surely for all $p$ as $k\to\infty$. This will construct $F$, show that it is regular and that it preserves $\Bp$. By the previous lemma, it is enough to show this claim only up to some small fixed time $\epsilon$. We will choose $\epsilon$ later.

Compare therefore $F^{L}(\eta,\omega)$ and $F^{2L}(\eta,\omega)$ restricted to some small spatial interval, say $[-K,K]$. The following is a sufficient  (if far from necessary) condition for them to be equal on $[-K,K]$: there is an $x\in (-L,-K)$ such that no particle passed over $x$ in either $F^{L}$ or in $F^{2L}$ in the time interval $[0, \eps]$. If we show such $x$ exist, the lemma will be proved. We call such $x$ \textit{regeneration points}. 

It will be convenient at this point to switch to discrete time. Let therefore $t_1<t_2<\dotsb$ be the Poisson arrivals in $(-L,L]$ arranged in increasing order, and denote by $x_i$ the position of the $i^\textrm{th}$ arrival. Clearly,
\[
\Pr(t_{4\eps L}<\eps)< 1-e^{-c\eps L}.
\]
From here on we call such inequalities ``with exponentially large probability'' (the $c$ in the exponent will be allowed to depend on $\la_+$ and $p$).

Examine the following parameter
\[
I_k:=\sum_{i=1-L}^{-K-1}r_i^2(\rho_{t_k})
\]
(recall that $r_i$ is the size of the block of spins to the right of $i$, set to $2L$ if $\rho$ has all $+$ or all $-$).
Since $\rho_t$ is Bernoulli, we see that $I_t\le CL$ with exponentially large probability. Denote
\[
\caG_k=\bbone\{I_\ell<CL\;\forall \ell\le k, t_k<\eps\}.
\]
and get that $\Pr(\caG_k)>1-e^{-c\eps L}$ for all $k<4\eps L$.

Next examine the sum of the lengths of the Toom updates in the interval $(-L,-K)$, i.e.
\[
A_k:=r_{x_k}(\rho_{t_k})\bbone\{x_k\in (-L,-K),\caG_k\}\qquad A:=\sum_{k=1}^{4\eps L}A_k.
\]
We now prove that, if $L\geq 2K$, 
\begin{equation}\label{eq: tail a}
\Pr(A>C_1\epsilon L)\le \frac{C}{\epsilon L}.
\end{equation}
To show \eqref{eq: tail a}, define
$$B_k:=A_k-\E[A_k|A_1, \dotsc, A_{k-1}]\qquad B:=\sum_{k=1}^{4\eps L}B_k.$$
On the one hand,
\begin{equation} \label{eq: b bound}
\E[B^2] \leq 4 \sum_k \E[A_k^2]  \leq  C\epsilon L
\end{equation}
On the other hand,
\begin{align*}
\E[A_k|A_1, \dotsc, A_{k-1}] &=
 \frac{C}{L-K-2} \E\left[I_{k}\bbone\{\caG_k\}\, |\,A_1, \dotsc, A_{k-1}\right]\le C
\end{align*}
where the equality follows since the \textit{location} of the $k^\textrm{th}$ arrival is independent of $A_1, \dotsc,\linebreak[0] A_{k-1}$; and the inequality follows since $I_k\bbone\{\caG_k\}<CL$ deterministically.
Summing this up to $4\eps L$ and estmating $B$ with \eqref{eq: b bound} and Markov's inequality gives \eqref{eq: tail a}.

We are interested in $A$ because it bounds the number of non-regeneration points (for $F^{L}$ with $L=2^k$): Since the same estimate holds also for $L={2^{k+1}}$, we may fix $\epsilon$ and get that with probability larger than $1-C2^{-k}$ a regeneration point for the pair $F^{2^k}, F^{2^{k+1}}$ may be found. By Borel-Cantelli this means that there exists an $k_0$ such that for all $k>k_0$ a regeneration point exists for each pair $F^{2^k}, F^{2^{k+1}}$. As discussed above, this proves that $F$ is a regular Toom process that preserves $\Bp$.

To finish the lemma we need to show the semigroup property for $F$, up to time $\epsilon$. Fix some $K$ sufficiently large such that $F^{2^k}$ has a regeneration point in $[K,0)$ with probability larger than $1-\delta$ independently of $2^k$ (as long as $2^k>K$). Since $F^{2^k}\to F$ we see that $F$ has a regeneration point in $[K,0)$ with probability larger than $1-\delta$. Since $\delta$ was arbitrary, we get that $F$ has infinitely many regeneration points. But this shows the claim because after a regeneration point we can calculate $F$ as if it were a finite Toom process. Since this satisfies the semigroup property, so does $F$, and the lemma is proved.
\end{proof}

\subsection{A simple application}\label{S:Apps}

In this section we give a simple, yet interesting application of the coupling. Here is the precise statement:
\begin{theorem}\label{thm: correlation decay}
Consider a Toom process on $\Z$ started from $\Bp$.  Let $f,g$ be local functions with $\Bp(f)=\Bp(g)=0$ and $\Bp(f^4)=\Bp(g^4)=1$. Then
\beq
\bbE_{\Bp}(f(\si_0)g(\si_t)) \leq C  \e^{ r-ct }, 
\eeq
with  $r$ being the length of the smallest interval containing both $\supp f$ and $ \supp g$, and $C,c$ only dependent on $\la_{\pm}$. 
\end{theorem}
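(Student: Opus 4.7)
Write $A=\supp f$ and $B=\supp g$, and let $r$ denote the length of the smallest interval containing $A\cup B$. The strategy is to exploit the coupling from \Cref{S:coupling}: replace one of the two initial conditions by a copy resampled on $A$, and reduce the covariance to the probability that the discrepancy under the coupling reaches $B$ by time $t$.

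\textbf{Resampling and reduction.} Enlarge $\Si$ to carry a second, independent $\Bp$-sample on $A$, and define $\tilde\si_0$ to coincide with $\si_0$ off $A$ and with the fresh sample on $A$. Both $\si_0$ and $\tilde\si_0$ have law $\Bp$; conditionally on $(\tilde\si_0,\om)$, $\si_0|_A$ remains a free $\Bp$-sample, hence $\bbE[f(\si_0)\mid\tilde\si_0,\om]=0$. Running both copies with the same noise $\om$ yields $\si_t$ and $\tilde\si_t$, each stationary with marginal $\Bp$. In particular $\bbE[f(\si_0)g(\tilde\si_t)]=0$, so
\[
\bbE_{\Bp}[f(\si_0)g(\si_t)]=\bbE\bigl[f(\si_0)\,(g(\si_t)-g(\tilde\si_t))\bigr].
\]
Cauchy--Schwarz together with the crude bound $\|g\|_\infty^{2}\leq\min(p,1-p)^{-|B|}\leq C^{r}$ (from $\bbE[g^{2}]=1$) gives
\[
|\bbE_{\Bp}[f(\si_0)g(\si_t)]|\leq 2C^{r/2}\,\mathbf P\bigl(\si_t|_B\neq\tilde\si_t|_B\bigr)^{1/2}.
\]

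\textbf{Discrepancies propagate only to the right.} Let $D_t=\{x:\si_t(x)\neq\tilde\si_t(x)\}$ and $L_t=\inf D_t$ (with $L_t=+\infty$ if $D_t=\emptyset$). The coupling has two decisive features. First, inspection of \eqref{eq: infinite l sde} shows that if $\si_{s^-}$ and $\tilde\si_{s^-}$ agree on $(-\infty,L)$ they still agree on $(-\infty,L)$ after any clock firing: for rings at $y<L$, the two copies share the same constant-sign block on $[y,L-1]$, so their partial updates on $(-\infty,L)$ coincide and any new discrepancy lies in $[L,\infty)$. Hence $L_t$ is non-decreasing, with $L_0\geq\min A$. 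Second, each firing of the clock $N_{L_{s^-}}$ strictly advances $L_t$: whichever of $U\lessgtr\la_+$ occurs, exactly one of the two copies updates its spin at $L_{s^-}$ to match the other's, resolving the discrepancy at that site. By the memoryless property, the waits between successive upward jumps of $L_t$ are dominated by i.i.d.\ $\mathrm{Exp}(1)$ variables.

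\textbf{Conclusion and main obstacle.} It follows that $L_t-L_0$ stochastically dominates $\mathrm{Poisson}(t)$. On the event $\{\si_t|_B\neq\tilde\si_t|_B\}$ we have $L_t\leq\max B\leq\min A+r$, so $L_t-L_0\leq r$, and a standard Chernoff bound gives $\mathbf P(\mathrm{Poisson}(t)\leq r)\leq\exp(c'r-ct)$ with constants depending only on $\la_\pm$. Substituting into the Cauchy--Schwarz estimate and absorbing constants into $C$ and $c$ gives the claim $C\e^{r-ct}$. The main technical obstacle is the rightward-propagation claim when a clock at $y<L_t$ fires with update range $[y,z]$ straddling $L_t$: a careful case analysis is needed to verify that the portion of the update on $[y,L_t-1]$ is identical in both copies and that all fresh discrepancies end up in $[L_t,\infty)$.
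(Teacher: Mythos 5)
Your proof is correct and takes essentially the same approach as the paper: both couple two copies of the process driven by the same noise, observe that the leftmost discrepancy can only move right and advances at a Poisson rate, and bound the covariance by the probability that the discrepancy front fails to clear $\supp g$ by time $t$. The only differences are cosmetic — you resample the initial spins only on $\supp f$ (and use a clean conditional-expectation argument to kill $\bbE[f(\si_0)g(\tilde\si_t)]$), whereas the paper takes the two initial conditions to agree on $(-\infty,0)$ and be independent on $[0,\infty)$; you also make explicit the $\|g\|_\infty\leq C^{r/2}$ factor that the paper's displayed inequality glosses over, which is in any case absorbed by the $\e^{r}$ prefactor.
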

From now on, constants $C,c$ throughout the paper will be allowed to depend on $p,\la_{\pm}$ without further mention.   

\begin{proof}[Proof of \Cref{thm: correlation decay}] 
Take two local functions $f,g$ and, for concreteness,  say that $ \supp f \cup \supp g  \subset [0,r]$. We define the measure $\nu$ on $\si=(\si^1,\si^2) \in \left(\{\pm 1\}^\Z\right)^2$  as follows:
\begin{enumerate}
\item Both $\si^1$ and $\si^2$ are $\Bp$-distributed.  
\item For $x <0$,   $\sigma^1(x)=\sigma^2(x)$.
\item $\{\sigma^1(x): x\geq 0\}$ is independent of $\si^2$ and idem with $1 \leftrightarrow 2$. 
\end{enumerate}
Let $\si_t=(\si^1_t, \si^2_t) $ be the coupling of Toom processes started from $\nu$ (recall \Cref{def:coupling}).

Let $X(\sigma_t)$ denote the position of the ``left-most discrepancy" of the configuration $\sigma_t$, i.e.\ $X(\sigma_t):= \min \{x: \si_t^1(x) \neq \si_t^2(x) \} $. 
We are interested in $X(\sigma_t)$ for the following reason.  Since  the support of $f$ lies in $[0, \infty)$,  $f(\sigma^1_0)$ is independent of $\sigma^2_0$ and therefore of $\sigma^2_t\:\: \forall t \in \R_+$.   Moreover, if $X(\sigma_t)>r$ then $g(\sigma^1_t)= g(\sigma^2_t)$.  Since $f(\sigma^1_0)$ and $g(\sigma^2_t)$ are independent, we have
\begin{align*}
\bbE_{\nu}(f(\si^1_0)g(\si^1_t))  &=
\bbE_\nu(f(\sigma_0^1)g(\boxed{\sigma_t^1})\bbone\{X(\sigma_t)>r\})+
\bbE_\nu(f(\sigma_0^1)g(\sigma_t^1)\bbone\{X(\sigma_t)\le r\})\\
&= \bbE_\nu(f(\sigma_0^1)g(\boxed{\sigma_t^2})\bbone\{X(\sigma_t)>r\})+
\bbE_\nu(f(\sigma_0^1)g(\sigma_t^1)\bbone\{X(\sigma_t)\le r\})\\
&= \bbE_\nu(f(\sigma_0^1)g({\sigma_t^2}))-
\bbE_\nu(f(\sigma_0^1)g({\sigma_t^2})\bbone\{X(\sigma_t)\le r\}) \\
&\qquad + \bbE_\nu(f(\sigma_0^1)g(\sigma_t^1)\bbone\{X(\sigma_t)\le r\}).
\end{align*}
The first term is in the final equality is simply 0 (by the independence explained above), so we get
\begin{multline*}
|\bbE_{\nu}(f(\si^1_0)g(\si^1_t))|\le 
2\bbE_\Bp\left(f\left(\sigma_0^1\right)^4\right)^{1/4}
\bbE_\Bp\left(g\left(\sigma_0^1\right)^4\right)^{1/4}\Pr(X(\sigma_t)\le r)^{1/2}\\
\le 2\Pr(X(\sigma_t)\le r)^{1/2}
\end{multline*}
where we used Cauchy-Schwarz twice for each term, the invariance of $\Bp$ and finally that $\Bp(f^4)=\Bp(g^4)=1$.

Now, as we have already remarked a number of times, discrepancies move to the right, so $X$ must as well.  In fact, it does so at (at least) linear speed.  We see this as follows.  $X(\sigma_s)$
is naturally coupled to a  Poisson process $N(s)$ which has rate $\min (\la_{\pm})$ so that
\[
X(\sigma_t)\geq N(t)
\]
The theorem now follows from a large deviation estimate on the Poisson process $N(t)$.
\end{proof}





\subsection{Generators, Local Rates and Derivatives}\label{sec: diff coupled}
\def\conda{Condition \hyperlink{cond:A}{A}\xspace}
One unfortunate consequence of the fact that the Toom interface is non-Feller is that intuitive reasoning involving generators and locally defined rates needs to be checked rigorously. Here we state two results which justify our pervasive use of this language throughout the text, as we will almost always assume \conda below.   Because we feel these statements are more of technical rather than actual interest, their proofs are relegated to the end of the paper (see \Cref{sec: diff coupled1})

We will define our operators on the space of continuous functions from $\{\pm 1\}^S$, $S\subset\Z$, to $\R$.
We start with the flip operator $F_x$ at site $x$, defined by
\[
F_xf(\si)=f(\si^x)\qquad\sigma^x(y)=\begin{cases}
\sigma(y)&y\ne x\\
-\sigma(y)& y=x.
\end{cases}
\]
 For a finite subset $S\subset \bbZ$ and $\tilde\si \in \{-1,1\}^S$,  we have the indicators 
 $$\chi^{\tilde\sigma}_S =  \chi[{\si(x) =\tilde\si(x)} \, \forall x \in S]  $$
and whenever $\tilde \sigma$ is all $1$ or all $-1$, then we simply write $\chi^{+}_S$ and $\chi^{-}_S$. 
 We also need the associated projections --
$$P^{\tilde\sigma}_S  f(\sigma) =   \chi^{\tilde\sigma}_S(\sigma) f(\sigma).   $$
Then the generator of the process is formally defined as
 $$
 \caL =   \sum_{x < y}\caL_{x,y}   
=   \sum_{x < y}   (\la_+ P^+_{[x,y-1]}P^{-}_{y}  + \la_- P^{-}_{[x,y-1]}P^{+}_{y}  ) (F_x F_y-1)  ,
 $$
We call this definition formal because $\caL f$ is in general not continuous, due to the infinite sum over $x$.   We need a condition on moments that will almost always be assumed in the sequel:\\[1mm]
 \noindent \hypertarget{cond:A}{{\bf Condition A.}} We say that a Toom process $\sigma$ satisfies Condition A if for some $\eta>0$ 
$$
\sup_{t,x} \mathbb{E}(l_{x}^{1+\eta}(\sigma_{t}))<\infty.
$$
We say that it satisfies the local condition A if
$$
\sup_{t} \mathbb{E}(l_{x}^{1+\eta}(\sigma_{t}))<\infty\qquad\forall x.
$$
If you are reading the online version and ever forget what is \conda, clicking the letter A should send you to the definition.
%
%
\begin{lemma} \label{lem:quenched generator}   Let $\sigma$ be a Toom process satisfying the local \conda.
Let $f$ be a local function. Then for almost all $\sigma_0$ we have that 
$t\mapsto\mathbb{E}(f(\sigma_{t})\,|\,\sigma_{0})$ is differentiable
in $t$, the sum defining $(\mathcal{L}f)(\sigma_0)$ converges, and
\begin{equation}\label{eq:quenched generator}
\frac{d}{dt}\mathbb{E}(f(\sigma_{t})\,|\,\sigma_{0})=(\mathcal{L}f)(\sigma_{0})
\end{equation}
Further, we have an averaged version,
\begin{equation}\label{eq:annealed generator}
\frac{d}{dt}\mathbb{E}(f(\sigma_{t}))=\mathbb{E}\big((\mathcal{L}f)(\sigma_{0})\big).
\end{equation}
\end{lemma}
Lemma \ref{lem:quenched generator} will be proved in \S \ref{sec: diff coupled1}.
The same result also holds for a coupling of finitely many Toom processes $\sigma^{i}$: if all of the $\sigma^{i}$ satisfy the local \conda, then the differentiability of local functions also holds for the joint process, with a corresponding formal generator.  The proof of this claim follows the proof of \Cref{lem:quenched generator} in \S\ref{sec: diff coupled1} verbatim. 

Throughout the paper we will also use statements that are slight generalizations of the above. The most prominent example comes in \Cref{def:ja} and \Cref{lem:ja finite} where we handle ``instantaneous rates'' that should be justified similarly to the rates of change of $f(\sigma_t)$ above. We will henceforth use reasoning involving rates and currents without explicit justification.

\section{Invariant Measures on \texorpdfstring{$\Z$}{Z}}
\label{S:InvZ}
In this section we investigate invariant measures for the Toom interface on $\Z$, using the coupling from \cref{S:coupling}.  Throughout the section we assume that $\sigma=(\sigma^1,\sigma^2)$ is a pair of Toom processes on $\Z$ coupled as in  \Cref{def:coupling}.

Recall the definition of discrepencies \eqref{eq:DisC}.
Rather than focusing on the discrepancies themselves, it is useful to restrict attention to the study  of gaps between consecutive discrepancies of type $(+, -)$ and type $(-, +)$, i.e.\ elements of $\D^+$ and $\D^{-}$, respectively.   Let us, arbitrarily, refer to the first type discrepancies as having signature/sign $+$ and the second type of discrepancies as having signature $-$.
To keep track of ``interfaces" between the two types of discrepancy, let, for  $x \in \D^{\eta}$, 
\begin{align*}
b(x)= \inf\{y>x: y \in \D^{-\eta}\} 
\end{align*}
should such a $y$ exist and set $b(x)= \infty$ otherwise.  The set of interface discrepancies is then
\[
B=B(\sigma):=\{x\in  \D: b(x)< \infty\text{ and } (x,b(x))\cap \D=\emptyset\}.
\]

The following lemma is a main ingredient of \Cref{T:Station}, and the place where the integrability condition is used. 
\begin{lemma} 
\label{L:Dominant}
Assume $(\sigma^1,\sigma^2)$ is a stationary coupling of Toom processes which satisfies 
\[
\sup_x\E\big(\max\{l_x(\si^1),l_x(\si^2)\}\min\{r_x(\si^1)r_x(\si^2)\}\big)^{1+\eps}<\infty.
\]
Then 
\[
\Pr(\sigma^1 \leq \sigma^2 \text{ or } \sigma^2 \leq \sigma^1)=1   
\]
\end{lemma}
Here and below, ``$\le$'' stands for point-wise inequality for all $t\in[0,\infty)$ and all $x\in\Z$. We stress that the phrase ``$(\si^1,\si^2)$ is a stationary coupling'' means not only that each one is a stationary Toom process, but also that the coupling is stationary. Formally, let $\nu$ be the measure on $\left(\{\pm 1\}^\Z\right)^2$ and $F^i:\{\pm 1\}^\Z\to D$ be the maps that define $\si$ i.e
\[
\Pr(\sigma\in E)=\int \mathbbm{1}_E\big(F^1(\eta^1,\omega),F^2(\eta^2,\omega)\big)\,
d\nu(\eta^1,\eta^2)\,d\mathbb P(\omega).
\]
Stationarity here refers to the requirement that $(F^1_t(\eta^1,\omega),F^2_t(\eta^2,\omega))$ is distributed according to $\nu$ for all $t$.

In order to prove Lemma \ref{L:Dominant}, we need a number of preparatory arguments.  Some of these will also be used later in \S \ref{S:Bulk}.
Denote for brevity 
\[
\maxl_x=\max\{l_x(\si^1),l_x(\si^2)\} \qquad \minr_x=\min\{r_x(\si^1),r_x(\si^2)\}.
\] 

In general, our goal is to show that  $ \E\left[\str B  \str \right] =0$  for $\nu$ the measure on $\left(\{\pm 1\}^\Z\right)^2$ induced by $\sigma_0$. 
To orient the reader toward the direction we are headed, let us first provide an informal sketch of a proof in case $\nu$ is translation invariant. 
We fix an interval $I$.  By stationarity, we are tempted to write
\begin{equation}
\label{E:Warmup}
0=  \partial_t \E_\nu [ \str B\cap I \str ]  \leq \E[\maxl_{\min I}]- \min(\la_{\pm}) \E[|B^1 \cap I|],  \qquad   \   
\end{equation}
where $B^1 \subset B$ is the set of interface discrepancies that can be annihilated in one Toom update.  The inequality in \Cref{E:Warmup} follows from the observations that  the first term bounds from above the flow rate of of discrepancies from $(-\infty,\min I-1]$ into $I$ while the second term bounds from below the annihilation rate inside $I$. 
By hypothesis, $\sup_x \E[\maxl_x]<\infty$ so that these two inequalities together imply  that $\E[|B^1 \cap I|]$ is uniformly bounded in $I $.  
Under the assumption of translation invariance of $\nu$, this implies $\E[|B^1 |]=0$.   This argument can then be iterated (considering the discrepancies that can be promoted into  $B^1 $ in one step, etc.\ ).  Eventually one then concludes that  $\E[|B|]=0$.  Note that this argument is slightly formal due to the fact that$ \str B \cap I \str $ is not a local function and therefore the inequality in \Cref{E:Warmup} would need additional justification. Instead of remedying this directly, we proceed to  \Cref{L:Discrep1}, which we need for the general proof of  \Cref{L:Dominant} and which immediately settles the translation invariant case. We will need the following definition together with a justifying lemma, here and in \S\ref{S:WeakN}.
\begin{definition}\label{def:ja}Let $\sigma^1$ and $\sigma^2$ be two coupled Toom processes, and let $x\in\mathbb{Z}$. We define $j_x(\sigma_0)$ to be the infinitesimal rate at which a discrepancy jumps from $(-\infty,x)$ to $[x,\infty)$ (we include also in $j_x$ also the case that the discrepancy annihilates in $[x,\infty)$ in the same step). Formally, for $y< x$, $i\in\{1,2\}$ and $t>0$ we let $J_{y,t,i}$ be the event that $\sigma^i$ had a Toom update at time $t$ at position $y$, and that 1) for all $z\in[y,x)$ $\sigma^i_{t^-} (z)= \sigma^i_{t^-} (y)$ and 2) for some $z\in[y,x)$, $\sigma_{t^-}^1(z)\ne\sigma_{t^-}^2(z)$. Then we define
\[
j_x(\sigma_0)=\lim_{\eps\to 0}\frac{1}{\eps}\mathbb{P}(\exists y\le x, t\le\eps \mbox{ and } i\in\{1,2\} \mbox{ such that }J_{y,t,i}).
\]
We define $a_x(\sigma_0)$ as the infinitesimal rate of annihilation of discrepancies at $x$. Formally, 
\[
a_x(\sigma_0)=\lim_{\eps\to 0}\frac{1}{\eps}\mathbb{P}(\exists y\le x, t\le\eps \mbox{ and $i\in\{1,2\}$ s.t. $J_{y,t,i}$ and $\sigma_{t^-}^i(y)=\sigma_{t^-}^{3-i}(x)\ne\sigma_{t^-}^i(x)$}).
\]
Finally define
\[
j_x=\mathbb{E}(j_x(\sigma_0))\qquad a_x=\mathbb{E}(a_x(\sigma_0)).
\]
\end{definition}
These rates are  well-defined, as we state now.

\begin{lemma}\label{lem:ja finite} If $\si^1$ and $\si^2$ are two Toom processes satisfying the local \conda, then for any coupling of $\si^1$ and $\si^2$, $j_x(\sigma)$ and $a_x(\sigma)$ are well-defined almost surely and their averages  $a_x,j_x$ are finite.
Further, $j_x$ and $a_x$ (which have been defined above as the expected quenched rates) are equal to the corresponding annealed rates, i.e.
\begin{align*}
j_x&=\lim_{\eps\to 0}\frac 1\eps \mathbb{E}(\#\{\textrm{discrepancy jumps from $(-\infty,x)$ to $[x,\infty)$ before time $\eps$}\})\\
a_x&=\lim_{\eps\to 0}\frac 1\eps \mathbb{E}(\#\{\textrm{annihilations at $x$ before time $\eps$}\}).
\end{align*}
\end{lemma}
\begin{proof} These claims are variations on  \Cref{lem:quenched generator} as $a_x(\sigma),j_x(\sigma)$ are defined as conditional time-derivatives. The only difference with \Cref{lem:quenched generator} is that these derivatives are not naturally equal to $\E_\sigma(\caL' f)$ with $\caL'$ the formal generator of the coupled process, but the proof of \Cref{lem:quenched generator} (given below in \S\ref{sec: diff coupled1}) applies here as well.
For concreteness, let us give the expression for $a_x(\sigma)$
\[
a_y(\sigma) = \sum_{x<y}\sum_{\eta=\pm} \sum_{i=1}^2\lambda_\eta  \mathbbm 1\{\sigma\in V_{x,y,i,\eta}\}
\]
where $V_{x,y,i,\eta}$ is the event that $\eta=\sigma^i(x)=\sigma^i(x+1)=\dotsb=\sigma^i(y-1)=-\sigma^i(y)$ while $\sigma^{3-i}$ satisfies that $\sigma^{3-i}(y)=\eta$ and for some $z\in[x,y)$ we have $\sigma^{3-i}(z)=-\eta$.
\end{proof}

We now pick up the threads of the proof of \Cref{L:Dominant}, starting with
\begin{lemma}
\label{L:Discrep1}
Let $\nu$ be the initial measure of a stationary coupling and assume \conda. Then
\begin{equation}
\limsup_{ \str I \str \rightarrow \infty} \frac{1}{\str I \str} \E\left[  \str B \cap I \str  \right]=0.
\end{equation}
\end{lemma}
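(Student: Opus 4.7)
The plan combines a rate-balance argument under stationarity with a simple geometric tail bound. For each $k \in \N$ introduce the nested family
\[
\mf D^{[k]} := \{ x \in \mf D : d_\partial(x) - x \leq k \},
\]
so that $\mf D^{[1]} \subseteq \mf D^{[2]} \subseteq \cdots$ and $\bigcup_{k} \mf D^{[k]} = \mf D$. Because $x \in \mf D$ forces $d(x) = d_\partial(x)$ to be the next discrepancy to the right of $x$, any two distinct $x_1 < x_2$ in $\mf D \setminus \mf D^{[k]}$ must satisfy $x_2 \geq d(x_1) > x_1 + k$, which yields the deterministic tail bound
\[
\bigl|(\mf D \setminus \mf D^{[k]}) \cap I\bigr| \;\leq\; \frac{|I|}{k+1} + 1.
\]
Consequently, once I establish the uniform estimate $\E_\nu[|\mf D^{[k]} \cap I|] \leq C_k$ for every fixed $k$, I obtain
\[
\frac{1}{|I|}\, \E_\nu\bigl[ |\mf D \cap I| \bigr] \;\leq\; \frac{C_k + 1}{|I|} + \frac{1}{k+1},
\]
and $\limsup_{|I|\to\infty}$ followed by $k \to \infty$ concludes the proof.

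For the uniform bound on $\E_\nu[|\mf D^{[k]} \cap I|]$, the case $k=1$ is precisely the heuristic of \Cref{E:Warmup}: by the stationarity of $\nu$ under the coupled dynamics, the expected net rate of change of $|\mf D \cap I|$ vanishes. The inflow of interface discrepancies across the left boundary of $I$ is bounded above by a constant times $\E_\nu[\hat l_{\min I}]$, since a clock ring at a site $y < \min I$ influences $I$ only when a constant-sign block of $\sigma^1$ or $\sigma^2$ starting at $y$ reaches into $I$; summing over $y<\min I$ gives exactly $\E_\nu[\hat l_{\min I}]$, which is $\leq C$ by hypothesis. On the other hand, the interior annihilation rate dominates $\min(\la_\pm)\,\E_\nu[|\mf D^{[1]} \cap I|]$ because each adjacent pair in $\mf D^{[1]}$ disappears at the next Poisson event on its left endpoint. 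For $k > 1$ I would iterate: a pair with gap $g \in (1,k]$ is promoted to a pair with gap $g-1$ at rate bounded below by a constant, while the only ways for $\mf D^{[k]}$ to grow are inflow from the boundary, promotion of pairs of larger gap (but these feed into $\mf D^{[k]}$ at a rate which is itself controlled by a boundary-type bound), and splittings of smaller pairs (handled inductively). Writing the stationary balance for $\E_\nu[|\mf D^{[k]} \cap I| - |\mf D^{[k-1]} \cap I|]$ relates the $k$-th bound to previously controlled quantities plus boundary flux.

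The main obstacle I expect is the rigorous justification of the stationarity identity $\partial_t \E_\nu[|\mf D \cap I|] = 0$, since $|\mf D \cap I|$ is not a local observable: deciding whether $x \in \mf D$ requires scanning arbitrarily far to the right to find the next discrepancy. I would localize by capping the search range at a parameter $L$ and replacing $|\mf D \cap I|$ by its truncation on the event that no block of $\sigma^1$ or $\sigma^2$ touching $\partial I$ exceeds length $L$; the SDE \eqref{eq: infinite l sde}, combined with finite speed of information propagation from \Cref{def: existence dynamics} and the integrability hypothesis $\sup_x \nu(\hat l_x) < \infty$, should allow one to pass $L \to \infty$ with uniform control of the error. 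The inductive step for $k > 1$ is the other delicate point, mainly in bookkeeping the inter-conversion rates between $\mf D^{[k]}$ and $\mf D^{[k\pm 1]}$ without double counting.
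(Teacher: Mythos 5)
Your geometric tail bound and the reduction to uniformly bounding the small-gap population are sound, and they reproduce the way the paper slices $\mf D$. However, the route you take to the small-gap bound is genuinely different from the paper's, and it contains a gap.

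The paper does \emph{not} prove $\E_\nu[|\mf D^{[k]}\cap I|]\le C_k$ by a rate balance on the interface-discrepancy count. Instead it proceeds through the auxiliary \Cref{eq: in the eye of the hunter}: conditionally on having an interface pair at positions $(a,b)$ at time $0$, with probability at least $c(t,|b-a|)$ an annihilation occurs inside $[a,b]$ by time $t$. Combined with stationarity and Markov's inequality this gives $\nu(E_{a,b})\le t\,c(t,k)^{-1}\sum_{x=a}^{b}a_x$, and then the summability $\sum_x a_x<\infty$ (itself a consequence of the finite current obtained from $\sup_x\nu(\hat l_x)<\infty$) makes the Cesàro average of $\nu(x\in\mf D,\,d_\partial(x)-x\le k)$ vanish. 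The only stationarity balance used is the current balance $j_{x-1}-j_x=a_x$, which is for genuinely local quantities; no differentiation of $\E_\nu[|\mf D\cap I|]$ is needed. Your scheme instead relies on the rate-balance heuristic that the paper itself flags as ``slightly formal'' and then deliberately bypasses.

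The concrete gap is in your iteration from $k=1$ to $k>1$. The stationary balance for $|\mf D^{[k]}\cap I|$ has, on the shrinkage side, not only annihilation and outflow but also \emph{gap increases}: when a clock rings strictly between a pair $(x,d_\partial(x))$ on a site where both $\sigma^1,\sigma^2$ have the same sign as $\sigma^1(x)$ (and this sign persists up to $d_\partial(x)$), the right discrepancy is pushed further right and the gap grows, so the pair can exit $\mf D^{[k]}$. On the growth side you invoke ``promotion of pairs of larger gap,'' but the rate of this term is proportional to $\E_\nu[|\{\text{gap}>k\}\cap I|]$, which is not controlled by the inductive hypothesis on $\mf D^{[k-1]}$. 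Also, a clock at the left endpoint of a gap-$g$ pair does not decrease the gap to $g-1$ at a fixed rate; it decreases it by a configuration-dependent amount (it jumps to the first opposite-sign site in $\sigma^1$ or $\sigma^2$), possibly all the way to annihilation, so the ``promotion at rate bounded below by a constant'' assertion is not accurate for the intermediate gaps. As written, the balance does not close. If you wish to persist with a rate-balance proof, the safer quantity to balance is the total count $|\mf D\cap I|$ (which is unaffected by gap changes), and then you need a quantitative statement connecting the annihilation rate $\sum_{x\in I}a_x$ back to the density of small-gap pairs; that is exactly what \Cref{eq: in the eye of the hunter} supplies. The non-locality issue you flag is real, but it is a secondary obstacle; the iteration is the primary one.
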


\begin{proof}
Given a discrete interval $[x, y]$, define the event
\[
E_{x,y}(\si):=\{x \in B(\si),\:\: y= b(x)\}
\]
In words, $E_{x, y}$ denotes the event that there is a boundary discrepancy at $x$ and that the first discrepancy to its right occurs at $y$. Recall that $a_y$ is the rate of annihilations at $y$ and that $j_x$ is the rate of discrepancy flow through $x$ defined above. We now claim that 
%
%
\begin{equation}\label{eq:ja}
j_{x}-j_{x+1} =a_x.
\end{equation}
Indeed, if we denote by $H_x(t)$ the number of discrepancies that passed from $(-\infty, x]$ to $(x, \infty)$ in the time-interval $[0,t]$, and $A_x(t)$ the number of annihilations at $x$ in $[0,t]$, then we have
$$
H_{x+1}(t)-H_x(t)=A_x(t) + \mathbbm 1(x \in \D(\sigma_0))-\mathbbm 1(x \in \D(\sigma_t))
$$
Taking the expectation, we can drop the two rightmost terms by stationarity. Dividing then by $t$ and taking $t\to 0$, we obtain \Cref{eq:ja} by using \Cref{L:Discrep1}, more precisely the identification of $a_x$ as ``annealed rates''.
 
In particular, since the $j_x$ are finite and the $a_x$ nonnegative, we have $\sum a_x\le C$.

%
%

Then \Cref{L:Discrep1} is a consequence of the following claim.

\begin{lemma}\label{eq: in the eye of the hunter} 
There is a function $c:\N \rightarrow \R$, strictly positive for all $k \in \N$ such that with $\nu$ as in \Cref{L:Discrep1}, and for any $x$ and $y$,
\beq
\label{E:Con-A}
\sum_{z =x}^y a_z  \geq   c(y-x) \nu(E_{x,y}).
\eeq
\end{lemma}
Let us finish the proof of \Cref{L:Discrep1} and then attend to \Cref{eq: in the eye of the hunter}.  
Let $I=[y,z]$ be our interval, and let $k$ be some parameter. Let $d(k)=\min_{j=1}^k c(j)$.  We get
from \Cref{E:Con-A} 
\begin{align}
\sum_{x=y}^{z} \nu(x \in B, \: b(x)-x \leq k)
&=\sum_{x=y}^z \sum_{i=1}^k\nu(E_{x,x+i})\nonumber\\
&\leq \sum_{x=y}^z \sum_{i=1}^k\sum_{j=0}^i\frac{a_{x+j}}{c(i)}
\leq \sum_{x=y}^{z+k} \frac{k^2}{d(k)}a_x
\le C(k).\label{eq:M1}
\end{align}
On the other hand, for any $k$,
\[
\sum_{x=y}^{z} \nu(x \in B, \: b(x)-x \geq k) \leq \frac{|I|+k}{k}.
\]
 \Cref{L:Discrep1} thus follows.
\end{proof}

\begin{proof}[Proof of  \Cref{eq: in the eye of the hunter}]
The main observations we make are that Toom updates due to arrivals at sites $z>y$ do not harm us --- they cannot move the discrepancy at $y$ --- while arrivals at sites $z<x$ only help us --- they can push the discrepancy at $x$ closer to or on top of $y$ (such that an annihilation occurs), or can annihilate the discrepancy where it stands, but cannot push it beyond $y$.   To arrive at an annihilation event, it suffices to have at least $y-x$ Poisson arrivals at the location of the discrepancy which starts at $x$ at time $0$ before any occur between the location of the discrepancy and $y$. 

To use this we examine the behavior in the time interval $[0,2]$. We get
\begin{align*}
\nu(E_{x,y})&=\mathbb{E}\bigg(\int_0^1\mathbbm{1}_{E_{x,y}}(\sigma_t)\,\d t\bigg)\\
&\le\mathbb{P}(\exists t\in[0,1]\mbox{ s.t. }\sigma_t\in E_{x,y})\\
&\le C(y-x)\mathbb{E}(\#\{\mbox{annihilations in $[x,y]$ before time $2$}\})\\
&=2C(y-x) \sum_{z=x}^y a_z
\end{align*}
where the last equality follows from the fact that $a_z$ is also the annealed rate of annihilations (see the ``further'' clause of \Cref{lem:ja finite}). This shows \Cref{eq: in the eye of the hunter} and consequently also \Cref{L:Discrep1}.
\comment{

To use this we need to examine the long-time behavior. We claim that
\begin{equation}\label{eq:oh no not more of this crap}
\lim_{t\to\infty}\frac{1}{t}\Big(\#\{\mbox{the number of annihilations at $x$ in $[0,t]$}\}
-\int_0^t=a_x.
\end{equation}
This is a simple exercise with Poisson processes, see the details in \Cref{lem:trivial but took me a week} below.
To see \cref{eq:oh no not more of this crap} ****starting from here its crap:**** fix some $y<x$ and examine only annihilations that come from Poisson arrivals at $y$. Let $a_{y,x}(\sigma_t)$ be the corresponding infinitesimal rate. Since $a_{y,x}(\sigma_t)$ is bounded by the number of Poisson arrivals at $y$, and since $\sigma_t$ depends only on Poisson arrivals in $[0,t]$, we easily get that
\[
\lim_{t\to\infty}\frac{1}{t}\Big(\#\{\mbox{annihilations at $x$ from arrivals in $y$ at times $[0,t]$}\}-\int_0^t a_{y,x}(\sigma_s)\,ds\Big)=0.
\]
Using stationarity and Fubini's theorem gives
\[
\lim_{t\to\infty}\frac{1}{t}\mathbb{E}(\#\{\mbox{annihilations at $x$ from arrivals in $y$ at times $[0,t]$}\})=a_{y,x}.
\]
Further, since 
\[
\int_s^ta_{y,x}=\lim_{\eps\to 0}\sum_{i=0}^{\lfloor(t-s)/\eps\rfloor}a_{y,x}(\sigma_0)\to
\]
***until here***}

\end{proof}

\begin{proof}[Proof of \Cref{L:Dominant}]
We go back to the proof of \Cref{L:Dominant}. 
  Let 
$$
B_{x}=
\{y \in B: \, b(y) \leq   x\}, 
$$
and note that $\mathbbm 1\{y \in B_x\}$ is a local function (its support is $[y,x]$) in contrast to $\mathbbm 1\{y \in B\}$, so we can use \Cref{lem:quenched generator}.   In what follows we fix some $x \in \Z$ and some $h,k \in \N$, and we omit all three from the notation to avoid clutter. Set 
\[
\theta(y)= \begin{cases}
1  &\text{ for $y\in [x-h, x]$},\\
0  &\text{ for $y \in (-\infty, -x-h-k]$},\\
1-\frac{j}{k} & \text{ for $y =-x-h-j, 0 \leq j \leq k$}.
\end{cases}
\]
By stationarity
\[
0=\partial_t \E\Big[  { \sum_{y=x-h-k}^x \theta(y) \mathbbm 1\{y \in B_x\}  }  \Big] 
\]
 Let us examine the events which change the value of the function between $\big[\cdot \big]$ in time. The value is decreased by annihilations and by discrepancies leaving the set $B_x$ (an annihilation might make the discrepancy just before it become a boundary discrepancy, i.e.\ an element of $B_x$, but since $\theta$ is increasing on $(-\infty,x]$ the sum still decreases). The latter happens, for a discrepancy at $y$, when $b(y)$ leaves $(\infty,x]$. We ignore annihilations and define 
\[
X : =\min(\lambda_\pm)\mathbbm 1\{ \exists y \in B_x\cap [x-h+1, x]: b(y)\text{ can leave $(-\infty, x]$ in one step} \}.
\]
Next we examine events which cause $\sum\theta(y)\mathbbm{1}\{y\in B_x\}$ to increase. Examine one boundary discrepancy $y$. If $r_y(\si^1)\ne r_y(\si^2)$ then, because $y$ is a boundary discrepancy, this means $y$ is separated from $b(y)$ by a stretch of equal signs. Thus, a Poisson arrival at $y$ either annihilates with the one of opposite type at $b(y)$ or moves one step to the right. Similarly, a Poisson arrival in the interval of length $[y-\maxl_y,y)$ might cause the discrepancy at $y$ to move 1 step to the right or to be annihilated. The sum in these cases increases by no more than $\theta(y+1)-\theta(y)$. In the case that $r_y(\si^1)=r_y(\si^2)$ the discrepancy at $y$ might move by this common value. Therefore, in all cases we may bound the increase in the sum by $\theta(y+\minr_y)-\theta(y)$. Defining
\[
Z(y) :=[\theta(y+ \minr_y)-\theta(y)] \,  \maxl_y \,  \mathbbm 1\{y \in B_x\},
\]
a convenient way to express the above bounds is to say
\begin{equation}
\label{E:Flow}
0\le \sum_{y\le x-h}\E_\nu[Z(y)]-\E_\nu X.
\end{equation}
(note that we use here the existence of the annealed generator, recall \Cref{eq:annealed generator}). 
To exploit \Cref{E:Flow}, we bound $\sum_y \E\left[Z(y) \right]$ by splitting the sum over $y$. 
For $x-h-k \leq y \leq x-h$, we use  $\str \theta(y+ \minr_y)-\theta(y)\str \leq \minr_y/k$ and H\"{o}lder's inequality (for any $1< p< \infty$  with $ \frac 1p + \frac 1q=1$) to get 
\[
 \sum_{y=x-h-k}^{x-h} \E\big[ \str Z(y)  \str \big] \leq      
\bigg(\frac{1}{k} \sum_{y=x-h-k}^{x-h}  \E_\nu [(\maxl_y \minr_y)^{q}]\bigg)^{\frac 1{q}}  
\bigg(\frac{1}{k}\sum_{y=x-h-k}^{x-h}  \nu (y \in B_x)\bigg)^{\frac 1{p}}
\]
If we choose $q$ sufficiently close to $1$, then the first factor is bounded by $C$ (uniformly in $h,k,x$) by the integrability assumption we placed on $\nu$.  The second factor decays as $k \to \infty$ (uniformly in $h,x$) by  \Cref{L:Discrep1} and the fact that $B_x \cap I \subset B \cap I$.   For  $ y < x-h-k$, we write
\[
\sum_{y<x-h-k}Z(y)\le\frac 1k\maxl_{x-h-k}\minr_{x-h-k}
\]
since there can be no more than one boundary discrepancy for which $\theta(y+\minr_y)-\theta(y)>0$. We conclude that
$$
\limsup_{k \to \infty} \sum_{y \le x} \E\left[Z(y) \right]  =0. 
$$
Combining with \Cref{E:Flow}, we conclude that $\E_{\nu }\left[X\right]=0$ (recall that $X$ is nonnegative and independent of $k$).
 Since this holds for all $x,h $, it follows that $\E [\str  B \str ]=0$.  
\end{proof}

One final lemma is needed before we start with the proof of \Cref{T:Station}, this time unrelated to any coupling.
\begin{lemma}\label{lem:density preserved}Let $\sigma$ be a Toom process on $\Z$ (not necessary stationary) satisfying \conda. Then with probability 1, if at some time $t_0$, the limit
\[
\lim_{L\to\infty}\frac{1}{2L+1}|\{x\in[-L,L]:\sigma_{t_0}(x)=1\}|
\]
exists (``$\sigma_0$ has a density''), then for any $t>t_0$ we also have that $\sigma_t$ has a density, and the densities are equal.
\end{lemma}
\begin{proof}Fix $T$ and $t_0<T$ and examine the quantity
\[
D_L:=\big||\{x\in[-L,L]:\sigma_T(x)=1\}|-|\{x\in[-L,L]:\si_{t_0}(x)=1\}|\big|.
\]
Then $D_L$ is the difference between the flows of particles with sign 1 into and out of $[-L,L]$, and since the rates of these flows are bounded by the lengths of the left stretches at $-L$ and $L$ we have
\[
\mathbb{E}D_L\le \E\int_{t_0}^T l_{-L}(\sigma_s)+l_L(\si_s)\,\d s\le C
\]
(where $C$ depends on the Toom process and on $T$, but not on $L$ and where the fact that the time derivative is bounded by $l$ is again by the annealed generator, see \Cref{eq:annealed generator}). 

This shows that $\frac{1}{2L+1}D_L$ converges in $L^1$ to 0 for all $t_0<T$. To pass to almost everywhere convergence we note that if we restrict, say, to squares $L=M^2$, then the Markov inequality  and the  Borel-Cantelli lemma together give us that 
\[
\frac{1}{2M^2+1}D_{M^2}\to 0\qquad\textrm{almost surely }\forall t<T.
\]
But if $\frac{1}{2L+1}|\{x\in[-L,L]:\si_t(x)=1\}|$ converges as $L\to\infty$ on the squares, then since $D_j\leq D_k$ if $j< k$, the full sequence converges with no restriction on the $L$. Finally, we note that since $T$ was arbitrary the claim just proved also holds for all integer $T$ simultaneously, proving that $\frac{1}{2L+1}D_L\to0$ for all $t\in[t_0,\infty)$.
\end{proof}
We have gathered all necessary ingredients, and we may now start the 
\begin{proof}[Proof of \Cref{T:Station}]
Let us fix a stationary measure $\mu$ for the Toom interface as in the statement of \Cref{T:Station}.  The idea is to construct a coupling $\mathbb P$ of $(\sigma^1, (\sigma(p))_{p\in [0, 1]})$ such that:
\begin{enumerate}[(a)]
\item For $p \in [0, 1]$, the distribution of $\si(p)$ is $\Bp$. 
\item The distribution of $\si^1$ is $\mu$.
\item For any $p$: ${\bf P}(\si^1 \leq \si(p) \text{ or } \si^1 \geq \si(p) ) =1$.
\item If $p'>p$, then $\si(p') \geq \si(p)$ a.s. 
\end{enumerate}

If we then define the random variable
\beq \label{def: mystery1}
\Theta:= \sup\{p \in \Q : \sigma^1 \geq \sigma(p)\} =   \sup\{p : \sigma^1 \geq \sigma(p)\},
\eeq
it is tempting to believe that $\si^1 = \sigma(\Theta)$, and that the distribution of $\sigma(\Theta)$ is a mixture of product Bernoulli's with mixing measure given by the distribution of $\Theta$.
To turn this into a rigorous proof is a bit delicate. What follows is one possible approach.

\paragraph*{Step 1:}   
Let $\caP \subset (0,1)$ be a finite set.   
Let $\nu_0=\nu_0^\caP$ be a measure on $\{\pm 1\}^\Z \times (\prod_{p \in \caP} \{\pm 1\}^\Z)$ whose first coordinate is distributed like $\mu$ (this is property (b)) and is independent of the others, while its other $|\caP|$ coordinates are distributed like $\{\Bp:p\in\caP\}$ (property (a)) and are coupled to satisfy property (d). 
\noindent
Also, let $\nu_t$ denote the distribution on $\{\pm 1\}^\Z \times (\prod_{p \in \caP} \{\pm 1\}^\Z)$ obtained by evolving the coupling to time $t$ when started from the initial distribution $\nu_0$.

We now examine a weak$^*$-limit point $\nu_\infty=\nu_\infty^\caP$ of the collection of time averaged measures  $(1/T) \int_0^T \nu_t\,\d t$. The stationarity of $\Bp$ ensures $\nu_\infty$ satisfies (a), the stationarity of $\mu$ ensures (b), and of course (d) is also preserved by attractivity of the coupling. By  \Cref{lem:coupling exists}, $\nu_\infty$ is a stationary coupling, so we may apply  \Cref{L:Dominant} and get property (c). We note that $\si(p)$ is Bernoulli so $\E[r_x(\si(p))^k]<\infty$ for all $k$ and $x$, and ditto for $l_x$. By H\"older's inequality we may thus get from $\E\big[l(\si^1)^{1+\eps}\big]<\infty$ (the assumption of the theorem) that $\E\big[(\maxl_y\minr_y)^{1+\eps}]<\infty$ (the requirement of \Cref{L:Dominant}).

To pass from a finite set to an infinite set we only need to make sure that, when $\caP\subset\caQ$ then $\nu_\infty^\caQ$ is a limit of a subsequence of the sequence that was used to define $\nu_\infty^\caP$. This ensures consistency and allows to apply the Kolmogorov extension theorem. We may thus get a measure coupling $\si^1$ to $\si(p)$ for a dense set of $p$ in $(0,1)$, say to all of the rationals, satisfying (a)-(d). Abusing notations, below when we write $p\in\Q$ we implicitly mean $p\in\Q\cap(0,1)$.

\paragraph*{Step 2:}
Define the random variable
\begin{equation*} \label{def: mystery}
\Theta:= \sup\{p \in \Q : \sigma^1 \geq \sigma(p)\} =   \inf\{p : \sigma^1 \leq \sigma(p)\}
\end{equation*}
The equality follows from monotonicity, and also gives that $\si^1$ has a density i.e.\ the limit
\[
\lim_{L\to\infty}\frac{1}{2L+1}|\{x\in[-L,L]:\sigma^1(x)=1\}|=:\Dens(\sigma^1)
\]
exists and is equal to $\Theta$ since with probability one for all $p\in\Q$ the density of $\si(p)$ is $p$. 
Note that the existence of a density is a property of $\si^1$ irrespective of any coupling of it with anything else.

\paragraph{Step 3:}We now wish to claim that $\Dens(\si^1)$ is independent of $\{\si(p):p\in \Q\}$. For this we need to go through step 1 again. 
Step 1 starts with the measure $\nu_0$ under which $\si^1$ is independent of $\{\si(p):p\in\caP\}$. In particular $\Dens(\si^1)$ is independent of the latter collection of random variables. By  \Cref{lem:density preserved} the density is preserved during the time evolution.  In particular, we may conclude from this that under the measures $(1/T)\int_0^T\nu_t\,\d t$ the density of the first process (``the $\si^1$'') exists and is independent of the other processes (``the $\si(p)$''). 

We need to be careful when passing to the limit in $T$ as, in general, weak$^*$-limits do not preserve the density or its existence. In this case, however, it is only the coupling that changes as we take $T$ to $\infty$, while the marginal distribution of each coordinate stays fixed. To use this fact we need a notion of $\eps$-independence in metric spaces, so let us define it now:
\begin{definition}For two variables $X$ and $Y$ taking values in metric spaces, we say that $X$ and $Y$ are $\eps$-indenedent if for any $A$ and $B$ measurable,
\begin{align*}
\Pr(X\in A)\Pr(Y\in B) &< \Pr(X\in A+\eps, Y\in B+\eps)+\eps\\
\Pr(X\in A,Y\in B) &< \Pr(X\in A+\eps)\Pr(Y\in B+\eps)+\eps
\end{align*}
where $A+\eps$ is the $\eps$-inflation of $A$, namely $A+\eps:=\{x:\exists y\in A \textrm{ such that }d(x,y)<\eps\}$, and ditto for $B+\eps$.
\end{definition}
To use this, find some \emph{local} variable $E$ approximating $\Dens(\si^1)$ up to $\eps$ i.e.\ $\Pr(|E-\Dens|>\eps)<\eps$. We get that $E$ is $\eps$-independent of $\{\si(p):p\in\caP\}$ under $(1/T)\int_0^T\nu_t\,\d t$ for any $T$. Taking a limit we get that $E$ is $2\eps$-independent of $\{\si(p):p\in \caP\}$ under any weak$^*$-subsequential limit (here it was important to have this particular definition of $\eps$-independence). Thus under $\nu_\infty^\caP$, $\Dens(\sigma^1)$ is $3\eps$-independent of $\{\si(p):p\in \caP\}$.  Since $\eps>0$ was arbitrary, we get that $\Theta=\Dens(\sigma^1)$ is independent of $\{\si(p):p\in\caP\}$. This implies directly the property with $\caP$ replaced by $\Q$.



\paragraph{Step 4:}
We now show that $\si^1=\sup\{\si(p):p<\Theta\}$ under $\nu_{\infty}$. To see this, observe that since $\Theta$ is independent of $\{\si(p):p\in[0,1]\}$, we can fix it in advance. On the other hand, for any monotone coupling of Bernoulli processes and any fixed $x$, the random variable $X:=\sup\{ p: \si(p)(x)=-1\}$ is uniformly distributed.  Hence for $\Theta$ fixed, the probability that $\sup\{\si(p):p<\Theta\}\ne\inf\{\si(p):p>\Theta\}$ is zero since none of the individual coordinates of the monotone coupling can flip at $\Theta$.  Further, for any value of $\Theta$ fixed in advance, $\bar{\si}:=\sup\{\si(p):p<\Theta\}$ is a Bernoulli-$\Theta$ process. Since $\si^1$ is sandwiched between $\sup\{\si(p):p<\Theta\}$ and $\inf\{\si(p):p>\Theta\}$, the foregoing discussion implies  that $\si^1=\si(\Theta)$ and $\Theta$ is independent of $\{\si(p)\}$. In other words, $\si^1$ is distributed as a mixture of Bernoulli processes. This proves the theorem. 
\end{proof}
\begin{remark}Surprisingly, perhaps, the standard coupling of Bernoulli processes (i.e.\ the coupling where each site is coupled independently of the others) is not stationary to the Toom process (For example, take a finite system with periodic boundary condition. Then it is straightforward to write a formula for the number of incoming arrows for any given configuration and see that it is not constant). In other words, the process $\si(p)$ which we analyzed during the proof is a collection of Bernoulli processes, coupled by a non-standard monotone coupling. 
\end{remark}

\section{Proof of  Theorem \ref{T:WeakN}}\label{S:WeakN}
\label{S:Bulk}
Recall that we wish to study the process on $\mathbb{N}$ away from 0. For this we consider the coupling process defined on the configuration space $\{\pm 1\}^\N \times \{\pm 1\}^{\Z}$.  We will write $\si=(\si^1(x),\si^2(x))$ for the configuration $ \si \in \{\pm 1\}^\N \times \{\pm 1\}^{\Z}$. We use the notation $\D$, $\D^+$ and $\D^-$ for discrepancies set out at the beginning of \Cref{S:InvZ} with the understanding that \textit{all} vertices $x\leq 0$ host discrepancies at all times.

Throughout this section,  we let $p$ be the unique solution in $(0,1)$ to the equation 
\beq
\label{E:LAP}
\left(\frac{1-p}{p}\right)^2=\frac{\lambda_+}{\lambda_-}\quad.
\eeq

Let $\nu$ be a  probability measure on $\{\pm 1\}^\N\times \{\pm 1\}^{\Z}$ stationary for the coupling process and  with respective marginals $\mu_\infty, \Bp$. We know such measures exist by  \Cref{lem:coupling exists}, see proof in \Cref{sec: invar}. 
(There is in fact a unique such measure, though we do not need to use this explicitly). From the next proposition, \Cref{T:WeakN} follows easily.
\begin{proposition}
\label{L:D-Decay}
With $\nu$ as above, 
\beqs
\lim_{x \rightarrow \infty} \nu(x \in  \mathbf{D} ) =0.
\eeqs
\end{proposition}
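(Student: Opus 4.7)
The plan is to combine a steady-state signed-discrepancy current identity with a weak-limit compactness argument and the classifications \Cref{L:Dominant} and \Cref{T:Station}. First I would compute the $+1$-current in each marginal. Under $\Bp$, a direct Bernoulli calculation gives the current across any bond equal to $\lambda_+ p/(1-p) - \lambda_-(1-p)/p$, which vanishes exactly for our choice of $p$. Under $\pi_\infty$ on $\N$, bulk stationarity forces the current to be constant in $x$, and $\partial_t \pi_\infty(\sigma(1)=+)=0$ gives $\pi_\infty(\sigma(1)=+)=\lambda_-$ so that the site-$1$ current equals $\lambda_+\lambda_- - \lambda_-\lambda_+=0$. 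Hence the signed discrepancy current $J_s(x) := (J^1(x)-J^2(x))/2$ vanishes identically under $\nu$.

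Suppose for contradiction that $c:=\limsup_{x\to\infty}\nu(x\in\mathbf{D})>0$ along a sequence $x_k\to\infty$. I would push $\nu$ forward under $(\sigma^1,\sigma^2)\mapsto(\tau_{-x_k}\sigma^1,\tau_{-x_k}\sigma^2)$, extending the now-shifted $\sigma^1$ beyond $[1-x_k,\infty)$ in a manner consistent with the surrounding $\Bp$-structure (for instance by copying $\sigma^2$), and then extract a translation-invariant weak subsequential limit $\nu^*$ on $\Omega[\Z]^2$ by Tychonoff plus Ces\`aro averaging. The critical step is to verify that $\nu^*$ is invariant for the coupled dynamics on $\Omega[\Z]^2$ in the sense of \Cref{def: existence dynamics}; this uses the finite propagation speed intrinsic to the Poisson construction of \Cref{S:coupling} (mirroring the kind of work deferred to \Cref{sec: invar,S:proofspi}). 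With this in hand, $\nu^*$ has second marginal $\Bp$, first marginal $\mu^*$ satisfying the translation-invariant integrability hypothesis of \Cref{T:Station} (via uniform bounds on $\pi_\infty(l_x)$), signed current still $0$, and $\nu^*(0\in\mathbf{D})=c$.

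Now I would apply \Cref{L:Dominant} to $\nu^*$, giving $\sigma^1\leq\sigma^2$ or $\sigma^1\geq\sigma^2$ a.s.\ globally. Decomposing $\nu^*$ ergodically under spatial translations yields components supported on the three translation-invariant events $\{\sigma^1>\sigma^2\text{ somewhere},\,\sigma^1\geq\sigma^2\}$, $\{\sigma^1<\sigma^2\text{ somewhere},\,\sigma^1\leq\sigma^2\}$, or $\{\sigma^1=\sigma^2\}$. By ergodicity combined with \Cref{T:Station}, each marginal of a component is a single Bernoulli; since $\Bp$ is itself ergodic, the second marginal is $\Bp$ $\lambda$-a.e., and the first is $\mathrm{Ber}_{p_1(\omega)}$ with $p_1(\omega)\geq p$ on the positive sector and $\leq p$ on the negative one (by stochastic dominance, using the classical fact that for exchangeable measures the stochastic order transfers to the de Finetti parameter). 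Combining this with the inherited zero-current identity $\int J(p')\,d\rho(p') = 0$, where $J$ is non-negative on $[p,1]$ and non-positive on $[0,p]$ with unique zero at $p$, the plan is to force $\rho=\delta_p$ on each sector; then both marginals equal $\Bp$, pointwise ordering upgrades to equality $\sigma^1=\sigma^2$ a.s., $\nu^*(0\in\mathbf{D})=0$, and we contradict $c>0$.

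The hard part will be the last implication: the aggregate identity $\int_{[p,1]}J\,d\rho + \int_{[0,p]}J\,d\rho = 0$ allows a priori cancellation between the two sectors, so $\rho=\delta_p$ is not immediate. The resolution I would pursue is a sector-wise current analysis: because the coupled dynamics preserves the ordering $\sigma^1\gtrless\sigma^2$ (the standard monotonicity at the heart of \Cref{L:Dominant}), the restriction of $\nu^*$ to each sector is itself a dynamically invariant joint measure. I would then argue — by a flux-balance argument analogous to that used in the proof of \Cref{L:Discrep1}, taking advantage of the fact that each sector admits only one signature of discrepancy — that the signed current of each sector's marginal must separately vanish, closing the argument. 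This sector-wise vanishing, together with the rigorous invariance transfer to the weak limit in Step~3, will be the two pieces most in need of technical care; a fallback would be to extract spatial decay of correlations for $\pi_\infty$ (using the coupling argument from \Cref{I: correlation decay}, adapted from $\Bp$-stationarity to $\pi_\infty$-stationarity) to directly force $\mu^*$ to be ergodic and hence a single Bernoulli, from which $\mu^*=\Bp$ follows from the zero-current condition alone.
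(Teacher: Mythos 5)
Your approach is genuinely different from the paper's, and I see two gaps that I do not think can be closed as sketched. The paper proves $\nu(x\in\mathbf D)\to 0$ by showing directly that the steady-state discrepancy current $j_x$ itself tends to zero (\Cref{P:Current}); crucially, every quantitative ingredient there — summability of the annihilation rates $a_x$, the diffusive bound \Cref{L:FDD1}, the exponential moment \Cref{L:ExpMom}, and \Cref{lem: decay of density} — is controlled purely on the $\Bp$ side, since by \cref{eq: first calcul h,eq: first calcul j} the flux across bond $\langle 0,1\rangle$ is a function of $\sigma^2$ alone and $\Bp(l_1)<\infty$ suffices. No quantitative control on $\pi_\infty$ is ever required. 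Your weak-limit strategy sacrifices exactly this feature: to invoke \Cref{T:Station} on the first marginal $\mu^*$ of the limit $\nu^*$ you need $\mu^*[l_0]<\infty$, and you propose to get it from ``uniform bounds on $\pi_\infty(l_x)$.'' No such bounds are established in the paper, and obtaining them amounts to already knowing that $\pi_\infty$ has Bernoulli-like block statistics far from the wall — which is essentially the conclusion being proved. Passing the zero-current identity itself to the weak limit suffers from the same uniform-integrability obstruction, since the flux observable involves unbounded block lengths.

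The more decisive problem is the sector-wise flux balance. In the ergodic component where $\sigma^1\geq\sigma^2$ all discrepancies carry a single signature, there are no annihilations, and a translation-invariant stationary coupled state with discrepancy density $p_1-p>0$ carries a strictly positive signed flux; the mirror holds for $\sigma^1\leq\sigma^2$. The aggregate zero-current identity constrains only the weighted average over sectors, so it permits cancellation between the $\geq$ and $\leq$ pieces — precisely the scenario you flag as the hard part. Your proposed remedy, a ``flux-balance argument analogous to \Cref{L:Discrep1},'' does not apply: that lemma's mechanism is annihilation of interface discrepancies, which is vacuous once all discrepancies share one sign, and nothing in the dynamics forbids a constant nonzero discrepancy flux in such a sector. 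Finally, the fallback of extracting spatial decay of correlations for $\pi_\infty$ by adapting item \ref{I: correlation decay} is not available: that coupling hinges on the product structure of $\Bp$ (two $\Bp$-samples matched on a half-line and independent elsewhere), and there is no analogous construction starting from $\pi_\infty$.
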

Before discussing \Cref{L:D-Decay} further, let us use it to attend to \Cref{T:WeakN}.
\begin{proof}[Proof of \Cref{T:WeakN}]
Consider a local function $f$ and recall the space shifts $\tau_x$ (defined on page \pageref{T:WeakN}). By the definition of $\mathbf{D}$; 
\[
\left|\E_{\mu_{\infty}}[f \circ \tau_x]- \E_{\Bp}[ f\circ \tau_x]\right|\leq \| f\|_{\infty}\sum_{y \in x+ \supp f } \nu(y \in  \mathbf{D} ).
\]
By \Cref{L:D-Decay} the RHS tends to $0$ as $x$ tends to infinity.  But since $\Bp$ is invariant under spatial shifts, this implies the push forward of $\mu_{\infty}$ by $\tau_x$ converges weakly to $\Bp$ as $x$ tends to infinity.
\end{proof}

\begin{proof}[Proof of \Cref{L:D-Decay}]
As in the proof of  \Cref{L:Discrep1}, the crucial quantity that we will analyze is the \emph{discrepancy flow through a point $x$}, which we will again denote by $j_x$. 
Recall that $j_x$ is the averaged infinitesimal rate at which a discrepancy in $(-\infty,x]$ jumps to $(x,\infty)$. Let us note a few simple properties of $j$. First,
$
\nu(x\in\mathbf{D})\le j_{x}
$
because whenever there is a discrepancy at $x$ it jumps to $(x,\infty)$
with rate at least the rate of the Poisson arrivals at $x$. 
Next, note that $j_x$ is always finite: both $\sigma^1$ and $\sigma^2$ satisfy the local \conda, $\sigma^1$ because $l_x(\si^1)\le x$ and in particular is finite, while $\si^2$ is a Bernoulli process. 
By  \Cref{lem:ja finite}, this implies that $j_x$ is finite.

Next, as  in the proof of  \Cref{L:Discrep1},  let $a_x$ be the rate of annihilations at $x$, i.e.\ the averaged infinitesimal rate at which a $+$ discrepancy at some $y<x$ moves on a $-$ discrepancy at $x$ causing both to disappear, or vice versa.  Also let us recall \cref{eq:ja} i.e.\ 
$j_{x-1}-j_x=a_x$.
Let $k$ be some fixed parameter and let $x\in\mathbb{N}$. One of the following
three events must occur:
\begin{enumerate}
\item The interval $I_{x}:=[\max(0,x-k^{2}),x]$ contains two discrepancies
with opposing signs.
\item The last $k$ discrepancies before $x$ are of the same sign (and
case 1 did not happen).
\item The last $k$ discrepancies before $x$ are not of the same sign (and
case 1 did not happen)
\end{enumerate}
We divide the set of discrepancies $\mathbf{D}$
and the infinitesimal rates
of discrepancy flow $j_{x}$  into three parts accordingly, $\mathbf{D}=\mathbf{D}^{1}\cup\mathbf{D}^{2}\cup\mathbf{D}^{2}$ 
and $j_{x}=j_{x}^{1}+j_{x}^{2}+j_{x}^{3}$. Thus, for example,\label{page:j123}
\[
\mathbf{D}^{1}=\{x\in\mathbf{D}:I_{x}\cap\mathbf{D}^{+}\ne\emptyset,\; I_{x}\cap\mathbf{D}^{-}\ne\emptyset\}
\]
and similarly for the other $\mathbf{D}^{i}$. Let us also give an example
with $j$: $j_{x}^{1}$ is the averaged infinitesimal rate that discrepancies jump from $(-\infty,x]$
to $(x,\infty)$ while there are two discrepancies of opposite signs
in $I_{x}$ (we are not bothered about whether the discrepancy that
performed the jump is in fact one of those).

Let us isolate the estimates
used to control $\mathbf D$ and $j$ in two lemmas.   We note first of all that 
\[
\nu(x\in\mathbf{D}^{i})\le j_{x}^{i}\qquad i=1,2,3.
\]
following the same argument we used above to show that $\nu(x\in\mathbf{D})\le j_x$.  This means that our control on $\mathbf{D}^i$ in these lemmas follows from our control on $j^i$.

\begin{lemma}
\label{lem:ED1}For every $k$ there exists a $C_{k}$ such that $\mathbb{E}|\mathbf{D}^{1}|\le C_{k}$.
Furthermore, 
\[
\sum_{x=1}^{\infty}j_{x}^{1}\le C_{k}.
\]

\end{lemma}
The proof (see below) proceeds by observing that a configuration contributing to $j^1_x$ has
a positive probability to lead to an annihilation. This part of the argument would work equally well with any value of $p$. It is in the next lemma that the specific choice \Cref{E:LAP} becomes crucial.

\begin{lemma}
\label{lem:ED2}There exist numbers $\epsilon_{k}\downarrow 0$ such that
$\nu(|\mathbf{D}^{2}\cap[1,n]|)\le\epsilon_{k}n$. Further, $j_{x}^{2}\le\epsilon_{k}$
for all $x> k$.
\end{lemma}

We defer the proof of both lemmas and instead first show how they imply \Cref{L:D-Decay}. We start with a preliminary claim
\begin{lemma}
\label{lem:averagedD}$\lim_{n\to\infty}\frac{1}{n}\nu(|\mathbf{D}\cap[1,n]|)=0$
\end{lemma}
\begin{proof}
Fix $k\in \N$ and decompose $\D$ into the $3$ parts using this $k$.  By the ``$\mathbf{D}$ parts'' of
\Cref{lem:ED1,,lem:ED2} we see that 
\[
\nu\big(\big|(\mathbf{D}^{1}\cup\mathbf{D}^{2})\cap[1,n]\big|\big)
\le C_{k}+n\epsilon_{k}.
\]
As for $\mathbf{D}^{3}$, the conditions of case 3 imply that the
interval $[\max(0,x-k^{2}),x]$ contains less than $k$ discrepancies.
It follows from this that $|\mathbf{D}^{3}\cap[1,n]|\le n/k+k^2$.  This implies
\[
\limsup_n (1/n) \nu(|\D\cap [1, n]|)\leq \eps_k+1/k.
\] Taking
$k\to\infty$, the lemma is proved.
\end{proof}
To move from the averaged result of  \Cref{lem:averagedD} to
the non-averaged result stated in \Cref{L:D-Decay}, we argue as follows. Let $\epsilon>0$. First
fix some $k$ sufficiently large such that the $\epsilon_{k}$ from
 \Cref{lem:ED2} satisfies $\epsilon_{k}<\epsilon$. By 
 \Cref{lem:ED1} we can choose $x_{0}$ such that for all $x>x_{0}$
one has that $j_{x}^{1}<\epsilon$. Next, let $M$ be some parameter (to
be fixed shortly and depending only on $k$ and on $\lambda^{+}/\lambda^{-}$).
By  \Cref{lem:averagedD} we can further find some $x>x_{0}$ such that
\begin{equation}
\sum_{y=x-M}^{x}\nu(y\in\mathbf{D})<\epsilon/k\label{eq:def x with M}
\end{equation}
(we assume here that $x>M$ and, while we are at it, also $x>k^{2}$). 

We will
now show that $j_{x}<C\epsilon$.
Since $j_{x}^{1}$ and $j_{x}^{2}$ are already given to us as satisfying such a bound we need
only estimate $j_{x}^{3}$. For this purpose we note that $\sigma^{2}$
is a Bernoulli process, so there are constants $c, C, C'>0$, such that one cannot create an interval
of consecutive equal signs larger than $Ck$ before $x$ by
changing the sample $\si^2$ at less than $k$  locations, except  with probability bounded by $C'e^{-ck}$.  Fix the
parameter \emph{$M$} from the last paragraph to be this $Ck$ and
denote this event by $G_{k}$. Split $j_{x}^{3}=j_{x}^{3.1}+j_{x}^{3.2}$, where
 $j_{x}^{3.1}$
denotes the rate at which discrepancies pass from $(-\infty,x]$ to $(x,\infty)$
by arrivals in $[x-M,x]$ and $j_{x}^{3.2}$ is the remainder. It
follows that
\begin{equation}
j_{x}^{3.1}\le(M+1)\sum_{y=x-M}^{x}\nu(y\in\mathbf{D}).\label{eq:jx31}
\end{equation}
because contributions to $j_{x}^{3.1}$ require at least one discrepancy in $[x-M,x]$ and also require a Poisson arrival in this interval.  The factor $M+1$ in the estimate comes as an upper bound on the latter rate. Using \Cref{eq:def x with M}
and the fact that $M=Ck$ we get $j_{x}^{3.1}\le C\epsilon$.

To bound $j_{x}^{3.2}$ notice that a spatial interval such that both configurations have a pair of particles of opposing
signs acts as a barrier for discrepancies.
Hence if an arrival ocurred at some $y<x$ and pushed a discrepancy
beyond $x$, either $\sigma_{1}$ or $\sigma_{2}$ must have
constant sign between $x$ and $y$. As we are in case $3$ and two discrepancies
of opposing signs constitute a barrier to discrepancy motion, the discrepancies between 
between $x$ and $y$ must be of the same type and there must be less than $k$ of them. We conclude that $\sigma_{2}$
must have less than $k$ particles of some kind. This allows to bound
$j_{x}^{3.2}$ by examining $\sigma_{2}$ only, and $\sigma_{2}$
is a Bernoulli process. 

The probability that $\sigma_{2}$ restricted
to $[y,x]$ has less than $k$ particles of some kind can be bounded
above roughly by $C\exp(-ck-c(x-y)/k)$ --- recall the definition
of $M$.  We get, by definition of $j_{x}^{3.2}$, that
\[
j_{x}^{3.2}\le C\sum_{y=-\infty}^{x-M-1}e^{-ck-c(x-y)/k}\le Ce^{-ck}.
\]
Taking everything together we get
\[
j_{x}\le C\epsilon+Ce^{-ck}.
\]
and since we assumed $k$ is sufficiently large (depending on $\epsilon$)
we may incorporate the $Ce^{-ck}$ into the other term and conclude,
as promised, that $j_{x}\le C\epsilon$.

\Cref{L:D-Decay} is now proved because $j_{x}$ are decreasing, by \Cref{eq:ja}. We get
that for all $y>x$ $j_{y}\le C\epsilon$ and since $\epsilon$ was
arbitrary $j_{y}\to0$. As already remarked $\nu(x\in\mathbf{D})\le j_{x}$
so we also get $\nu(x\in\mathbf{D})\to0$.
\end{proof}

\begin{proof}[Proof of  \Cref{lem:ED1}]
An important corollary of \Cref{eq:ja} is that $j_x$ is decreasing and that $\sum a_x<\infty$. The proof of the lemma then follows by comparing $j_x^1$ and $a_x$. $j_x^1$ is the rate at which discrepancies flow through $x$ while there are two discrepancies with opposing signs in the interval $I_x=[\max(0,x-k^2),x]$. This couple of discrepancies acts as a barrier, so any arrival that pushed a discrepancy beyond $x$ must be in $I_x$. Hence
\begin{equation}\label{eq:jx1}
j_x^1\le|I_x|\cdot\nu(\mbox{there are two discrepancies of opposing signs in $I_x$}).
\end{equation}
In the language of   \Cref{eq: in the eye of the hunter}, this event implies that for some $y,z\in I_x$, the event $E_{y,z}$ occurred. Applying  \Cref{eq: in the eye of the hunter} gives
\[
j_x^1
\stackrel{\textrm{\eqref{eq:jx1}}}{\le}
k^2\sum_{y,z\in I_x}\nu(E_{y,z})\le \frac{k^2}{c(k)}\sum_{y,z\in I_x}\sum_{w=y}^z a_w
\le C(k)\sum_{w\in I_x}a_w.
\]
The claim of \Cref{lem:ED1} follows from $\sum a_x<\infty$.
\end{proof}
\subsection{Proof of  Lemma \ref{lem:ED2}}
As already mentioned, it is in this part of the proof that the relation between $p$ and $\lambda$ is used. We start by indicating the reason for this relation.  First denote by $H_x^\pm(t)$ the total number of $\pm$ discrepancies which jumped from $(-\infty,x]$ to $(x,\infty)$ between time 0 and time $t$. Let $H_x(t)=H_x^+(t)+H_x^-(t)$ be the total flow and let $K_x(t)=H_x^+(t)-H_x^-(t)$ be the signed flow through $x$. We will only need the signed flow at 0 (where it is a function of the Bernoulli process alone) and so abbreviate $K(t)=K_0(t)$.
\begin{lemma}If $((1-p)/p)^2=\lambda_+/\lambda_-$ then $\E[K(t)]=0$.
\end{lemma}
\begin{proof}This is a straightforward calculation: the infinitesimal rate at which $+$ particle enter $\N$ is $\la_+l_1(\si^2)\mathbbm{1}\{\si^2(0)=1\}$ (recall that $l_1$ is the length of the block of identical spins from $0$ to its left). Since $\si^2$ is a Bernoulli-$p$ process, $\E[l_1(\si^2)\cdot\mathbbm{1}\{(\si^2(0)=1\}]=p/(1-p)$. The same holds for the $-$ particles and we get $\E(K(t))=\frac{p\la_+}{1-p}-\frac{(1-p)\la_-}{p}$.
\end{proof}
We next need two preliminary results which are interesting in their own right.
It is possible to show that $K$ satisfies a functional CLT under proper rescaling, see \cite{CKR2}, but for our current purposes, the following diffusive bound suffices. Its proof is supplied in \Cref{S:proofspi}.
\begin{lemma}
\label{L:FDD1}
There exists $C>0$ such that for all $t \in \R$,
\[
\E_{\Bp}[(K(t)-K(0))^2] \leq C t.
\]
\end{lemma}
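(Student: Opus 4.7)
The plan is to decompose $K(t)$ into a martingale plus an absolutely continuous drift and bound the variances of the two contributions separately. By \Cref{eq: first calcul h}, the counting process $H^{\eta}_1$ has $\caF_t$-stochastic intensity $\la_\eta \Phi^\eta(\si^{2}_t)$, where
\[
\Phi^\eta(\sigma) := \sum_{y \leq 0} \chi^{\eta}_{[y,0]}(\sigma) = l_1(\sigma)\,\mathbf{1}\{\sigma(0)=\eta\}.
\]
Thus $M^\eta(t) := H^{\eta}_1(t) - \int_0^t \la_\eta \Phi^\eta(\si^{2}_s)\,\d s$ is an $\caF_t$-martingale, and because the thinned Poisson processes $N_{y,+}$ and $N_{y,-}$ are disjoint, $M^+$ and $M^-$ share no jumps and are orthogonal. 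Write
\[
K(t) = M(t) + A(t), \qquad M := M^+ - M^-, \qquad A(t) := \int_0^t f(\si^{2}_s)\,\d s, \qquad f := \la_+ \Phi^+ - \la_- \Phi^-.
\]
A direct calculation, cf.\ \Cref{eq: first calcul j}, shows that the choice of $p$ made just before the lemma is precisely $\Bp(f) = 0$.

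The martingale piece is immediate: orthogonality, the formula for the predictable quadratic variation of a counting martingale, and stationarity of $\Bp$ give
\[
\E_{\Bp}[M(t)^2] = \E_{\Bp}\bigl[[M^+]_t + [M^-]_t\bigr] = t\bigl(\la_+ \Bp(\Phi^+) + \la_- \Bp(\Phi^-)\bigr) \leq Ct,
\]
since $\Bp(l_1) < \infty$. For the absolutely continuous part, stationarity of $\si^2$ under $\Bp$ yields
\[
\E_{\Bp}[A(t)^2] = 2\int_0^t (t-u)\,C(u)\,\d u, \qquad C(u) := \E_{\Bp}\bigl[f(\si^{2}_0)\,f(\si^{2}_u)\bigr],
\]
so the bound $\E_{\Bp}[A(t)^2] \leq Ct$ reduces to showing that $\int_0^\infty |C(u)|\,\d u < \infty$.

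The main obstacle is that $f$ is \emph{not} a local function (it depends linearly on $l_1$), so \Cref{thm: correlation decay} does not apply directly. I would handle this by truncation: for each $L \geq 1$, let $\tilde f_L$ denote the $\Bp$-centered version of $f(\sigma)\,\mathbf{1}\{l_1(\sigma)\leq L\}$, which is local with support in $[1-L, 0]$. Since $|f(\sigma)| \leq (\la_+ + \la_-)\,l_1(\sigma)$ and run lengths under $\Bp$ are geometric with parameter $\max(p,1-p) < 1$, a routine tail estimate gives $\|f - \tilde f_L\|_{L^2(\Bp)} \leq C\,e^{-c_1 L}$. Writing $f = \tilde f_L + (f - \tilde f_L)$, applying \Cref{thm: correlation decay} to the local centered piece (with interval length $r = L$), and using Cauchy--Schwarz together with stationarity to bound the three remaining cross terms by $\|f - \tilde f_L\|_2\,\|f\|_2$, one obtains
\[
|C(u)| \leq C_2\,e^{\,L - c_2 u} + C_3\,e^{-c_1 L}.
\]
Optimizing over $L$ (for example $L = \lfloor c_2 u / 2 \rfloor$) yields $|C(u)| \leq C\,e^{-c' u}$ for some $c' > 0$, which is integrable.

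Combining, $\E_{\Bp}[K(t)^2] \leq 2\bigl(\E_{\Bp}[M(t)^2] + \E_{\Bp}[A(t)^2]\bigr) \leq Ct$, and by stationarity of $\si^2$ under $\Bp$ the same bound extends to the increment $\E_{\Bp}[(K(t)-K(0))^2]$. The only delicate point is the non-locality of the intensity $f$, which is controlled once one notes that $l_1$ has exponential tails under $\Bp$; everything else reduces to standard counting-martingale bounds and the correlation decay already proved in \Cref{thm: correlation decay}.
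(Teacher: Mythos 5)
Your proposal is essentially the paper's proof: the paper also splits $\E_{\Bp}[K(t)^2]$ into a compensator/quadratic-variation term $\E[\int_0^t g(\si_s)\,ds]$ (your $\E[M(t)^2]$) plus a drift term $\E[(\int_0^t f(\si_s)\,ds)^2]$ with the same $f$, and then handles the non-locality of $f$ by exactly your mechanism -- truncate $f$ to a local $f_n$ with $\|f-f_n\|_{L^2(\Bp)}\leq C e^{-cn}$, apply Theorem \ref{thm: correlation decay} with $r=n$, and optimize $n\sim t$ to get exponential decay of $\E[f(\si_0)f(\si_s)]$. Your version is slightly more careful in two cosmetic ways (you bound $\E[K^2]\leq 2\E[M^2]+2\E[A^2]$ rather than asserting the cross term vanishes, and you explicitly re-center the truncation so that Theorem \ref{thm: correlation decay} is applicable), but the decomposition, the key estimate, and the lemma invoked are the same.
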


 We also need a bound on the total flux of discrepancies across a fixed vertex for the process on $\{\pm 1\}^\Z$.
Set $\NN([t,s]) =H_0(t)-H_0(s)$. 
\begin{lemma}
\label{L:ExpMom}
For  sufficiently small  $\gamma$,
\[
\E_{\Bp}[\exp(\gamma \caN(I)/\str I \str)] \leq  C, \qquad \text{for any $I$}.
\]
In particular 
\beq \label{eq: crude current}
\bbP_{\Bp} (\NN(I)  \geq N )  \leq C(m)  (\str I \str/{N})^{m}
\eeq
\end{lemma}
The proof is also deferred to \Cref{S:proofspi}.

To explain the relation between $j_x^2$ and $K$ we need the following definition. For a partition $\pi$ of $[0, t]$ into intervals, let 
\[
\KK(\pi)=\sum_{I \in \pi}\str \Delta K (I)  \str \mathbbm 1\{ \str \Delta K (I)  \str \geq k\}
\]
where  $\Delta K(I):= K(\sup I)-K(\inf I)$ and let
\beqs
\KK^*(t)  = \KK^*(k,t)=\sup_{\pi}  \KK(\pi).
\eeqs
With these definitions we can now claim
\begin{lemma}For all $x\geq 1$ and $t$, $H_x^2(t)\le x+\KK^*(t)$.
\end{lemma}
Here $H_x^2(t)$ is the number of discrepancies of ``type 2'' which passed through $x$ until time $t$, using the same classification of discrepancies into 3 types we used on page \pageref{page:j123} to define $j_x^2$.
\begin{proof}
The $x$ term in the lemma is a crude bound for ``original'' discrepancies i.e.\ for discrepancies which existed at time 0 in $[1, x]$. We ignore these discrepancies and order the others by their time of crossing $x$. 

In this proof it will be convenient to think about discrepancies as having a fixed order. Examine a discrepancy. When it moves (to the right only!), it can collide with the first discrepancy to its right.  If the second discrepancy has the opposite sign, the two annihilate leaving behind two vertices each having spin agreement.  If the two discrepancies have the same sign we shall use the interpretation that the first one takes the place of the second one, and the second starts moving. This can create a chain reaction, but we note that as soon as there is an annihilation, the process stops.  In any case, \textit{the order of discrepancies never changes}.   

Recall the notation $I_x=[\min(0, x-k^2), x]$.  Let us now use this interpretation to examine a stretch (in time) of $\ell$ discrepancies in $\mathbf{D}_2$ contributing to $H^2_x(t)$, say with sign $+$ for concreteness. The clock ring precipitating a jump contributing to $H^2_x(t)$ does not have to occur in $I_x$ and could come from a particle of either sign.   However, the chain reaction which occurs can only involve the discrepancies with sign $+$.  This follows from our definition; just before the jump occurred, the last $k$ discrepancies to the left of $x$ were of the same sign. 
Since discrepancies of different types do not propagate motion when they collide, we conclude that the discrepancies of $I_x$ must be $+$ discrepancies.

Hence a stretch of $\ell$ discrepancies of type 2 crossing $x$  corresponds to $\ell+k-1$ discrepancies of sign $+$, the $\ell$ discrepancies which crossed as well as another (at least) $k-1$ which remain at the time of the last crossing. Let $a$ be the time when the first one entered the system (from $(-\infty, 0]$) and $b$ the time when the last one did.  Such a labeling makes sense because of our interpretation of the discrepancy dynamics and since discrepancies only move to the right in $[1, \infty)$.

We claim that $K(b)-K(a)= \ell +k-1$. This relies on the fact that signed discrepancy sum is preserved by our dynamics: discrepancies survive until they are annihilated in pairs of $+$ and $-$. So if we see $\ell+k-1$ consecutive $+$ discrepancies at some space-time point $(x, t)$, we must have started with a signed sum of $\ell+k-1$. (the term ``consecutive'' might be slightly misleading here, since some of them are consecutive in the time of crossing of $x$ and others are consecutive in space at a given time, but for the claim on $K(b)-K(a)$ this does not matter).

Finally, different stretches of discrepancies of type 2 crossing $x$ must correspond to different time intervals $[a,b]$. Indeed, examine the last $+$ discrepancy in one stretch and the first $-$ discrepancy in the next. Being of different sign they cannot cross one another without annihilating, and since we know the $+$ arrived at $I_x$ before the $-$, the $-$ must have started after it.

Thus the stretches of discrepancies of type 2 crossing $x$ form disjoint collection of subintervals of $[0,t]$, each of which has $|\Delta K(I)|\ge k$. Completing this collection to a partition $\pi$, the lemma is proved.
%
%
\end{proof}

 \Cref{lem:ED2} will thus be proved once we show
\begin{lemma}
\label{L:LargeFluc}
\beqs
\limsup_{k\to\infty}  \limsup_{t\to\infty} \frac{1}{t}\E_{\Bp}[ \KK^*(k,t)]  = 0.
\eeqs
\end{lemma}
As before, we have written $\E_{\Bp}$ instead of $\E$ since $K$ depends only on $\si^{2}$.
\begin{proof}
The proof of this lemma relies on a separation of scales.  
For each partition $\pi$ of $[0, t]$, let us split the sum $\KK(\pi)$ according to interval sizes:
\begin{align}
\KK(\pi) &=\underbrace{\sum_{I\in \pi: |I|\leq k^{1/2}}   |\Delta K(I) | \mathbbm{1}\{ \str \Delta K(I)  \str \geq k\}}_{\KK^1(\pi)} + 
\underbrace{\sum_{I \in \pi: |I|  > k^{1/2}} |\Delta K(I) | \mathbbm{1}\{ |\Delta K(I)   \str \geq k\} }_{\KK^2(\pi)}
\end{align}
We will use separate mechanisms to bound each of $\KK^1$ and $\KK^2$ uniformly in $\pi$.

Let us first attend to $\KK^1(\pi)$.  The idea here is simple:  since a contributing interval $I$ is small relative to $k$, it necessitates too many Poisson arrivals, at least $k$, in $I$.  This is a rare event, and gets exponentially rarer as $|I| \rightarrow 0$.  This fact allows us to handle all partitions simultaneously via a properly chosen infinite covering of $[0, t]$.

For each $j \in \mathbb Z$ let 
\[
X_j=  \sum_{i=0}^{\lfloor t2^{-j}\rfloor} \NN([i2^j,(i+1)2^j])\mathbbm 1\{ \NN([i2^j,(i+1)2^j]) \geq k/2 \}.
\]
Then for any $\pi$, exploiting that $\NN(I)$ is the total variation of $K(I)$,
\[
\sup_{\pi}  \KK^1(\pi) \leq 2\sum_{j \leq \log_2( k^{1/2})+1}  X_j.
\]
Taking expected values on both sides and using stationarity
\begin{equation}
\label{E:Small}
\frac 1t \E_{\Bp}\left[\sup_{\pi} \KK^1(\pi)\right] \leq \sum_{j \leq 
\log_2(k^{1/2})+1} 2^{1-j} \E_{\Bp}\left[\NN([0, 2^j])\mathbbm{1}\{ \NN([0, 2^j]) \geq k\}\right].
\end{equation}
Using \Cref{eq: crude current}, we find that for $k$ sufficiently large, the RHS of \Cref{E:Small} is summable and that, moreover, it tends to $0$ as $k$ tends to $\infty$. 

To bound $\KK^2(\pi)$ let us introduce a reference partition $\rho=\{[ik^{1/4},(i+1)k^{1/4}]\}$ of $[0, t]$ (shorten the last interval if necessary). Let $\rho'\subset\rho$ be the collection of intervals which contain an endpoint of some interval from $\pi$. Then we can write 
\[
\KK^2(\pi)\le
\underbrace{\sum_{I\in \rho}|\Delta K(I)|}_{\mathrm I}
+2\underbrace{\sum_{I\in\rho}\caN(I)\mathbbm{1}\{\caN(I)\ge k^{1/3}\}}_{\mathrm{II}}
+2\underbrace{\sum_{I\in\rho'}\caN(I)\mathbbm{1}\{\caN(I)< k^{1/3}\}}_{\mathrm{III}}.
\]
Term $\mathrm{ III}$ is bounded by $ 4t k^{-1/2} k^{1/3}$ since the total number of intervals $I\in \pi$ with $|I| \geq k^{1/2}$ is at most $2 tk^{-1/2}$. 
For the other two terms, we take the expectation and use the fact that they no  longer depend on $\pi$. 
For $\mathrm{ II}$, the bound \Cref{eq: crude current}, yields
$$
\E_{\Bp} [\mathrm{ II}]   \leq   2 tk^{-1/4} \times   C(m) (k^{1/4}/k^{1/3})^m    
$$
which vanishes as $k\to \infty$ by choosing $m$ sufficiently large.  For the first term, we argue
\begin{align}
\label{E:K}  \E_{\Bp} [\mathrm{I}] \leq t k^{-1/4}   (\E_{\Bp}[|K(k^{1/4})-K(0)|^2])^{1/2} \leq C t  k^{-1/8}
\end{align}
where the  first inequality is  by stationarity and Cauchy-Schwarz, and the second follows from the diffusive moment estimate of  \Cref{L:FDD1}.
Hence we have obtained $$\limsup_{k \to \infty}  (1/t) \E_\Bp\sup_{\pi} \KK^2(\pi) =0,   $$
Combining with the analogous bound on $\KK^1 $, the assertion of the lemma follows.
\end{proof}

\section{Auxiliary results}\label{S:auxiliary}
\subsection{Proofs of  lemmas \ref{L:FDD1} and \ref{L:ExpMom}.}
\label{S:proofspi}

\begin{proof}[Proof of  \Cref{L:FDD1}]
Let $f$ be the infinitesimal drift of $K$ i.e.\ $f(t)=\si(0)\lambda_{\si(0)}l_1(\si)$ (in other words, if $\si(0)=1$ then $f=\lambda_+l_1$ and otherwise $f=-\lambda_-l_1$). Using the Cauchy-Schwarz inequality,
\[
\E\big[K(t)^2\big]\le 
2\E\Big[\Big(K-\int_0^tf(\si_s)\,\d s\Big)^2\Big]+
2\E\Big[\Big(\int_0^tf(\si_s)\,\d s\Big)^2\Big].
\]
(recall that $K$ and hence $f$ are functions of the Bernoulli process, so all expectations are likewise with respect to it). The first term is a martingale, so its expected square is its quadratic variation.   By direct calculation this is $\E\left[\int_0^t\lambda_{\si(0)}l_1(\si)\right]$, which is bounded by $Ct$. For the second term, we use stationarity to bound it by
\[
2t \int_0^{\infty}  \str \E[f(\sigma_s)f(\sigma_0)] \str\,\d s.
\]
We shall show $\E_{\Bp}[f(\sigma_s) f(\sigma_0 )]$ decays exponentially in $s$ to complete the proof.  

We want to apply \Cref{thm: correlation decay}.  One slight complication is that the function $f$ has unbounded support.  To handle this issue,  let $f_n$ be the approximation to $f$ above by replacing $l_1$ with $\min\{l_1,n\}$, which is a local function. 
Then, by inspection, 
\[
\E[(f-f_n)^2] \leq C e^{-cn} 
\]
so that, for any $n$, Cauchy-Schwarz yields
\[
\left|\E[f(\sigma_s)f(\sigma_0)]-\E[f_n(\sigma_s)f_n(\sigma_0)]\right| \leq Ce^{- cn/2}.
\]
The autocorrelation of $f_n$ is handled by \Cref{thm: correlation decay}, with $r=n$, so that 
\[
\left|\E[f(\sigma_t)f(\sigma_0)] \right| \leq C (e^{-cn/2} +   \e^{n - ct})
\]
and, choosing $n=\frac12 ct$, we get 
\[
\left|\E[f(\sigma_t)f(\sigma_0)]\right|\leq C\e^{-ct}.
\]
  \Cref{L:FDD1} follows.
\end{proof}

\begin{proof}[Proof of  \Cref{L:ExpMom}]
Recall that we wish to estimate $H=H_0(t)$, the number of particles entering $\N$ in $[0,t]$. We define $H^L(t)$ to be the number of particles which enter from $(-L,0)$, and note that $H^L\nearrow H$ so by monotone convergence $\E(\exp(\gamma H^L))\to\E(\exp(\gamma H))$.

Next denote the drift of $H^L$ by $v=v^L(t)$ i.e.
$$
v^L_t:=
\int_0^t  \min(l_1(\si_s),L)\,\d s. 
$$
We first find an estimate on $\E(\exp(\alpha v))$.
We Taylor expand the exponential 
\[
\E(\e^{ \alpha v} )=\sum_{k=0}^\infty \frac{\alpha^k}{k!}\E(v^k),
\]
integrate each term and use H\"older's inequality and stationarity to get 
\[
\E(v^k)\le\int_0^t\dotsb\int_0^t\E(l_1(\si_{s_1})\dotsb l_1(\si_{s_k}))\,\d s_1\dotsb\d s_k \le t^k\E(l_1(\si_{0})^k)
\]
so 
\begin{equation} \label{eq: bound predict}
\E(\e^{ \alpha v} )
\leq   \E_\Bp[\e^{\alpha t l_1}].
\end{equation}
which is finite for  $\alpha< c(p)/t$. 

Next, one verifies by direct computation (omitted here) that, for any $\beta>0$,
\[
Z^L_{\beta}(t) :=\exp\left\{ \beta H^L(t)-  (e^{\beta}-1)v^L_t \right\}
\]
is a martingale.   To exploit this, fix $\beta\in \R$ and set $\al=(1/2)(e^{2\beta}-1)$ and write
\[
\E_{\Bp}[\exp(\beta H^L)]=\E_{\Bp}\left[\exp(\beta H^L- \al v) \exp(\al v)\right].
\]
Applying the Cauchy-Schwarz inequality and noting that $2\al=e^{2\beta}-1$
\[
\E_{\Bp}[\exp(\beta H^L(t))] \leq \E_{\Bp}[Z_{2\beta}(t)]^{1/2} \E_{\Bp}[\exp(2\al v)]^{ 1/2}.
\]
Since the first term on the RHS evaluates to $1$ (being the expectation of a martingale), the lemma follows from \cref{eq: bound predict} and $\E(\exp(\al H^L))\to\E(\exp(\al H))$ 
upon choosing $\beta= \ga/t$ and noting that if $\ga$ is sufficiently small then $2\al t$ is small enough to ensure that the right hand side of \Cref{eq: bound predict} is finite.
\end{proof}


\subsection{Existence of a stationary coupling}\label{sec: invar}
Our goal in this section is to prove  \Cref{lem:coupling exists}. Recall that it stated that if $(\mu^1,F^1)$ and $(\mu^2,F^2)$ are two stationary, regular Toom processes and if $\nu$ is any coupling of them, then 
any subsequential weak$^*$-limit of $\frac 1T\int_0^T\nu_t$ is a stationary coupling of $(\mu^1,F^1)$ and $(\mu^2,F^2)$. Here $\nu_t$ is the result of applying the coupling to $\nu$ for time $t$. As this is the only place in the paper where regularity of Toom processes is used, let us recall its definition (\Cref{def:regular} on page \pageref{def:regular}): a Toom process $(\mu,F)$ is called regular if $F_t$ can be written as the $\mu\times\Pr$-limit in measure of functions $F_t^L$ such that $F_t^L(\cdot,\omega)$ is continuous for almost all $\omega$.
\begin{proof}As the lemma is used both to construct a coupling between two processes on $\Z$ (in \Cref{S:InvZ}) and between a process on $\Z$ and a process on $\N$ (in \Cref{S:WeakN}), let us denote by $\Xi$ the corresponding state space i.e.\ $\big(\{\pm 1\}^\Z\big)^2$ or $\{\pm 1\}^\N\times\{\pm 1\}^\Z$, as the case may be. Let $T_n$ be the sequence over which we assume said measures converge weakly$^*$ and denote
\[
\lambda_n=\frac{1}{T_n}\int_0^{T_n}\nu_t\,\d t\qquad \lambda_\infty=\lim\lambda_n.
\]
Fix some $t$. By the definition of regularity, we may write $F_t^1$ as a limit in measure of functions $F_t^{1,L}$ such that $F^{1,L}_t(\cdot,\omega)$ is continuous for almost all $\omega$. Repeat for $F^2$. Let us define the corresponding operators on probability measures on $\Xi$ i.e.
\begin{align*}
P_t\lambda(f)&=\int f(F_t(\eta^1,\eta^2,\omega))
\,\d\lambda(\eta^1,\eta^2)\,\d\mathbbm P(\omega)&&\forall f:\Xi\to\R \textrm{ continuous}\\
P_{t}^L\lambda(f)&=\int f(F^L_t(\eta^1,\eta^2,\omega))
\,\d\lambda(\eta^1,\eta^2)\,\d\mathbbm P(\omega)&&\forall f:\Xi\to\R \textrm{ continuous.}
\end{align*}
where here and below we denote
\[
F_t(\eta^1,\eta^2,\omega)=(F_t^1(\eta^1,\omega),F_t^2(\eta^2,\omega))
\]
and ditto for $F^L$.
For conciseness, fix $t$ and remove it from the notations $P_t$ and $P_{t}^L$. We now fix some continuous $f:\Xi\to\R$ and write
\begin{align*}
\lefteqn{P\lambda_\infty(f)-\lambda_\infty(f)=}\qquad&\\
&(P\lambda_\infty(f)-P^{L}\lambda_\infty(f))
+(P^L\lambda_\infty(f)-P^L\lambda_n(f))
+(P^L\lambda_n(f)-P\lambda_n(f))\\
&+(P\lambda_n(f)-\lambda_n(f))
+(\lambda_n(f)-\lambda_\infty(f))=\textrm{I}+\textrm{II}+\dotsb+\textrm{V}.
\end{align*}
We will now bound the different terms. We need the following lemma.
\begin{lemma}\label{lem:uniformly in coupling}For every continuous $f:\Xi\to\R$ and every $\eps>0$ there exists an $L$ such that
\begin{equation}\label{eq:FFL}
\int |f(F^L_t(\eta^1,\eta^2,\omega))
-f(F_t(\eta^1,\eta^2,\omega))| \,\d\lambda(\eta^1,\eta^2)\,\d\mathbbm P(\omega)\le\eps
\end{equation}
for any $\lambda$ with marginals $\mu^1$ and $\mu^2$ and for every $t>0$.
\end{lemma}
\begin{proof}Let $\delta>0$ be some parameter to be fixed later. Let $B^1\subset \Xi\times\varOmega$ be the set
\[
B^1=\{(\eta^1,\eta^2,\omega):|F^{1,L}_t(\eta^1,\omega)-F^1_t(\eta^1,\omega)|>\delta\}
\]
and similarly $B^2$ with $F^2$ instead of $F^1$. Since $F^{1,L}\to F^1$ in measure, and since $B^1$ does not depend on $\eta^2$ or on the coupling, we get that for $L$ sufficiently large $\nu\times\mathbbm P(B^1)<\delta$, and similarly for $B^2$.
Examine now the integral in \Cref{eq:FFL} and write
\[
\int=\int_{B^1}+\int_{B^2}+\int_{\Xi\times\varOmega\setminus(B^1\cup B^2)}.
\]
The first and second terms are each bounded by $\delta\|f\|_{\infty}$ since the measures of $B^i$ are small. The last term is bounded by the modulus of continuity of $f$ i.e.\ by $\max\{|f(\eta)-f(\eta')|:d(\eta,\eta')\le 2\delta\}$. If we pick $\delta$ sufficiently small such that the sum of the three terms is smaller than $\eps$, and the lemma is proved.
\end{proof}
We return to bounding the terms $\mathrm{I},\dotsc,\mathrm{V}$. To bound terms I and III we use  \Cref{lem:uniformly in coupling} to get that $\mathrm{I,III}\le \eps$ whenever $L$ is sufficiently large (depending on $f$ and $\eps$, but independent of $n$). Term V converges to 0 as $n\to\infty$ because $\lambda_n\to\lambda_\infty$ weakly$^*$ and $f$ is continuous. For term II we write
\[
\mathrm{II}=
\int \bigg(\int f(F^L_t(\eta^1,\eta^2,\omega))\,\d\lambda_\infty(\eta^1,\eta^2)
- \int f(F^L_t(\eta^1,\eta^2,\omega))\,\d\lambda_n(\eta^1,\eta^2)\bigg)\,\d\mathbbm P(\omega).
\]
The functions integrated are continuous (for almost every $\omega$), so the inner term converges to 0 for almost every $\omega$ as $n\to\infty$. It is also bounded since $f$ is bounded. By the bounded convergence theorem we get that $\mathrm{II}\to 0$ as $n\to\infty$.

Finally, term IV is bounded by the observation that $P_t\nu_s=\nu_{t+s}$. Let us postpone the proof of this fact (which is just playing with definitions) and write
\begin{align*}
\textrm{IV}&=\frac{1}{T_n}\bigg(P_t\int_0^{T_n}\nu_s\,\d s-\int_0^{T_n}\nu_s\,\d s\bigg)(f)
=\frac{1}{T_n}\bigg(\int_0^{T_n}\nu_{s+t}\,\d s-\int_0^{T_n}\nu_s\,\d s\bigg)(f)\\
&=\frac{1}{T_n}\bigg(\int_{T_n}^{T_n+t}\nu_s\,\d s-\int_0^{t}\nu_s\,\d s\bigg)(f)\le \frac{2t}{T_n}\max |f|\xrightarrow[n\to\infty]{} 0.
\end{align*}
Let us wrap up the calculation. We fix $L$ sufficiently large so that $\mathrm{I}+\mathrm{III}\le\eps$ uniformly in $n$. We take $n\to\infty$ and get $|P\lambda_\infty(f)-\lambda_\infty(f)|\le\eps$. Since $\eps$ was arbitrary, these are actually equal. Since $f$ was an arbitrary continuous function, $P_t\lambda_\infty=\lambda_\infty$. Since $t$ was arbitrary, this is the required stationarity.

We still need to show $P_t\nu_s=\nu_{t+s}$. We write
\begin{align*}
P_t\nu_s(f)&=\int f(F_t(\eta^1,\eta^2,\omega))\,\d\nu_s(\eta^1,\eta^2)\,\d\mathbbm P(\omega)\\
&=\int f(F_t(F_s(\eta^1,\eta^2,\omega'),\omega))\,\d\nu(\eta^1,\eta^2)\,\d\mathbbm P(\omega')\,\d\mathbbm P(\omega)\\
&=\int f(F_t(F_s(\eta^1,\eta^2,\omega),S_s\omega))\,\d\nu(\eta^1,\eta^2)\,\d\mathbbm P(\omega)\\
&=\int f(F_{t+s}(\eta^1,\eta^2,\omega))\,\d\nu(\eta^1,\eta^2)\,\d\mathbbm P(\omega)
=\nu_{t+s}(f).
\end{align*}
The first equality is the definition of $P_t$. The second is the definition of $\nu_s$. The third uses that $F_s$ is $\caF_s$-measurable, so we can replace the two independent Poisson processes $\omega$ and $\omega'$ with a single Poisson process, and use $\omega$ for the first and $S_s\omega$ for the second (recall that $S_s$ are the natural time shifts of the Poisson process). The fourth equality is the semigroup property for $F$. 
\end{proof}


\subsection{Proof of Lemma \ref{lem:quenched generator}}\label{sec: diff coupled1}
Recall that the statement of the lemma is that rates are well-defined for local functions and Toom processes satisfying the local \conda.
We begin this section with a technical lemma.   Let $Q(y,t)$ be the number of times $\sigma_{s}(y)$
changed in the time interval $[0,t]$ and $N_y$ the Poisson process at $y$.
\begin{lemma}
\label{lem:Ql}Let $\sigma$ be a Toom process, $y\in\mathbb{Z}$ and $t>0$. Then
\[
\mathbb{E}Q(y,t)\le\int_{0}^{t}\mathbb{E}l_{y}(\sigma_{s})\, ds  + \E N_y(t)
\]
(possibly in the sense of $\infty\le\infty$).\end{lemma}
\begin{proof}
Let $x<y$ and let $Q(x,y,t)$ be the number of Poisson arrivals in
$x$ in the time interval $[0,t]$ that caused the value of $\sigma(y)$
to change (recall from the definition of a Toom process that every
change in $y$ must correspond to some $x<y$ and some Poisson arrival
at $x$ at time $t$ such that $\sigma_{t^{-}}(x)=\sigma_{t^{-}}(x+1)=\dotsb=\sigma_{t^{-}}(y-1)=-\sigma_{t^{-}}(y)$).
Then
\[
\mathbb{E}Q(y,t)=\sum_{x<y}\mathbb{E}Q(x,y,t)+ \E N_y(t)
\]
and 
\[
\int_{0}^{t}\mathbb{E}l_{y}(\sigma_{s})\, ds=\sum_{x}\int_{0}^{t}\mathbb{E}\mathbbm{1}\{\sigma_{s}(x)=\dotsb=\sigma_{s}(y-1)\}\, ds
\]
where in both cases the change of order of integral, summation and
expectation is justified by positivity of the integrands. Hence the
lemma follows from the next claim.\end{proof}
\begin{claim}
\label{claim:Qlonex}$\mathbb{E}Q(x,y,t)\le\E \int_{0}^{t}\mathbbm{1}\{\sigma_{s}(x)=\dotsb=\sigma_{s}(y-1)\}\, ds$.\end{claim}
\begin{proof}
Denote the integrand on the right hand side by $\chi(s)$. By the
definition of a Toom process, $\left.\sigma_{s}\right|_{[x,y]}$ changes
only finitely many times in the time interval $[0,t]$, almost surely.
Hence $\chi$ is Riemann integrable and
\[
\int_{0}^{t}\chi(s)\, ds=\lim_{\epsilon\to0}\epsilon\sum_{i=0}^{\left\lfloor t/\epsilon\right\rfloor }\chi(\epsilon i).
\]

Next let $t_{1}<\dotsb<t_{k}$ be the Poisson arrivals at $x$ in the time
interval $[0,t]$ and let 
\[
B(\epsilon)=\left|\left\{ i:\exists s\in(t_{i}-\epsilon,t_{i})\mbox{ such that }\left.\sigma_{s^{-}}\right|_{[x,y]}\ne\left.\sigma_{s}\right|_{[x,y]}\right\} \right|.
\]
The condition $\left.\sigma_{s}\right|_{[x,y]}$ changes
only finitely many times in the time interval $[0,t]$ also implies
\[
\lim_{\epsilon\to0}B(\epsilon)=0\qquad\mbox{almost surely}.
\]
Hence by dominated convergence ($B$ is bounded by the number
of Poisson arrivals at $x$), $\lim_{\epsilon\to0}\mathbb{E}(B(\epsilon))=0$.
Defining 
\[
\widetilde{Q}=\left|\left\{ i:\chi(\epsilon i )=1 \text{ and } \exists t_{j}\in[\epsilon i,\epsilon(i+1))  \right\} \right|
\]
we get $Q(x,y,t)\le \widetilde{Q}+ B$ (recall that $Q$ requires also the condition $\sigma_s(y)=-\sigma_s(x)$, which is neglected in $\widetilde{Q}$). Since a 
Toom process has the property that $\sigma_{t^{-}}$ is independent of Poisson
arrivals at $[t,\infty)$ we get 
\[
\mathbb{E}\widetilde{Q}=\E\left[\sum_{i=0}^{\left\lfloor t/\epsilon\right\rfloor }\chi(\epsilon i)\mathbbm{1}\{\exists t_{j}\in[\epsilon i,\epsilon(i+1))\}\right]=\E\left[\sum_{i=0}^{\left\lfloor t/\epsilon\right\rfloor }\chi(\epsilon i)\right](\epsilon+O(\epsilon^{2})).
\]
Since $\E(\epsilon\sum_{i=0}^{\left\lfloor t/\epsilon\right\rfloor }\chi(\epsilon i)-  \int_{0}^{t}\chi(s)\, ds)\to 0 $ (again dominated convergence), the claim follows.\end{proof}

\begin{proof}[Proof of \Cref{lem:quenched generator}, averaged version \Cref{eq:annealed generator}]
Let $x<y$ with $y$ in the support of $f$, and let $\epsilon>0$ and assume for convenience also $\epsilon< \frac 14$.
We wish to define the effect of Toom updates from $x$ to $y$ on $f(\sigma_{\epsilon})$.
Therefore let $t_{1},\dotsc,t_{k}$ be the times of the Poisson arrivals
at $x$ in the time interval $[0,\epsilon]$ and define, for $x<y$,
\[
D(x,y,\epsilon)=\sum_{i=1}^{k}(f(\sigma_{t_{i}})-f(\sigma_{t_{i}^{-}}))\mathbbm{1}\{\sigma_{s}(x)=\dotsb=\sigma_{s}(y-1)\neq \sigma_{s}(y)\}.
\]
Since, for all $z$, a Toom process changes at $z$ only finitely many times
in any time interval and $f$ is local, we get $f(\sigma_{\epsilon})-f(\sigma_{0})=\sum_{x<y}D(x,y,\epsilon)$.
From this we conclude
\begin{equation}
\mathbb{E}(f(\sigma_{\epsilon})-f(\sigma_{0}))=\sum_{x<y}\mathbb{E}(D(x,y,\epsilon))\label{eq:f sum D-1}
\end{equation}
where the exchange of sum and expectation is justified as follows:
Let $Q(y,\epsilon)$ be the number of times $\sigma_{y}$ changes sign
in the time interval $[0,\epsilon]$. Then 
\[
\sum_{x<y}D(x,y,\epsilon)\le2||f||_{\infty}\sum_{z \in \supp(f)}Q(z,\epsilon)
\]
which is integrable by Lemma \ref{lem:Ql} and the fact that the sum
over the $z$ is finite. Using dominated convergence gives (\ref{eq:f sum D-1}).

Moving to the behavior as $\epsilon\to0$, let us first show that
\[
\left.\frac{d}{dt}\mathbb{E}(D(x,y,t))\right|_{t=0}=\mathbb{E}(\mathcal{L}_{x,y}f).\qquad\forall x,y
\]
(where the expectation on the right is with respect to $\sigma_{0}$).
To see this let $E(x,y,\epsilon)$ be the event that for some $z\in[x,y]$
and some $t\in[0,\epsilon]$, $\sigma_{t}(z)\ne\sigma_{0}(z)$.
\begin{claim}
\label{claim:twoPoisson}$|\mathbb{E}(D(x,y,\epsilon))-\epsilon\mathbb{E}(\mathcal{L}_{x,y}f)|\le||f||_{\infty}(6\epsilon\mathbb{P}(E)+\epsilon^{2}).$\end{claim}
\begin{proof}
This is claim is justified by playing around with definitions, but let us do it in
detail nonetheless. Let $E_{1}\subset E$ be the event that for some
$z\in[x,y]$  and some $t\in[0,\epsilon]$, $\sigma_{t}(z)\ne\sigma_{0}(z)$,
and in addition there was no Poisson arrival at $x$ during the time
interval $[0,t]$. Then
\begin{equation}\label{eq:E1E}
\mathbb{E}(D\cdot\mathbbm{1}\{E_{1}\})\le2||f||_{\infty}\epsilon\mathbb{P}(E_{1})\le2||f||_{\infty}\epsilon\mathbb{P}(E).
\end{equation}
The first inequality is due to the fact that after $E_{1}$
happens, it is still necessary to have a Poisson arrival in $[t,\epsilon]$,
which has probability less than $\epsilon$, independently of $E_{1}$.
Even if these events both occur, the maximum effect on the value of
$f$ is $2||f||_{\infty}$.

Another case which is easily dispensed with is the event that there are two or more arrivals at $x$ (denote
the number of arrivals by $k$ and this event by $E_{2}$, so $E_{2}=\{k\ge2\}$).
Then $\mathbb{P}(E_{2})<\epsilon\mathbb{P}(E)$ because after the
first arrival (which changes $\sigma_{x}$ hence is included in $E$)
we still need another arrival, independently. Hence $\mathbb{E}(D\cdot\mathbbm{1}\{E_{2}\})<2||f||_{\infty}\epsilon\mathbb{P}(E).$ 

For the remainder (denote it by $E_{3}=E\cap(E_{1}\cup E_{2})^{c}$), let
\[
G=(\mathcal{L}_{x,y}f)(\sigma_{0})\mathbbm{1}\{k=1\}.
\]
Then
\[
\mathbb{E}(D\cdot\mathbbm{1}\{E_{3}\})=\mathbb{E}(G\cdot\mathbbm{1}\{E_{3}\})
\]
because under $E_{3}$ these are exactly the same variables. Hence
\[
|\mathbb{E}(D\cdot\mathbbm{1}\{E_{3}\})-\mathbb{E}(G)|
\le2||f||_{\infty}\mathbb{P}(\{k=1\}\setminus E_{3})
=2||f||_\infty\mathbb{P}(\{k=1\}\cap E_1)
\le2||f||_{\infty}\epsilon\mathbb{P}(E)
\]
where the equality follows because $\{k=1\}\subset E\setminus E_2$, and the last inequality is as in \Cref{eq:E1E}.
Finally the definition of $\mathcal{L}_{x,y}$ gives $\mathbb{E}(G)=\epsilon e^{-\epsilon}\mathbb{E}\mathcal{L}_{x,y}f$
so $|\mathbb{E}(G)-\epsilon\mathbb{E}\mathcal{L}_{x,y}f|\le\epsilon^{2}$
(recall that we assumed $\epsilon<\frac{1}{4}$). Putting everything
together the claim is proved.
\end{proof}
Differentiability of $\mathbb{E}D(x,y,t)$ is now immediate; we
write 
\[
E=\Big\{\sum_{z=x}^{y-1}Q(z,\epsilon)>0\Big\}
\]
and get from  \Cref{lem:Ql} and Markov's inequality that $\mathbb{P}(E)\le\epsilon C(x,y)$.
Claim \ref{claim:twoPoisson} now gives that $\frac{d}{dt}\mathbb{E}D(x,y,t)=\mathbb{E}\mathcal{L}_{x,y}f$.

To be able to sum the derivatives over $x$ 
we use the assumption that $\mathbbm E(l_x(\sigma_0)^{1+\eta})<\infty$ from the local \conda and Markov's inequality.
We conclude that for any $\delta>0$
and $y\in\mathbb{Z}$ there exists $N=N(y)$ such that 
\begin{equation}\label{eq:stationary_stronger}
\mathbb{E}(l_{y}(\sigma_{t})\mathbbm{1}\{l_{y}(\sigma_{t})>N\})\le\delta\qquad\forall t\in[0,t_{0}].
\end{equation}
Assume also $N>\frac{1}{\delta}$. Applying claim \ref{claim:Qlonex}
to all $x<y-N$ and summing gives
\[
\sum_{x<y-N}\mathbb{E}D(x,y,\epsilon)\le2||f||_{\infty}\int_{0}^{\epsilon}l_{y}(\sigma_{t})\mathbbm{1}\{l_{y}(\sigma_{t})>N\}\, dt\le2||f||_{\infty}\cdot\epsilon\delta
\]
so
\begin{align*}
\varlimsup_{\epsilon\to0}\frac{1}{\epsilon}\sum_{x<y}D(x,y,\epsilon) & \le\sum_{y,x<y-N(y)}\mathbb{E}\mathcal{L}_{x,y}f+2||f||_{\infty}\delta\\
\varliminf_{\epsilon\to0}\frac{1}{\epsilon}\sum_{x<y}D(x,y,\epsilon) & \ge\sum_{y,x\in [y-N(y), y)}\mathbb{E}\mathcal{L}_{x,y}f-2||f||_{\infty}\delta.
\end{align*}
Taking $\delta\to0$ gives the lemma.
\end{proof}
\begin{remark}The proof of \Cref{eq:annealed generator} just given can be strengthened slightly when process is stationary. In this case it is enough to assume $\mathbbm E(l_x(\sigma(0)))<C(x)$ for all $x$ i.e.\ it is not necessary to have $1+\eta$ moments, 1 is enough. The only difference in the proof is the justification of \Cref{eq:stationary_stronger}.
\end{remark}
Let us now give the proof of the conditional version of the last lemma. 

\begin{proof}[Proof of \Cref{lem:quenched generator}, \Cref{eq:quenched generator}]
Let $x<y$ with $y$ in the support of $f$, and let $\epsilon\in(0,\frac14)$.
We wish to define the effect of Toom updates from $x$ to $y$ on $f(\sigma_{\epsilon})$.
Therefore let $t_{1},\dotsc,t_{k}$ be the times of the Poisson arrivals
at $x$ in the time interval $[0,\epsilon]$ and define, for $x<y$,
\[
D(x,y,\epsilon)=\sum_{i=1}^{k}(f(\sigma_{t_{i}})-f(\sigma_{t_{i}^{-}}))\mathbbm{1}\{\sigma_{s}(x)=\dotsb=\sigma_{s}(y-1)\neq \sigma_{s}(y)\}.
\]
Since, for all $z$, a Toom process changes at $z$ only finitely many times
in any time interval and $f$ is local, we get $f(\sigma_{\epsilon})-f(\sigma_{0})=\sum_{x<y}D(x,y,\epsilon)$.
  Let $\xi(\sigma)$
be the event that $\sigma(x)=\sigma(x+1)=\dotsb=\sigma(y-1)$. We
define two ``bad'' events, $B^{1}$ and $B^{2}$ as follows.
\begin{enumerate}
\item For $z\in[x,y]$ let $B^{1}(x,z,y,t)$ be the event and that for some
$s<t$ we have $\sigma_{s}(z)\ne\sigma_{0}(z)$; and that for some
$u\in(s,t)$ there was a Poisson arrival at $x$ and $\xi(\sigma_{u^{-}})$
occurred.
\item Let $B^{2}(x,y,t)$ be the event that $\xi(\sigma_{0})$ occurred,
that for some $s<t$ we have $\sigma_{s}(x)\ne\sigma_{0}(x)$, and
that there is a Poisson arrival at $x$ in the time interval $(s,t)$.
\end{enumerate}
Let
\[
\begin{aligned}
B^{1}(t)&=\bigcup_{\substack{x\le z\le y:\\ y \in \supp(f)}}B^{1}(x,z,y,t)\\
B^{2}(t)&=\bigcup_{\substack{x<y:\\ y \in \supp(f)}}B^{2}(x,y,t)
\end{aligned}
\qquad \qquad B(t)=B^{1}(t)\cup B^{2}(t).
\]

\begin{claim}
\label{claim:twoPoissonQ}Almost surely,
\begin{multline*}
\sup_{t\in\left[0,\epsilon\right]}|\mathbb{E}(f(\sigma_{t})-f(\sigma_{0})\,|\,\sigma_{0})-t(\mathcal{L}f)(\sigma_{0})|\\
\le2||f||_{\infty}(\mathbb{P}(B(\epsilon)\,|\,\sigma_{0})+\epsilon^{2})\Big(1+\sum_{y\in \supp(f)}l_{y}(\sigma_{0})\Big).
\end{multline*}
\end{claim}
\begin{proof}
Fix one $y$ in the support of $f$ and one $x<y$. Define the variable
$G(x,y,t)=(\mathcal{L}_{x,y}f)(\sigma_{0})\cdot\mathbbm{1}\{k\ge1\}$
where $k$ is the number of Poisson arrivals at $x$ in the time interval
$[0,t]$ ($G$ depends on $t$ only via $k$). We claim that if $B(\epsilon)$ doesn't happen then $D(x,y,t)=G(x,y,t)$
for all $t\in[0,\epsilon]$. 

Indeed, if $k=0$ then they are both
$0$, and if $k\ge1$ then $G=(\mathcal{L}_{x,y}f)(\sigma_{0})$ while
$D=\sum_{i}(\mathcal{L}_{x,y}f)(\sigma_{t_{i}^{-}})$, where, as usual,
$t_{i}$ are the Poisson arrivals at $x$ in the time interval $[0,t]$.
If $B^{1}$ doesn't occur, only the first
term in this sum may be non-zero (this first arrival will change the
value of $\sigma(x)$, so any further arrival at $x$, if it contributes
to $D$ it must also trigger $B^{1}(x,x,y,\epsilon)$). Further, the
fact that $B^{1}$ doesn't occur implies that 
\[
(\mathcal{L}_{x,y}f)(\sigma_{t_{1}^{-}})\ne0\implies\left.\sigma_{t_{1}^{-}}\right|_{[x,y]}=\left.\sigma_{0}\right|_{[x,y]}
\]
while the fact that $B^{2}$ doesn't occur implies that 
\[
(\mathcal{L}_{x,y}f)(\sigma_{0})\ne0\implies\left.\sigma_{t_{1}^{-}}\right|_{[x,y]}=\left.\sigma_{0}\right|_{[x,y]}.
\]
Thus if neither occurred, $(\mathcal{L}_{x,y}f)(\sigma_{t_{1}^{-}})=(\mathcal{L}_{x,y}f)(\sigma_{0})$
and $D=G$.

Summing over $x$ and $y$ gives, assuming $B=B(\epsilon)$ doesn't
occur, that 
\[
f(\sigma_{t})-f(\sigma_{0})=G(t):=\sum_{x,y}G(x,y,t)
\]
(note that the sum defining $G$ is in fact finite). Taking conditional
expectation gives
\begin{equation}
\mathbb{E}((f(\sigma_{t})-f(\sigma_{0}))\cdot\mathbbm{1}\{B^{c}\}\,|\,\sigma_{0})=\mathbb{E}(G\cdot\mathbbm{1}\{B^{c}\}\,|\,\sigma_{0})\label{eq:noB}
\end{equation}
One remainder can be estimated simply by
\begin{equation}
\mathbb{E}(|f(\sigma_{t})-f(\sigma_{0})|\cdot\mathbbm{1}\{B\}\,|\,\sigma_{0})\le2||f||_{\infty}\mathbb{P}(B\,|\,\sigma_{0})\label{eq:Bonf}
\end{equation}
while for the other we ignore the condition $\{k\ge1\}$ in the definition
of $G$ and get
\begin{align}
\mathbb{E}(|G|\cdot\mathbbm{1}\{B\}\,|\,\sigma_{0})
&\le\mathbb{E}\Big(\sum_{x,y}|\mathcal{L}_{x,y}f(\sigma_{0})|\cdot\mathbbm{1}\{B\}\,|\,\sigma_{0}\Big)\nonumber\\
&=\sum_{x,y}|\mathcal{L}_{x,y}f(\sigma_{0})|\cdot\mathbb{P}(B\,|\,\sigma_{0})
\le2||f||_{\infty}\;\sum_{\mathclap{y\in\supp (f)}}\;l_{y}(\sigma_{0})\mathbb{P}(B\,|\,\sigma_{0})\label{eq:BonG}
\end{align}
Summing (\ref{eq:noB}), (\ref{eq:Bonf}) and (\ref{eq:BonG}) gives
\[
|\mathbb{E}(f(\sigma_{t})-f(\sigma_{0})\,|\,\sigma_{0})-\mathbb{E}(G\,|\,\sigma_{0})|\le2||f||_{\infty}\mathbb{P}(B\,|\,\sigma_{0})\Big(1+\sum_{y}l_{y}(\sigma_{0})\Big)
\]
Finally a Poisson process calculation gives $\mathbb{E}(G(x,y,t)\,|\,\sigma_{0})=(1-e^{-t})\mathcal{L}_{x,y}f(\sigma_{0})$
so $|\mathbb{E}(G\,|\,\sigma_{0})-t(\mathcal{L}f)(\sigma_{0})|\le2||f||_{\infty}\epsilon^{2}\sum_{y}l_{y}(\sigma_{0})$.
Putting everything together the claim is proved.
\end{proof}
Thus the lemma will be proved 
once we estimate $\mathbb P(B)$. We start with
$B^{1}$.
\begin{claim}
\label{claim:B1}With probability $1$,
\[
\lim_{n\to\infty}2^{n}\mathbb{P}(B^{1}(2^{-n})\,|\,\sigma_{0})=0.
\]
\end{claim}
\begin{proof}
Fix $x\le z\le y$. We note two estimates for $B^{1}(x,z,y,\epsilon)$.
First, ignoring the requirement at $z$ gives
\[
\mathbb{P}(B^{1}(x,z,y,\epsilon))\le\epsilon\sup_{t\in[0,\epsilon]}\mathbb{P}(l_{y}>y-x)\le C(y)\cdot\epsilon(y-x)^{-1-\eta}
\]
where the second inequality is due to our moment assumption on $\sigma$
and Markov's inequality. Second, ignoring the requirement that the
arrival at $x$ actually changes $y$, and using only the fact that
an arrival happened we get from \Cref{lem:Ql}
\begin{equation}
\mathbb{P}(B^{1}(x,z,y,\epsilon))\le C(z)\cdot\epsilon^{2}.\label{eq:B1Cz}
\end{equation}
Summing over all $x$ gives 
\[
\mathbb{P}\Big(\bigcup_{x}B^{1}(x,z,y,\epsilon)\Big)\le\epsilon\sum_{x}\min(C(y)(y-x)^{-1-\eta},C(z)\epsilon)\le C(y,z)\epsilon^{1+\eta/(1+\eta)}
\]
where $C(y,z)$ is some constant which depends on $y$ and $z$ but
not on $\epsilon$. Let us denote $B^{1}(z,y,\epsilon)=\bigcup_{x}B^{1}(x,z,y,\epsilon)$.
Markov's inequality now gives for the conditioned events that
\[
\mathbb{P}(\mathbb{P}(B^{1}(z,y,\epsilon)\,|\,\sigma_{0})>\epsilon^{1+\eta/2(1+\eta)})<C(y,z)\epsilon^{\eta/2(1+\eta)}.
\]
This implies, by the Borel-Cantelli Lemma, that with probability $1$, 
\begin{equation}
\lim_{n\to\infty}2^{n}\mathbb{P}(B^{1}(z,y,2^{-n})\,|\,\sigma_{0})=0\label{eq:B1zy}
\end{equation}
for all $z$ and $y$.

There is another important consequence of the conditioning over $\sigma_{0}$.
Fix some value of $\sigma_{0}$ and let $z^{*}=z^{*}(y)=y-l_{y}(\sigma_{0})-1$.
For any $x\le z^{*}$ we have that 
\[
\bigcup_{z=x}^{y}B^{1}(x,z,y,\epsilon)\subset B^{1}(x,z^{*},y,\epsilon)
\]
because the requirement that at the time $u$ ($u$ from the definition
of $B^{1}$) we have the event that $\xi(\sigma_{u^{-}})$ occurs is 
fulfilled only if $\sigma_{u^{-}}(z^{*})\ne\sigma_{0}(z^{*})$. Thus
\[
\varlimsup_{n\to\infty}2^{n}\mathbb{P}\Big(\bigcup_{x\le z^{*}}\bigcup_{z}B^{1}(x,z,y,2^{-n})\,\Big|\,\sigma_{0}\Big)
\le\varlimsup_{n\to\infty}2^{n}\mathbb{P}(B^{1}(z^{*},y,2^{-n})\,|\,\sigma_{0})=0.
\]
The remaining $x$ (i.e.\ $x>z^{*}$) can be estimated directly by
summing (\ref{eq:B1zy}) over $z$ from $z^{*}+1$ to $y$. We get
\[
2^{n}\mathbb{P}\Big(B^{1}(2^{-n})\,\Big|\,\sigma_{0}\Big)
\le2^{n}\sum_{y\in\supp f}\sum_{z=z^{*}(y)}^{y}\mathbb{P}(B^{1}(z,y,2^{-n})\,|\,\sigma_{0})\to0
\]
proving the claim.
\end{proof}
The estimate of $B^{2}(\epsilon)$ is much simpler and we will not
dignify it with a claim. By \Cref{lem:Ql}, $\Pr(B^2(x,y,\eps))<C(x)\eps^2$ (\Cref{lem:Ql} gives that the probability of a change in $\si(x)$ is smaller than $C(x)\eps$, and after that one still needs another Poisson ring at $x$). Markov's inequality then gives
\[
\Pr(\si_0: \Pr(B^2(x,y,\eps)>\eps^{3/2}\,|\,\si_0))<C(x)\eps^{1/2}
\]
so by Borel-Cantelli,
\[
\lim_{n\to\infty}2^{n}\mathbb{P}(B^2(x,y,2^{-n})\,|\,\sigma_{0})=0
\]
almost surely (in $\si_0$). Now, $B^2(2^{-n})$ is just a finite sum of the $B^2(x,y,2^{-n})$: $y$ ranges over the support of $f$ and $x$ runs from $y-l_y(\si_0)$ to $y$. Hence also
\begin{equation}
\lim_{n\to\infty}2^{n}\mathbb{P}(B^2(2^{-n})\,|\,\sigma_{0})=0\label{eq:B2}
\end{equation}
almost surely.

The lemma is now proved. Applying claim \ref{claim:twoPoissonQ} for
$t\in[2^{-n-1},2^{-n}]$ we get 
\begin{multline*}
\Big|\frac{1}{t}\mathbb{E}(f(\sigma_{t})-f(\sigma_{0})\,|\,\sigma_{0})-\mathcal{L}f(\sigma_{0})\Big|\\
\le2||f||_{\infty}(2^{n+1}\mathbb{P}(B(2^{-n})\,|\,\sigma_{0})+2^{-n+1})\Big(1+\sum_{y}l_{y}(\sigma_{0})\Big).
\end{multline*}
By claim \ref{claim:B1} and (\ref{eq:B2}), $2^{n}\mathbb{P}(B(2^{-n})\,|\,\sigma_{0})\to0$
almost surely, and the other terms on the right hand side are constant
for any fixed $\sigma_{0}$. \end{proof}

\bibliography{ToomBib}{}

\begin{thebibliography}{10}

\bibitem{alexander}
R.~Alexander.
\newblock Time evolution for infinitely many hard spheres.
\newblock {\em Commun. Math. Phys.}, 49(3):217--232, 1976.
\newblock
  \href{http://link.springer.com/article/10.1007/BF01608728}{\nolinkurl{springer.com/BF01608728}}.

\bibitem{ToomTsetlin}
Arvind Ayyer, Anne Schilling, Benjamin Steinberg, and Nicolas~M. Thi{\'e}ry.
\newblock Markov chains, {$\mathcal{R}$}-trivial monoids and representation
  theory.
\newblock {\em Internat. J. Algebra Comput.}, 25(1-2):169--231, 2015.
\newblock
  \href{http://www.worldscientific.com/doi/10.1142/S0218196715400081}{\nolinkurl{worldscientific.com/S0218196715400081}}.

\bibitem{BFLS}
G.T. Barkema, P.L. Ferrari, J.L. Lebowitz, and H.~Spohn.
\newblock Kardar-parisi-zhang universality class and the anchored toom
  interface.
\newblock {\em {Phys. Rev. E}}, {90}({4}):{042116}, {Oct} 2014.
\newblock
  \href{http://link.aps.org/doi/10.1103/PhysRevE.90.042116}{\nolinkurl{aps.org/pre/abstract/10.1103/PhysRevE.90.042116}}.

\bibitem{CKR2}
Nick Crawford and Woijcech De~Roeck.
\newblock Invariance principle for `push' tagged particles for a toom
  interface.
\newblock
  \href{https://arxiv.org/abs/1610.07765}{\nolinkurl{arXiv:1610.07765}}, 2016.

\bibitem{DLSS}
B.~Derrida, J.~L. Lebowitz, E.~R Speer, and H.~Spohn.
\newblock Dynamics of an anchored {T}oom interface.
\newblock {\em Journal of Physics A: Mathematical and General}, 24(20):4805,
  1991.
\newblock
  \href{http://iopscience.iop.org/0305-4470/24/20/015}{\nolinkurl{iop.org/0305-4470}}.

\bibitem{DS}
P.~Devillard and H.~Spohn.
\newblock Universality class of interface growth with reflection symmetry.
\newblock {\em Journal of Statistical Physics}, 66(3-4):1089--1099, 1992.
\newblock
  \href{http://dx.doi.org/10.1007/BF01055718}{\nolinkurl{springer.com}}.

\bibitem{krugspohn}
J~Krug and H~Spohn.
\newblock Kinetic roughening of growing surfaces.
\newblock In C.~Godr{\`e}che, editor, {\em Solids far from equilibrium, Vol.
  1}. Cambridge University Press, Cambridge, 1991.

\bibitem{Liggettnonfeller}
T.~M. Liggett.
\newblock Long range exclusion processes.
\newblock {\em Ann. Probab.}, 8(5):861--889, 1980.
\newblock
  \href{http://www.jstor.org/stable/2242933}{\nolinkurl{jstor.org/2242933}}.

\bibitem{LiggettBook}
T.~M. Liggett.
\newblock {\em Interacting Particle Systems}, volume 276.
\newblock Springer, 1985.

\bibitem{LW}
L{\'a}szl{\'o} Lov{\'a}sz and Peter Winkler.
\newblock Mixing times.
\newblock In {\em Microsurveys in discrete probability ({P}rinceton, {NJ},
  1997)}, volume~41 of {\em DIMACS Ser. Discrete Math. Theoret. Comput. Sci.},
  pages 85--133. Amer. Math. Soc., Providence, RI, 1998.
\newblock
  \href{http://citeseerx.ist.psu.edu/viewdoc/summary?doi=10.1.1.53.1717}{\nolinkurl{citeseerx}}.

\bibitem{maesredig}
C.~Maes, F.~Redig, E.~Saada, and A.~Van~Moffaert.
\newblock On the thermodynamic limit for a one-dimensional sandpile process.
\newblock {\em Markov processes and related fields}, 6:1681--1698, 1998.
\newblock \href{http://arxiv.org/abs/math/9810093}{\nolinkurl{arXiv:9810093}}.

\bibitem{PBMM92}
M.~Paczuski, M.~Barma, S.~N. Majumdar, and T.~Hwa.
\newblock Fluctuations of a nonequililbrium interface.
\newblock {\em Phys. Rev. Lett.}, 69:2735--2735, Nov 1992.
\newblock
  \href{http://link.aps.org/doi/10.1103/PhysRevLett.69.2735}{\nolinkurl{aps.org/10.1103}}.

\bibitem{Toomy}
A.~L. Toom.
\newblock Stable and attractive trajectories in multicomponent systems.
\newblock In {\em Multicomponent random systems}, volume~6 of {\em Adv. Probab.
  Related Topics}, pages 549--575. Dekker, New York, 1980.

\end{thebibliography}
\bibliographystyle{plain}

\flushleft
\begin{tabular}{lcr}
\begin{tabular}{l}
Nicholas Crawford \\
Dept. of Mathematics\\
The Technion \\
{\tt nickc@tx.technion.ac.il}
\end{tabular}
&
\begin{tabular}{l}
Gady Kozma\\
Dept. of Math and CS\\
The Weizmann Institute\\
{\tt gady.kozma@weizmann.ac.il}
\end{tabular}
&
\begin{tabular}{l}
Wojciech De Roeck \\
Dept. of Physics\\ 
KU Leuven \\
{\tt wmderoeck@gmail.com}
\end{tabular}
\end{tabular}

\end{document}

& Dept. of Math and CS, Wiezmann Institute of Science
& Gady Kozma & Wojciech de Roeck